\documentclass{article}

\usepackage{ijcai11}

\usepackage{times}
\usepackage{amsmath,amsfonts,amssymb}
\usepackage{hhline}
\usepackage[dvipsnames,usenames]{xcolor}
\usepackage{pifont} 
\usepackage{subfig}
\usepackage{tikz}
\usetikzlibrary{patterns}
\usetikzlibrary{calc}
\usetikzlibrary{decorations.pathmorphing}
\graphicspath{{5-1-Graphics/}{/}}
\usepgfmodule{plot}
\usepackage{url}

\newtheorem{theorem}{Theorem}
\newtheorem{lemma}[theorem]{Lemma}
\newtheorem{corollary}[theorem]{Corollary}


\newcommand{\ic}{c^\circ}
\newcommand{\NP}{\textsc{NP}}
\newcommand{\PSpace}{\textsc{PSpace}}
\newcommand{\ExpTime}{\textsc{ExpTime}}
\newcommand{\Sat}{\textit{Sat}}

\newcommand{\cB}{\mathcal{B}}%
\newcommand{\cBC}{\mathcal{C}}%
\newcommand{\cBc}{\ensuremath{\mathcal{B}c}}
\newcommand{\cBci}{\ensuremath{\mathcal{B}c^\circ}}
\newcommand{\RCCE}{\ensuremath{\mathcal{RCC}8}}%
\newcommand{\RCCF}{\ensuremath{\mathcal{RCC}5}}%
\newcommand{\BRCCE}{\ensuremath{\mathcal{BRCC}8}}%
\newcommand{\cBCc}{\ensuremath{\mathcal{C}c}}%
\newcommand{\cBCci}{\ensuremath{\mathcal{C}c^\circ}}%

\newcommand{\R}{\mathbb{R}}
\newcommand{\RC}{{\sf RC}}
\newcommand{\RCP}{{\sf RCP}}

\newcommand{\fw}{\mathsf{w}}
\newcommand{\fW}{\mathbf{w}}

\newcommand{\cL}{\mathcal{L}}%

\newcommand{\ti}[2][]{{#2}^{\circ_{#1}}}
\newcommand{\tc}[2][]{{#2}^{-_{#1}}}

\newcommand{\set}[1]{\{#1\}}
\newcommand{\tseq}[1]{\mathfrak{#1}}
\newcommand{\intermediate}[1]{\dot{#1}}
\newcommand{\inner}[1]{\ddot{#1}}
\newcommand{\stack}{\mathsf{stack}}
\newcommand{\stacki}{\ti{\mathsf{stack}}}
\newcommand{\frameFla}{\mathsf{frame}}
\newcommand{\frameFlai}{\ti{\mathsf{frame}}}
\newcommand{\md}[2][] {{\lfloor#2\rfloor_{#1}}}

\newcommand{\qedsymbol}{\ding{113}}
\newenvironment{proof}{\par\noindent\textbf{Proof.}}{\mbox{}\hfill\qedsymbol\par\bigskip}
\newenvironment{swetheorem}[1]{\par\medskip\noindent\textbf{#1.}\hspace*{0.5em}\em}{\par\smallskip}

\renewcommand{\phi}{\varphi}


\title{On the Decidability of Connectedness Constraints in 2D and 3D Euclidean Spaces}
\author{Roman Kontchakov$^1$\!, Yavor Nenov$^2$\!, Ian Pratt-Hartmann$^2$ and Michael Zakharyaschev$^1$\\
{\parbox[t]{60mm}{\centering $^1$Department of Computer Science and Information Systems\\ Birkbeck College London, U.K.}}\hspace*{3em}
\parbox[t]{60mm}{\centering $^2$School of Computer Science\\ University of Manchester, U.K.}}

\begin{document}

\maketitle

\begin{abstract}
We investigate (quantifier-free) spatial constraint languages with
equality, contact and connectedness predicates, as well as Boolean
operations on regions, interpreted over low-dimensional Euclidean
spaces. We show that the complexity of reasoning varies dramatically
depending on the dimension of the space and on the type of regions
considered. For example, the logic with the interior-connectedness
predicate (and without contact) is undecidable over polygons or
regular closed sets in $\R^2$, \ExpTime-complete over polyhedra in
$\R^3$, and \NP-complete over regular closed sets in $\R^3$.
\end{abstract}


\section{Introduction}\label{sec:intro}

A central task in Qualitative Spatial Reasoning is that of determining
whether some described spatial configuration is geometrically
realizable in 2D or 3D Euclidean space. Typically, such a description
is given using a spatial logic---a formal language whose variables
range over (typed) geometrical entities, and whose non-logical
primitives represent geometrical relations and operations involving
those entities. Where the geometrical primitives of the language are
purely topological in character, we speak of a \emph{topological
  logic}; and where the logical syntax is confined to that of
propositional calculus, we speak of a \emph{topological constraint
  language}.

Topological constraint languages have been intensively studied in
Artificial Intelligence over the last two decades.  The best-known of
these, \RCCE{} and \RCCF, employ variables ranging over regular closed
sets in topological spaces, and a collection of eight (respectively,
five) binary predicates standing for some basic topological relations
between these
sets~\cite{ijcai:Egenhofer&Franzosa91,ijcai:Randelletal92,ijcai:Bennett94,ijcai:Renz&Nebel98}. An
important extension of \RCCE, known as \BRCCE{}, additionally features
standard Boolean operations on regular closed
sets~\cite{ijcai:Wolter&Z00ecai}.

A remarkable characteristic of these languages is their
\emph{in}sensitivity to the underlying interpretation.  To show that an
\RCCE-formula is satisfiable in $n$-dimensional Euclidean space, it
suffices to demonstrate its satisfiability in {\em any} topological
space \cite{ijcai:Renz98}; for \BRCCE-formulas, satisfiability in
\emph{any connected} space is enough. This inexpressiveness yields
(relatively) low computational complexity: satisfiability
of~\BRCCE-, \RCCE- and \RCCF-formulas over arbitrary topological
spaces is \NP-complete; satisfiability of~\BRCCE{}-formulas over
connected spaces is \PSpace-complete.

However, satisfiability of spatial constraints by {\em arbitrary}
regular closed sets by no means guarantees realizability by
practically meaningful geometrical objects, where {\em connectedness}
of regions is typically a minimal
requirement~\cite{Borgo96,ijcai:Cohn&Renz08}.  (A connected region is
one which consists of a `single piece.')  It is easy to write
constraints in $\RCCE$ that are satisfiable by connected regular
closed sets over arbitrary topological spaces but not over $\R^2$; in
$\BRCCE$ we can even write formulas satisfiable by connected regular
closed sets over arbitrary spaces but not over $\R^n$ for any $n$.
Worse still: there exist very simple collections of spatial
constraints (involving connectedness) that are satisfiable in the
Euclidean plane, but only by `pathological' sets that cannot plausibly
represent the regions occupied by physical objects~\cite{ijcai:HSL2}.
Unfortunately, little is known about the complexity of topological
constraint satisfaction by non-pathological objects in low-dimensional
Euclidean spaces. One landmark result~\cite{ijcai:iscloes:sss03} in
this area shows that satisfiability of \RCCE-formulas by
\emph{disc-homeomorphs} in $\R^2$ is still \NP-complete, though the
decision procedure is vastly more intricate than in the general
case. In this paper, we investigate the computational properties of
more general and flexible spatial logics with connectedness
constraints interpreted over $\R^2$ and $\R^3$.

We consider two `base' topological
constraint languages.  The language $\cB$ features $=$ as its
only predicate, but has function symbols $+$, $-$, $\cdot$ denoting
the standard operations of fusion, complement and taking common parts
defined for regular closed sets, as well as the constants $1$ and $0$
for the entire space and the empty set. Our second base language,
$\cBC$, additionally features a binary predicate, $C$, denoting the
`contact' relation (two sets are in {\em contact} if they share at
least one point).  The language $\cBC$ is a notational variant
of~\BRCCE{} (and thus an extension of \RCCE), while $\cB$ is the
analogous extension of \RCCF{}. We add to $\cB$ and $\cBC$ one of
two new unary predicates: $c$, representing the property of
connectedness, and $\ic$, representing the (stronger) property of
having a connected \emph{interior}. We denote the resulting languages
by $\cBc$, $\cBci$\!, $\cBCc$ and $\cBCci$\!. We are interested 
in interpretations over ({\em i}) the regular closed sets of $\R^2$
and $\R^3$, and ({\em ii}) the regular closed \emph{polyhedral} sets
of $\R^2$ and $\R^3$.  (A set is polyhedral if it can be defined by
finitely many bounding hyperplanes.) By restricting interpretations to
polyhedra we rule out satisfaction by pathological sets and use the
same `data structure' as in GISs.

When interpreted over {\em arbitrary} topological spaces, the
complexity of reasoning with these languages is known: satisfiability
of $\cBci$-formulas is \NP-complete, while for the other three
languages, it is \ExpTime-complete.  Likewise, the 1D Euclidean case
is completely solved.  For the spaces $\R^n$ ($n \geq 2$), however,
most problems are still open.  All four languages contain
formulas satisfiable by regular closed sets in $\R^2$, but not by
regular closed polygons; in $\R^3$, the analogous result is known
only for $\cBci$ and $\cBCci$. The satisfiability problem for \cBc{},
\cBCc{} and \cBCci{} is \ExpTime-hard (in both polyhedral and
unrestricted cases) for $\R^n$ ($n \geq 2$); however, the only known
upper bound is that satisfiability of $\cBci$-formulas by
polyhedra in $\R^n$ ($n \geq 3$) is \ExpTime-complete.
(See~\cite{ijcai:kphz10} for a summary.)

This paper settles most of these open problems, revealing considerable
differences between the computational properties of constraint
languages with connectedness predicates when interpreted over $\R^2$
and over abstract topological spaces.  Sec.~\ref{sec:sensitivity}
shows that $\cBc$, $\cBci$, $\cBCc$ and $\cBCci$ are all sensitive to
restriction to polyhedra in $\R^n$ ($n \geq
2$). Sec.~\ref{sec:undecidability} establishes an unexpected result:
all these languages are \emph{undecidable} in 2D, both in the
polyhedral and unrestricted cases (\cite{Dornheim} proves
undecidability of the \emph{first-order} versions of these
languages). Sec.~\ref{sec:3d} resolves the open issue of the
complexity of $\cBci$ over regular closed sets (not just polyhedra) in
$\R^3$ by establishing an NP upper bound.  Thus, Qualitative Spatial
Reasoning in Euclidean spaces proves much more challenging if
connectedness of regions is to be taken into account.  We discuss the
obtained results in the context of spatial reasoning in
Sec.~\ref{conclusion}.  Omitted proofs can be found in the appendix. 


\section{Constraint Languages with Connectedness}\label{sec:preliminaries}

Let $T$ be a topological space. We denote the closure of any $X
\subseteq T$ by $\tc{X}$, its interior by $\ti{X}$ and its boundary by
$\delta X = \tc{X} \setminus \ti{X}$. We call $X$ {\em regular closed}
if $X = \tc{\ti{X}}$, and denote by $\RC(T)$ the set of regular closed
subsets of $T$. Where $T$ is clear from context, we refer to elements
of $\RC(T)$ as {\em regions}. $\RC(T)$ forms a Boolean algebra under
the operations $X + Y = X \cup Y$, $X \cdot Y =
\smash{\tc{\ti{\smash{(X \cap Y)}}}}$ and $-X = \tc{\smash{(T
    \setminus X)}}$. We write $X \leq Y$ for $X \cdot (-Y) = \emptyset$; thus
$X \leq Y$ iff $X \subseteq Y$.  A subset $X \subseteq T$ is
\emph{connected} if it cannot be decomposed into two disjoint,
non-empty sets closed in the subspace topology; $X$ is
\emph{interior-connected} if $\ti{X}$ is connected.

Any $(n-1)$-dimensional hyperplane in $\R^n$, $n \geq 1$, bounds two
elements of $\RC(\R^n)$ called \emph{half-spaces}. We denote by
$\RCP(\R^n)$ the Boolean subalgebra of $\RC(\R^n)$ generated by the
half-spaces, and call the elements of $\RCP(\R^n)$ (regular closed)
\emph{polyhedra}. If $n = 2$, we speak of (regular closed)
\emph{polygons}. Polyhedra may be regarded as `well-behaved' or, in
topologists' parlance, `\emph{tame}.'  In particular, every polyhedron
has finitely many connected components, a property which is not true
of regular closed sets in general.

The topological constraint languages considered here all employ a
countably infinite collection of variables $r_1, r_2, \ldots$ The
language $\cBC$ features binary predicates $=$ and $C$, together with
the individual constants $0$, $1$ and the function symbols $+$,
$\cdot$, $-$. The \emph{terms} $\tau$ and \emph{formulas} $\phi$ of
$\cBC$ are given by:
\begin{align*}
\tau \quad & ::=
\quad r \ \ \mid
\ \ \tau_1 + \tau_2 \ \ \mid
\ \ \tau_1 \cdot \tau_2 \ \ \mid
\ \ - \tau_1 \ \ \mid
\ \ 1 \ \ \mid
\ \ 0,\\
\phi \quad & ::= \quad \tau_1 =  \tau_2 \ \
\mid \ \ C(\tau_1,\tau_2) \ \
\mid \ \ \phi_1 \land \phi_2 \ \
\mid \ \ \neg \phi_1.
\end{align*}
The language $\cB$ is defined analogously, but without the predicate
$C$. If $S \subseteq \RC(T)$ for some topological space $T$, an
\emph{interpretation over} $S$ is a function $\cdot^\mathfrak{I}$
mapping variables $r$ to elements $r^\mathfrak{I} \in S$. We extend
$\cdot^\mathfrak{I}$ to terms $\tau$ by setting $0^\mathfrak{I} =
\emptyset$, $1^\mathfrak{I} = T$, $(\tau_1 + \tau_2)^\mathfrak{I} =
\tau_1^\mathfrak{I} + \tau_2^\mathfrak{I}$, etc. We write
$\mathfrak{I} \models \tau_1 = \tau_2$ iff $\tau_1^\mathfrak{I} =
\tau_2^\mathfrak{I}$, and $\mathfrak{I} \models C(\tau_1,\tau_2)$ iff
$\tau_1^\mathfrak{I} \cap \tau_2^\mathfrak{I} \neq \emptyset$.  We
read $C(\tau_1, \tau_2)$ as `$\tau_1$ \emph{contacts} $\tau_2$.'  The
relation $\models$ is extended to non-atomic formulas in the obvious
way. A formula $\phi$ is \emph{satisfiable over} $S$ if 
$\mathfrak{I} \models \phi$ for some interpretation $\mathfrak{I}$ over $S$. 

Turning to languages with connectedness predicates, we define $\cBc$
and $\cBCc$ to be extensions of $\cB$ and $\cBC$ with the unary
predicate $c$. We set $\mathfrak I \models c(\tau)$ iff
$\smash{\tau^{\mathfrak I}}$ is connected in the topological space under
consideration. Similarly, we define $\cBci$ and $\cBCci$ to be
extensions of $\cB$ and $\cBC$ with the predicate $\ic$, setting
$\mathfrak I \models \ic(\tau)$ iff $\ti{\smash{(\tau^{\mathfrak
      I})}}$ is connected. $\Sat(\mathcal{L},S)$ is the set of $\mathcal{L}$-formulas
satisfiable over $S$, where $\cL$
is one of $\cBc$, $\cBCc$, $\cBci$ or $\cBCci$ (the
topological space is implicit in this notation, but will always be
clear from context). We shall be concerned with $\Sat(\cL, S)$, where
$S$ is $\RC(\R^n)$ or $\RCP(\R^n)$ for $n=2,3$.

To illustrate, consider the $\cBci$-formulas $\phi_k$
given by
\begin{equation}
\bigwedge_{1 \leq i \leq k}\hspace*{-0.5em}  \bigl( \ic(r_i) \land (r_i \neq 0) \bigr)
   \land \bigwedge_{i < j} \bigl( \ic(r_i + r_j) \land (r_i \cdot r_j =0) \bigr).
\label{eq:ex1}
\end{equation}
One can show that $\phi_3$ is satisfiable over $\RCP(\R^n)$, $n \geq
2$, but not over $\RCP(\R)$, as no three intervals with non-empty,
disjoint interiors can be in pairwise contact. Also, $\phi_5$ is
satisfiable over $\RCP(\R^n)$, for $n\geq 3$, but not over
$\RCP(\R^2)$, as the graph $K_5$ is non-planar.  Thus, $\cBci$ is
sensitive to the dimension of the space.
Or again, consider the $\cBci$-formula 
\begin{equation}\label{eq:wiggly}
\bigwedge_{1 \leq i \leq 3}\hspace*{-0.5em} \ic(r_i) \ \ \land\ \ \ic(r_1 + r_2 + r_3) \ \ \land\ \ \bigwedge_{2 \leq i \leq 3} \neg\ic(r_1 + r_i).
\end{equation}
One can show that~\eqref{eq:wiggly} is satisfiable over $\RC(\R^n)$,
for any $n\ge 2$ (see, e.g., Fig.~\ref{fig:wiggly}), but not over
$\RCP(\R^n)$. Thus $\cBci$ is sensitive to tameness in Euclidean
spaces.
\begin{figure}[ht]
\begin{center}
\resizebox{3.1cm}{!}{\input{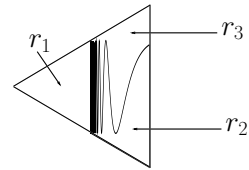}}
\end{center}\vspace*{-3mm}
\caption{Three regions in $\RC(\R^2)$ satisfying \eqref{eq:wiggly}.}
\label{fig:wiggly}
\end{figure}
It is known~\cite{ijcai:kphz10} that, for the Euclidean {\em plane},
the same is true of $\cBc$ and $\cBCc$: there is a $\cBc$-formula
satisfiable over $\RC(\R^2)$, but not over $\RCP(\R^2)$. (The 
example required to show this is far more complicated than the
\cBci-formula~\eqref{eq:wiggly}.)  In the next section, we prove that
any of $\cBc$, $\cBCc$ and $\cBCci$ contains formulas satisfiable
over $\RC(\R^n)$, for every $n \geq 2$, but only by regions with
infinitely many components. Thus, all four of our languages are
sensitive to tameness in all dimensions greater than one.


\section{Regions with Infinitely Many Components}\label{sec:sensitivity}

Fix $n \ge 2$ and let $d_0,d_1,d_2,d_3$ be regions partitioning $\R^n$:
\begin{align}
\label{eq:InfPart1}
\textstyle \big( \sum_{0 \leq i \le 3} d_i =1 \big) \quad\land\quad \bigwedge_{0 \leq i<j\leq3}(d_i\cdot d_j=0).	
%
\end{align}
We construct formulas forcing the $d_i$ to have infinitely many
connected components. To this end we require non-empty regions $a_i$
contained in $d_i$, and a non-empty region $t$:
\begin{align}\label{eq:basic-regions}
\textstyle\bigwedge_{0 \leq i \leq 3} \bigl((a_i \ne 0) \land (a_i \leq d_i)\bigr) \quad\land\quad (t\ne 0).
\end{align}
The configuration of regions we have in mind is depicted in
Fig.~\ref{fig:InfCmpSat}, where components of the $d_i$ are arranged
like the layers of an onion. The `innermost' component of $d_0$ is
surrounded by a component of $d_1$, which in turn is surrounded by a
component of $d_2$, and so on. The region $t$ passes through every
layer, but avoids the $a_i$. To enforce a configuration of this sort,
we need the following three formulas, for $0 \leq i \leq 3$:
\begin{align}
\label{eq:InfContact}	&c(a_i+d_{\md{i+1}}+t),	 \\
\label{notC}	&\neg C(a_i,d_{\md{i+1}}\cdot (-a_{\md{i+1}})) \ \ \land \ \ \neg C(a_i, t), \\
\label{eq:InfNTriv1}    &\neg C(d_i, d_{\md{i+2}}),
\end{align}
where $\md{k}= k\, {\rm mod}\, 4$. Formulas~\eqref{eq:InfContact} and
\eqref{notC} ensure that each component of $a_i$ is in contact with
$a_{\md{i+1}}$, while \eqref{eq:InfNTriv1} ensures that no component of
$d_i$ can touch any component of $d_{\md{i+2}}$.
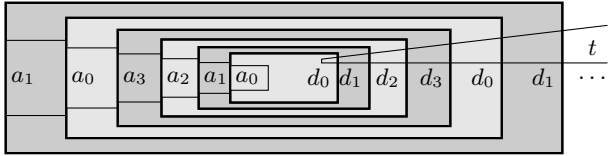
\begin{figure}[h]
\begin{center}
\begin{tikzpicture}
\small
	\coordinate (Px) at (-3.7,0);
	\coordinate (Qx) at (1.2,0);
	\coordinate (Py) at (0,1);
	\coordinate (Pby) at (0,.5);
	\foreach \i/\a/\b in {1/40/20,0/25/10,3/40/20,2/25/10,1/40/20,0/25/10}
	{
		\draw[thick,fill=black!\b] ($(Px)-(Py)$) rectangle ($(Py)-(Px)$);
		\draw[thin,fill=black!\b] ($(Px)-(Pby)$) rectangle ($(Qx)+(Pby)$);
		\draw[thick] ($(Px)-(Py)$) rectangle ($(Py)-(Px)$);
		\node at ($(Px)+(0.23,0)$) {$a_\i$};
		\node at ($(-0.25,0)-(Px)$) {$d_\i$};
		\coordinate (Px) at ($(Px)!0.15!(0,0)+(0.25,0)$);
		\coordinate (Py) at ($(Py)!0.2!(0,0)$);
		\coordinate (Pby) at ($(Pby)!0.2!(0,0)$);
		\coordinate (Qx) at ($(Qx)!0.15!(0,0)-(.2,0)$);
	}
	\draw (4.3,0.7)--(.5,0.25)--(.5,0.2)--(4.3,0.2);
	\node at (4.1,.4) {$t$};
	\node at (4.1,0) {$\ldots$};
\end{tikzpicture}		
\end{center}
\vspace*{-2mm}
\caption{Regions satisfying $\phi_\infty$
.}\label{fig:InfCmpSat}	
\end{figure}

Denote by $\phi_\infty$ the conjunction of the above
constraints. Fig.~\ref{fig:InfCmpSat} shows how $\phi_\infty$ can be
satisfied over $\RC(\R^2)$. By cylindrification, it is also satisfiable
over any $\RC(\R^n)$, for $n> 2$.

The arguments of this section are based on the following
property of regular closed subsets of Euclidean spaces:
\begin{lemma}\label{lma:ourNewman}
If $X \in \RC(\R^n)$ is connected, then every component of $-X$ 
has a connected boundary.
\end{lemma}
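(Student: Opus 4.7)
The plan is to deduce the lemma from the classical \emph{unicoherence} of $\R^n$: if $\R^n = A \cup B$ with $A,B$ closed and connected, then $A \cap B$ is connected. Let $Y$ be a component of $-X$ and put $A = Y$, $B = \R^n \setminus \ti{Y}$. Both sets are closed, $A \cup B = \R^n$, and $A \cap B = Y \setminus \ti{Y} = \delta Y$. It therefore suffices to verify that $A$ and $B$ are each connected; the conclusion then follows from unicoherence. We may harmlessly assume $\emptyset \ne X \ne \R^n$.

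Connectedness of $A = Y$ is immediate. For $B$, I first establish an auxiliary fact: for every component $Z$ of $-X$, the boundary $\delta Z$ is non-empty and contained in $X$. Non-emptiness follows from connectedness of $\R^n$, since a proper non-empty closed subset cannot have empty boundary. For the inclusion, suppose some $p \in \delta Z$ lay outside $X$. As $X$ is regular closed, $\ti{(-X)} = \R^n \setminus X$, so an open ball around $p$ would lie in $-X$; being connected, it would then be contained in the single component of $-X$ through $p$, namely $Z$, placing $p$ in $\ti{Z}$---contradicting $p \in \delta Z$.

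Now enumerate the components of $-X$ other than $Y$ as $Y_1, Y_2, \ldots$. Since $\delta Y \subseteq X$ and $\ti{Y} \subseteq \ti{(-X)} = \R^n \setminus X$ is disjoint from $X$ and from each $Y_i$, we have
\begin{equation*}
B \ = \ \R^n \setminus \ti{Y} \ = \ X \ \cup \ \textstyle\bigcup_{i} Y_i.
\end{equation*}
Each $Y_i$ is connected and meets $X$ in $\delta Y_i \ne \emptyset$ by the auxiliary fact, so each $X \cup Y_i$ is connected. Since all these sets share the connected set $X$, their union $B$ is connected too.

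The main obstacle is showing $B$ is connected; this rests crucially on the regular closedness of $X$, used both to place the boundaries of all components of $-X$ inside $X$ and to keep $\ti{Y}$ off of $X$. Once $A$ and $B$ are known to be connected, the lemma drops out of the unicoherence of $\R^n$, a standard classical fact we treat as a black box.
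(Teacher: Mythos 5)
Your proof is correct, but it follows a genuinely different route from the paper's. The paper quotes Newman's theorem---every component of the complement $\R^n\setminus X$ of a connected set $X$ has a connected boundary---and transfers it to the regular-closed complement $-X$ by a separation argument: assuming $\delta Y$ is disconnected, it writes $Y$ as the closure of a union of components $Z_i$ of the open set $\R^n\setminus X$, uses Newman's theorem to place each connected boundary $\delta Z_i$ entirely inside one piece of the hypothetical separation, and thereby splits $Y$ itself, contradicting its connectedness. You instead apply unicoherence of $\R^n$ directly to the closed cover $A=Y$, $B=\R^n\setminus\ti{Y}$, so the work shifts to proving $B$ connected; regular closedness of $X$ enters exactly where you say it does, namely in the identity $\ti{(-X)}=\R^n\setminus X$, which gives $\delta Z\subseteq X$ and $\delta Z\neq\emptyset$ for every component $Z$ of $-X$, whence $B$ is a union of connected sets all glued along the connected set $X$. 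The two black boxes are close relatives (Newman's result is itself a unicoherence-type fact about $\R^n$), but your decomposition avoids the contradiction-and-sorting step and gives a direct argument, whereas the paper's argument needs only the specific cited result about boundaries of complementary components of connected sets. Two cosmetic remarks: the components of $-X$ need not form a sequence (a regular closed set can have uncountably many components, some with empty interior), but your gluing argument never uses the enumeration, so nothing breaks; and properness of a component $Z$ of $-X$ (needed to conclude $\delta Z\neq\emptyset$ from connectedness of $\R^n$) follows from your standing assumption $X\neq\emptyset$, since then $\ti{X}\neq\emptyset$ while $\ti{X}\cap(-X)=\emptyset$.
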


The proof of this lemma, which follows from a result
in~\cite{ijcai:Newman64}, can be found in Appendix~\ref{sec:sensitivityA}. 
The result fails for other familiar spaces such as the torus.
\begin{theorem}\label{theo:inftyCc}
There is a $\cBCc$-formula 
satisfiable over $\RC(\R^n)$, $n
\geq 2$, but not by regions with finitely many components.
\end{theorem}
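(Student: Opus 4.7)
The plan is to take $\phi_\infty$ itself as the witness. Satisfiability over $\RC(\R^n)$ for $n \ge 2$ is immediate from the onion construction of Fig.~\ref{fig:InfCmpSat} (for $n > 2$, cylindrify each region by $\R^{n-2}$); this model visibly uses regions with infinitely many components, which is consistent with what the theorem asserts. For the remaining direction, I will assume towards a contradiction a satisfying interpretation $\mathfrak{I}$ in which every region has only finitely many components, and derive an absurdity by producing an infinite sequence of pairwise distinct components.

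The first ingredient is a \emph{forwarding} observation: every connected component $K$ of $a_i^{\mathfrak{I}}$ must be in contact with some component of $a_{\md{i+1}}^{\mathfrak{I}}$. Indeed, $Y_i := a_i + d_{\md{i+1}} + t$ is connected by~\eqref{eq:InfContact}; $K$ is separated in the ambient topology from the other components of $a_i$ (components of a closed set are pairwise disjoint and closed); and~\eqref{notC} forbids $K$ from touching either $t$ or $d_{\md{i+1}} \cdot (-a_{\md{i+1}})$. So the only route from $K$ to the rest of the connected set $Y_i$ runs through $a_{\md{i+1}}$.

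The second ingredient converts forwarding into a genuine topological nesting via Lemma~\ref{lma:ourNewman}. Pick a component $K_0$ of $a_0^{\mathfrak{I}}$; since $\neg C(d_0, d_2)$ by~\eqref{eq:InfNTriv1} and $a_0 \le d_0$, $K_0$ sits inside some component $V_0$ of $-Y_2$, where $Y_2 = a_2 + d_3 + t$, and the lemma guarantees that $\delta V_0 \subseteq Y_2$ is connected. Forwarding from $K_0$ through a component of $a_1$ to a component $K_2$ of $a_2$ yields a $K_2$ that meets $\overline{V_0}$ yet lies in $Y_2$, so in fact $K_2 \subseteq \delta V_0$. Taking the component $V_2$ of $-Y_0 = -(a_0 + d_1 + t)$ around $K_2$ and repeating produces a chain $K_0, K_2, K_4, \ldots$ of components of $a_0$ and $a_2$ sitting in progressively deeper pockets $V_0 \supsetneq V_2 \supsetneq \cdots$.

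The hardest part of the argument is certifying that this chain genuinely produces infinitely many distinct components rather than closing back on itself. This is the step where the connected-boundary conclusion of Lemma~\ref{lma:ourNewman} must interact decisively with the cyclic non-contact constraints~\eqref{eq:InfNTriv1}: if the chain were to reuse an earlier component, the corresponding pocket boundary $\delta V_{2k}$, a connected subset of the appropriate $Y_{\cdot}$, would have to close up in a way that either disconnects it or forces $d_0$ and $d_2$ to touch, both of which are excluded. Ruling out this "closing off" carefully, against the full finite combinatorics of the finitely-many components, is the delicate bookkeeping I expect to require the most care; once done, the infinite chain contradicts the finiteness assumption and proves the theorem.
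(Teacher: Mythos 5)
Your choice of witness and the satisfiability half match the paper, and your ``forwarding'' observation is sound (it is essentially the paper's use of \eqref{eq:InfContact} and \eqref{notC}; you lean on the finiteness hypothesis to keep the components of $a_i$ apart, where the paper instead uses local connectedness of $\R^n$ and so argues directly, without contradiction). But the substance of the theorem is exactly the part you postpone as ``delicate bookkeeping'': showing that the forced chain of components never closes up. Worse, the scaffolding you build will not carry that argument. The claim $K_2 \subseteq \delta V_0$ is false: $K_2$ is a non-empty regular closed set, hence has non-empty interior, whereas $\delta V_0$ is nowhere dense; all you can conclude is $K_2 \cap \delta V_0 \neq \emptyset$. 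For the same reason the asserted nesting $V_0 \supsetneq V_2$ fails: interior points of $K_2$ lie in $\ti{(a_2+d_3+t)}$, hence outside $-(a_2+d_3+t) \supseteq V_0$, yet they lie in your $V_2$, so $V_2 \not\subseteq V_0$. Finally, because your pockets are components of $-(a_2+d_3+t)$, their (connected) boundaries may run through $t$, which passes through every layer; so Lemma~\ref{lma:ourNewman} together with \eqref{eq:InfNTriv1} does not confine such a boundary to a single $d_j$, and the dichotomy you gesture at (``disconnects it or forces $d_0$ and $d_2$ to touch'') is not forced.

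The missing idea is the orientation device the paper uses to get a strictly increasing chain of pockets. There the pockets are components $S_{i+1}$, $R_{i+1}$ of $-X_{i+1}$ for a \emph{single component} $X_{i+1}$ of $d_{\md{i+1}}$ (not of the whole connected sum $a_i+d_{\md{i+1}}+t$); consequently $\delta S_{i+1} \subseteq X_{i+1} \subseteq d_{\md{i+1}}$ and also $\delta S_{i+1} \subseteq -X_{i+1}$, whence by \eqref{eq:InfPart1} and \eqref{eq:InfNTriv1} it lies in $d_{\md{i}} \cup d_{\md{i+2}}$. Connectedness of $\delta S_{i+1}$ (Lemma~\ref{lma:ourNewman}) then forces it wholly into one of these two layers, and the decision between them is made by an auxiliary connected open neighbourhood of a contact point of $X_i$ with $a_{\md{i+1}}$, chosen via local connectedness so as to avoid $d_{\md{i+2}}$: it meets $\delta S_{i+1}$, so $\delta S_{i+1} \subseteq d_{\md{i}}$, and symmetrically $\delta R_{i+1} \subseteq d_{\md{i+2}}$. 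Hence $S_{i+1}$ and $R_{i+1}$ are disjoint components of $-X_{i+1}$, which yields $S_{i+1} \subseteq \ti{(-X_{i+2})}$ and then $S_{i+1} \subseteq \ti{S}_{i+2}$; the strictly nested $S_i$ make the $X_i$ pairwise distinct, with no finite case analysis over components required. Without some mechanism of this kind---one that both keeps each pocket boundary inside a single layer and breaks the symmetry between $d_{\md{i}}$ and $d_{\md{i+2}}$---your sketch does not yield the theorem.
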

\begin{proof}
Let $\phi_\infty$ be as above.
To simplify the presentation, we ignore the difference between variables
and the regions they stand for, writing, for example, $a_i$ instead of
$a_i^\mathfrak{I}$.  We construct a sequence of disjoint components
$X_i$ of $d_{\md{i}}$ and open sets $V_i$ connecting $X_i$ to
$X_{i+1}$ (Fig.~\ref{fig:InfCmpConstr}). By the first conjunct
of~\eqref{eq:basic-regions}, let $X_0$ be a component of $d_0$
containing points in $a_0$. Suppose $X_i$ has been constructed.
By~\eqref{eq:InfContact} and~\eqref{notC}, $X_i$ is in contact with
$a_{\md{i+1}}$. Using~\eqref{eq:InfNTriv1} and the fact that
$\R^n$ is locally connected, one can find a
component $X_{i+1}$ of $d_{\md{i+1}}$ which has points in $a_{i+1}$,
and a connected open set $V_i$ such that $V_i \cap X_i$ and $V_i \cap
X_{i+1}$ are non-empty, but $V_i \cap d_{\md{i+2}}$ is empty.
\begin{figure}[h]
\begin{center}
\begin{tikzpicture}
	\newcommand{\iterateboxes}[3][]
	{
		\coordinate (Plx) at (0,0);
		\coordinate (Prx) at (7.5,0);
		\coordinate (Py) at (0,1);
		#1
		\foreach \x/\y in {3/2,2/1,1/0}
		{
			#2
			
			\coordinate (Plx) at ($(Plx)+(.25,0)$);
			\coordinate (Prx) at ($(Prx)-(2,0)$);
			\coordinate (Py) at ($(Py)-(0,.25)$);
		}
		#3
	}
	\small
	\iterateboxes
	[\node at ($(.5,0)+(Prx)$) {$\ldots$};]
	{
		\draw[draw=black,fill=gray!30] ($(Plx)-(Py)$) rectangle ($(Prx)+(Py)$);
		\draw[draw=black,thick,fill=gray!0] ($(.125,.125)+(Plx)-(Py)$) rectangle ($(Prx)+(Py)-(1.45,.125)$);
		\node at ($(-.7,-0)+(Prx)$) {$X_\x$};
	}
	{
		\draw[draw=black,fill=gray!30] ($(Plx)-(Py)$) rectangle ($(Prx)+(Py)$);
		\node at ($(-.5,-0)+(Prx)$) {$X_0$};
	}
	\iterateboxes
	{
		\filldraw ($(Prx)-(2,0)$) coordinate(P);
		\filldraw ($(Prx)-(1.45,0)$) coordinate(Q);
		\draw ($(P)!.5!(Q)$) ellipse (.5 and 0.2);
		\node at ($(P)!.5!(Q)$) {$V_\y$};
	}{}
\end{tikzpicture}
\end{center}
\vspace*{-2mm}
\caption{The sequence $\{X_i,V_i\}_{i\geq0}$ generated by $\phi_\infty$.
($S_{i+1}$ and $R_{i+1}$ are the `holes' of $X_{i+1}$ containing $X_i$ and $X_{i+2}$.)
}
\label{fig:InfCmpConstr}
\end{figure}

To see that the $X_i$ are distinct, let $S_{i+1}$ and $R_{i+1}$ be the
components of $-X_{i+1}$ containing $X_i$ and $X_{i+2}$,
respectively. It suffices to show $S_{i+1} \subseteq\ti{S}_{i+2}$.
Note that the connected set $V_i$ must intersect $\delta S_{i+1}$.
Evidently, $\delta S_{i+1} \subseteq X_{i+1} \subseteq d_{\md{i+1}}$.
Also, $\delta S_{i+1} \subseteq -X_{i+1}$; hence,
by~\eqref{eq:InfPart1} and~\eqref{eq:InfNTriv1}, $\delta S_{i+1}
\subseteq d_{i} \cup d_{\md{i+2}}$.  By Lemma~\ref{lma:ourNewman},
$\delta S_{i+1}$ is connected, and therefore, by~\eqref{eq:InfNTriv1},
is entirely contained either in $d_{\md{i}}$ or in
$d_{\md{i+2}}$. Since $V_i \cap \delta S_{i+1} \neq \emptyset$ and
$V_i \cap d_{\md{i+2}} = \emptyset$, we have $\delta S_{i+1} \not
\subseteq d_{\md{i+2}}$, so $\delta S_{i+1} \subseteq d_i$. Similarly,
$\delta R_{i+1}\subseteq d_{i+2}$.  By~\eqref{eq:InfNTriv1}, then,
$\delta S_{i+1} \cap \delta R_{i+1} = \emptyset$, and since $S_{i+1}$
and $R_{i+1}$ are components of the same set, they are
disjoint. Hence, $S_{i+1}\subseteq \ti{(-R_{i+1})}$, and since
$X_{i+2}\subseteq R_{i+1}$, also $S_{i+1}\subseteq
\ti{(-X_{i+2})}$. So, $S_{i+1}$ lies in the interior of
a component of $-X_{i+2}$, and since $\delta S_{i+1}\subseteq
X_{i+1}\subseteq S_{i+2}$, that component must be $S_{i+2}$.
\end{proof}

\vspace*{-2mm}

Now we show how the $\cBCc$-formula $\phi_\infty$ can be transformed
to $\cBCci$- and $\cBc$-formulas with similar properties.  Note first
that all occurrences of $c$ in $\phi_\infty$ have positive polarity.
Let $\ti{\phi}_\infty$ be the result of replacing them with the
predicate $\ic$. In Fig.~\ref{fig:InfCmpSat}, the connected regions
mentioned in~\eqref{eq:InfContact} are in fact interior-connected;
hence $\ti{\phi}_\infty$ is satisfiable over $\RC(\R^n)$. Since
interior-connectedness implies connectedness, $\ti{\phi}_\infty$
entails $\phi_\infty$, and we obtain:
\begin{corollary}\label{cor:inftyCci}
There is a $\cBCci$-formula 
satisfiable over $\RC(\R^n)$, $n
\geq 2$, but not by regions with finitely many components.
\end{corollary}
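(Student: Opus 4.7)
The plan is to lift the $\cBCc$-formula $\phi_\infty$ from Theorem~\ref{theo:inftyCc} to a $\cBCci$-formula by strengthening each of its connectedness assertions. First I would scan $\phi_\infty$ and observe that the predicate $c$ appears only in the conjuncts~\eqref{eq:InfContact}, and always with positive polarity. Define $\ti{\phi}_\infty$ to be the result of replacing every occurrence of $c(\tau)$ in $\phi_\infty$ by $\ic(\tau)$; this yields a syntactically well-formed $\cBCci$-formula whose $=$, $C$, and Boolean content is unchanged.

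For the hardness half, I would verify that $\ti{\phi}_\infty$ entails $\phi_\infty$. For any $\tau \in \RC(\R^n)$ we have $\tau = \tc{\ti{\tau}}$, and the closure of a connected set in any topological space is connected; hence $\ic(\tau)$ implies $c(\tau)$. Because the replaced occurrences are all positive, any model of $\ti{\phi}_\infty$ is also a model of $\phi_\infty$, so Theorem~\ref{theo:inftyCc} immediately gives that $\ti{\phi}_\infty$ cannot be satisfied by regions with only finitely many components.

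For satisfiability over $\RC(\R^n)$, $n\ge 2$, I would reuse the explicit witness of Fig.~\ref{fig:InfCmpSat} (cylindrifying for $n>2$) and upgrade the verification of the four conjuncts~\eqref{eq:InfContact} from $c$ to $\ic$. Concretely, for each $0\le i\le 3$, the set $a_i + d_{\md{i+1}} + t$ is, in the onion-layer picture, a nested annular strip joined to the traversing region $t$; one checks that its interior is path-connected, since the open overlap of $a_i$ with $d_{\md{i+1}}$ is non-empty and $\ti{t}$ meets the interior of each layer. All the non-connectedness conjuncts of $\phi_\infty$ (the $\neg C$ and the partition equations) are untouched by the rewriting and so remain satisfied by the same interpretation.

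The main (and only real) obstacle is this interior-connectedness check for the four witness unions. It is essentially a matter of reading the figure, but requires drawing it so that consecutive layers share genuinely open overlaps rather than touching only along boundaries; if necessary one thickens $t$ slightly to ensure $\ti{t}$ crosses each layer. After that, the corollary follows by combining satisfiability of $\ti{\phi}_\infty$ over $\RC(\R^n)$ with the entailment $\ti{\phi}_\infty \models \phi_\infty$ and Theorem~\ref{theo:inftyCc}.
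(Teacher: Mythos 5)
Your proposal is correct and follows essentially the same route as the paper: replace the (positively occurring) $c$ in~\eqref{eq:InfContact} by $\ic$, observe that the witness of Fig.~\ref{fig:InfCmpSat} (cylindrified for $n>2$) already has interior-connected unions $a_i + d_{\md{i+1}} + t$, and use the entailment $\ic(\tau)\models c(\tau)$ together with Theorem~\ref{theo:inftyCc}. Your explicit justification of that entailment via $\tau=\tc{\ti{\tau}}$ and connectedness of closures is a fine (slightly more detailed) rendering of the paper's one-line remark.
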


To construct a $\cBc$-formula, we observe
that all occurrences of $C$ in $\phi_\infty$ are negative. We eliminate
these using the predicate $c$. Consider, for example, the formula
$\neg C(a_i, t$) in~\eqref{notC}.
By inspection of Fig.~\ref{fig:InfCmpSat},
one can find regions $r_1$, $r_2$ satisfying
\begin{equation}
\label{eq:contactTrick}
c(r_1) \wedge c(r_2) \wedge (a_i \leq r_1) \wedge (t \leq r_2)
\wedge \neg c(r_1+r_2).
\end{equation}
On the other hand, \eqref{eq:contactTrick} entails $\neg C(a_i,t)$. By
treating all other non-contact relations similarly, we obtain a
$\cBc$-formula $\psi_\infty$ that is satisfiable over $\RC(\R^n)$, and
that entails $\phi_\infty$. Thus:
\begin{corollary}\label{cor:inftyBc}
There is a $\cBc$-formula satisfiable over $\RC(\R^n)$, $n \geq 2$,
but not by regions with finitely many components.
\end{corollary}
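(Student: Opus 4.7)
My plan is to start from the $\cBCc$-formula $\phi_\infty$ of Theorem~\ref{theo:inftyCc} and systematically replace each occurrence of the contact predicate $C$ by a gadget using only $c$ and Boolean terms, producing a $\cBc$-formula $\psi_\infty$ that (a) entails $\phi_\infty$ and (b) is still satisfiable over $\RC(\R^n)$. The observation that makes this possible is that in $\phi_\infty$ the predicate $C$ occurs only negatively (in \eqref{notC} and \eqref{eq:InfNTriv1}; the only positive connectedness assertions use $c$ in \eqref{eq:InfContact}). So we need to simulate $\neg C(\tau_1,\tau_2)$ using $c$.

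For each conjunct $\neg C(\tau_1,\tau_2)$ appearing in $\phi_\infty$, I introduce two fresh variables $r_1,r_2$ and assert
\begin{equation*}
c(r_1)\,\wedge\,c(r_2)\,\wedge\,(\tau_1\leq r_1)\,\wedge\,(\tau_2\leq r_2)\,\wedge\,\neg c(r_1+r_2),
\end{equation*}
as in~\eqref{eq:contactTrick}. The first step is to verify semantically that this conjunction entails $\neg C(\tau_1,\tau_2)$: if $\tau_1^\mathfrak{I}\cap\tau_2^\mathfrak{I}\neq\emptyset$, then $r_1^\mathfrak{I}\cap r_2^\mathfrak{I}\neq\emptyset$, and since $r_1^\mathfrak{I},r_2^\mathfrak{I}$ are both connected, their union $r_1^\mathfrak{I}+r_2^\mathfrak{I}=r_1^\mathfrak{I}\cup r_2^\mathfrak{I}$ would be connected, contradicting the last conjunct. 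Let $\psi_\infty$ be the result of performing this replacement for every $\neg C$-conjunct in $\phi_\infty$. By construction, $\psi_\infty$ entails $\phi_\infty$, so any interpretation satisfying $\psi_\infty$ over $\RC(\R^n)$ must, by Theorem~\ref{theo:inftyCc}, interpret at least one $d_i$ as a region with infinitely many components.

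It remains to exhibit a model of $\psi_\infty$ over $\RC(\R^n)$. I return to the configuration of Fig.~\ref{fig:InfCmpSat} used to satisfy $\phi_\infty$, and for each $\neg C$-conjunct show how to construct the required connected separators $r_1,r_2$. For $\neg C(a_i,t)$ one can take $r_1$ to be a slight thickening of $a_i$ inside its layer avoiding $t$, and $r_2$ a thickening of $t$ lying just outside that thickening; both are connected and their union has an obvious disconnection. The same idea works for $\neg C(a_i,d_{\md{i+1}}\cdot(-a_{\md{i+1}}))$ by thickening $a_i$ within its component and letting $r_2$ cover $d_{\md{i+1}}$ minus a small neighbourhood of $a_{\md{i+1}}$, and for $\neg C(d_i,d_{\md{i+2}})$ by taking $r_1,r_2$ to be connected superregions of $d_i$ and $d_{\md{i+2}}$ that meet only one extra region of the onion.

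The main obstacle is Step 3: producing, in a single uniform interpretation, connected witnesses $r_1,r_2$ for \emph{every} $\neg C$-conjunct simultaneously, without the various witnesses interfering. The onion-layer structure of Fig.~\ref{fig:InfCmpSat} is what makes this possible, because the sets entering each negated contact assertion are separated by a whole intermediate layer, giving plenty of room to route the connected enlargements. Once this is checked for $n=2$, cylindrification yields satisfiability over $\RC(\R^n)$ for all $n\geq 2$, completing the proof.
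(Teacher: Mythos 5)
Your gadget and the entailment direction are exactly the paper's (formula \eqref{eq:contactTrick}), but the satisfiability half fails, and not merely for lack of detail: applied uniformly to \emph{every} negated contact of $\phi_\infty$, your replacement yields an \emph{unsatisfiable} formula. The culprit is the conjunct $\neg C(d_i, d_{\md{i+2}})$ from \eqref{eq:InfNTriv1}. Your gadget for it demands connected regions $r_1 \supseteq d_i$ and $r_2 \supseteq d_{\md{i+2}}$ with $r_1+r_2$ disconnected, hence $r_1 \cap r_2 = \emptyset$. No such regions exist in \emph{any} model of the rest of $\psi_\infty$: since $\psi_\infty$ entails $\phi_\infty$, the proof of Theorem~\ref{theo:inftyCc} applies and produces components $X_0, X_1, X_2, \ldots$ of $d_0, d_1, d_2, \ldots$ together with the sets $S_j, R_j$ satisfying $X_0 \subseteq S_1 \subseteq \ti{S}_{2} \subseteq S_2 \subseteq \ti{S}_{3}$, $X_4 \subseteq R_3$ and $R_3 \cap S_3 = \emptyset$; in particular $X_4 \cap S_2 = \emptyset$. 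A connected $r_1 \supseteq d_0$ contains both $X_0$ and $X_4$; if it avoided $X_2$ it would lie inside a single component of $-X_2$, necessarily $S_2$ (because $X_0 \subseteq S_2$), contradicting $X_4 \cap S_2 = \emptyset$. So $r_1$ meets $X_2 \subseteq d_2 \subseteq r_2$, whence $r_1+r_2$ is connected, contradicting $\neg c(r_1+r_2)$. Concretely, in the onion each layer of $d_{\md{i+2}}$ is a closed annulus separating two consecutive layers of $d_i$, so your proposed witnesses (``connected superregions of $d_i$ and $d_{\md{i+2}}$'') cannot exist even in the intended picture; the same over-optimism affects your other witnesses, e.g.\ a ``slight thickening of $a_i$ inside its layer'' is not connected, since $a_i$ has a component in every fourth layer.

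This is precisely why the paper does not gadget $\neg C(d_i, d_{\md{i+2}})$ directly. It first rewrites it as the equivalent conjunction $\neg C(a_i,b_{\md{i+2}}) \wedge \neg C(b_i,a_{\md{i+2}}) \wedge \neg C(a_i,a_{\md{i+2}}) \wedge \neg C(b_i,b_{\md{i+2}})$, where $b_i = d_i\cdot(-a_i)$, and only then applies the connectedness gadget, with witnesses of the form $\tau_1+\tau_1'$ and $\tau_2+\tau_2'$ in which global ``spine'' connectors thread through every intervening layer via the gate left open by the opposite piece: $a_0+a_1+a_2+a_3$ and $t$ for $\neg C(a_i,t)$, a fresh region $s$ for the $a$-versus-$b$ constraints, and fresh spines $a_i'$, $a_{\md{i+2}}'$ for $\neg C(a_i,a_{\md{i+2}})$ (with a correspondingly redrawn onion). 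The two-variable superregion form of the gadget is not the problem---it is interderivable with the paper's four-variable $\phi^c_{\neg C}$---but without the decomposition of \eqref{eq:InfNTriv1} into piecewise constraints and these carefully routed global connectors, the resulting $\cBc$-formula is not satisfiable over $\RC(\R^n)$, so your argument does not establish the corollary.
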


Obtaining a $\cBci$ analogue is complicated by the fact that we must
enforce non-contact constraints using $\ic$ (rather than $c$). In the
Euclidean plane, this can be done using \emph{planarity constraints};
see Appendix~\ref{sec:sensitivityA}.
\begin{theorem}\label{theo:inftyBci}
There is a $\cBci$-formula satisfiable over $\RC(\R^2)$, but not by
regions with finitely many components.
\end{theorem}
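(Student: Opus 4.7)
The plan is to adapt $\psi_\infty$ from Corollary~\ref{cor:inftyBc} to the language $\cBci$, but the contact-elimination trick of~\eqref{eq:contactTrick} fails: $\ic(r_1)\land\ic(r_2)\land\neg\ic(r_1+r_2)$ does not entail $\neg C(r_1,r_2)$, since two regions with disjoint, connected interiors may still meet along their boundaries. I would instead encode each negative contact by a \emph{planarity gadget}, exploiting --- as in the discussion of $\phi_5$ --- the non-planarity of $K_5$ in $\R^2$.

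Concretely, starting from $\ti{\phi}_\infty$ (the result of replacing $c$ by $\ic$ throughout $\phi_\infty$, cf.\ Corollary~\ref{cor:inftyCci}), whose only $C$-atoms are negative, I would replace every subformula $\neg C(\tau_1,\tau_2)$ by the conjunction, with fresh variables $r_1,r_2,r_3$:
\begin{align*}
&\textstyle\bigwedge_{1\le i\le 3}\ic(r_i)\ \land\ \bigwedge_{i<j}(r_i\cdot r_j=0)\ \land\ \bigwedge_{k\in\{1,2\},\,i}(\tau_k\cdot r_i=0) \\
&\quad\textstyle\land\ \bigwedge_{k\in\{1,2\},\,i}\ic(\tau_k+r_i)\ \land\ \bigwedge_{i<j}\ic(r_i+r_j).
\end{align*}
This forces $\tau_1,\tau_2,r_1,r_2,r_3$ to have interior-connected, pairwise disjoint interiors, with all pairs except possibly $(\tau_1,\tau_2)$ in contact. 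Were $\tau_1\mathrel{C}\tau_2$ also to hold, the contact graph would be $K_5$, which cannot be realized in $\R^2$ by interior-connected, interior-disjoint regions; hence the gadget entails $\neg C(\tau_1,\tau_2)$ over $\RC(\R^2)$. Let $\psi^\circ_\infty$ denote the resulting $\cBci$-formula. Then $\psi^\circ_\infty\models\phi_\infty$ in $\R^2$, so the chain construction of Theorem~\ref{theo:inftyCc} shows any model has infinitely many components.

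For satisfiability of $\psi^\circ_\infty$ over $\RC(\R^2)$, I would extend the configuration of Fig.~\ref{fig:InfCmpSat} by placing, for each of the finitely many gadgets, a triple of auxiliary regions near an appropriate contact locus of $\tau_1$ and $\tau_2$, realizing all the required pairwise interior-connected unions; the free space around and between the onion layers leaves ample room.

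The main obstacle is that several terms $\tau_k$ appearing in negative contacts of $\phi_\infty$ --- e.g.\ $d_{\md{i+1}}\cdot(-a_{\md{i+1}})$ and the $d_i$ themselves --- are compound and not interior-connected in the intended model, whereas the planarity argument requires interior-connected $\tau_k$'s. I would address this by first preprocessing $\ti{\phi}_\infty$, introducing, for each such $\tau_k$, fresh interior-connected witness variables $b$ with $b\le\tau_k$ and $b\ne 0$, and retargeting each non-contact assertion to these witnesses. I would then verify that the weakened non-contact constraints are still strong enough to drive the chain argument of Theorem~\ref{theo:inftyCc}, and that the full collection of gadgets can be simultaneously realized in $\R^2$ without introducing spurious contacts. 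Ensuring that neither step destroys the infinite-components conclusion is the most delicate part.
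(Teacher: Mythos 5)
Your central gadget is unsound: it does not entail $\neg C(\tau_1,\tau_2)$ over $\RC(\R^2)$. As written it does not even assert $\tau_1\cdot\tau_2=0$ (nor $\ic(\tau_k)$, $\tau_k\neq 0$), so taking $\tau_1=\tau_2$ to be a disc and $r_1,r_2,r_3$ three annular sectors surrounding it satisfies every conjunct. More importantly, even after adding those conjuncts, the $K_5$ argument can only show that the five regions are not \emph{pairwise} in face-contact, i.e.\ at best it yields $\neg\ic(\tau_1+\tau_2)$ --- and since $K_5$ minus one edge is planar, it cannot do better: take $\tau_1,\tau_2$ to be two externally tangent discs and place $r_1$ above the tangency, $r_2$ below, and $r_3$ surrounding the whole configuration; all nine required unions are interior-connected, the gadget holds, yet $C(\tau_1,\tau_2)$. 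The distinction matters, because the infinitude argument of Theorem~\ref{theo:inftyCc} needs genuine point-disjointness, not merely absence of face-contact: \eqref{eq:InfNTriv1} is used to conclude that the connected set $\delta S_{i+1}\subseteq d_{\md{i}}\cup d_{\md{i+2}}$ lies wholly inside one of the two regions and that $\delta S_{i+1}\cap\delta R_{i+1}=\emptyset$, and this collapses if $d_{\md{i}}$ and $d_{\md{i+2}}$ may share boundary points. This is exactly why the paper's $\ic$-based gadget $\eta^*$ (Lemma~\ref{lma:Cci2BciStar}) is of a different kind: using $\frameFlai(t_0,\dots,t_5)$ it builds a Jordan curve and forces $r\leq m_1$ and $s\leq m_2$ into its two distinct residual domains, which genuinely separates $r$ from $s$.

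The satisfiability half of your plan also fails, and your proposed repair removes what the infinitude argument needs. In the onion configuration of Fig.~\ref{fig:InfCmpSat}, consider $\neg C(d_0,d_2)$: any auxiliary region $r_1$ with $r_1\cdot d_0=r_1\cdot d_2=0$ cannot make $\ti{(d_0+r_1)}$ connected, because consecutive components of $d_0$ are separated by annular components of $d_2$, which no connected open set disjoint from $\ti{d_2}$ can cross; the same interleaving shows $d_0$ and $d_2$ cannot be separated by a single Jordan curve, so even the correct gadget $\eta^*$ is unsatisfiable in this configuration. Retargeting the non-contact constraints to small witnesses $b\leq\tau_k$ weakens exactly the constraints the chain argument uses (they concern the full regions $d_i$ and $d_{\md{i+1}}\cdot(-a_{\md{i+1}})$), and you offer no replacement argument --- this is not a delicate detail but the crux. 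The paper resolves it by abandoning the onion layout: it introduces a new formula $\phi^*_\infty$, \eqref{eq:BciInf1}--\eqref{eq:BciInf5}, whose intended model (Fig.~\ref{fig:InfBci}) is arranged so that every pair under a non-contact constraint can be separated by a Jordan curve (Fig.~\ref{fig:BciInftySep}), proves that any satisfying tuple has infinitely many components by a new nested cross-cut argument based on Lemmas~\ref{lma:StackLemmai} and~\ref{lma:FrameLemmaInt}, and only then eliminates $C$ via $\eta^*$. Both a sound separation gadget and a configuration compatible with it are missing from your proposal.
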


Theorem~\ref{theo:inftyCc} and Corollary~\ref{cor:inftyBc} entail that, if
$\cL$ is $\cBc$ or $\cBCc$, then $\Sat(\cL,\RC(\R^n)) \neq
\Sat(\cL,\RCP(\R^n))$ for $n \geq 2$.  Theorem~\ref{theo:inftyBci}
fails for $\RC(\R^n)$ with $n\geq 3$ (Sec.~\ref{sec:3d}).  However, we
know from~\eqref{eq:wiggly} that $\Sat(\cBci,\RC(\R^n)) \neq
\Sat(\cBci,\RCP(\R^n))$ for all $n \geq 2$. Theorem~\ref{theo:inftyCc}
fails in the 1D case; moreover, $\Sat(\cL,\RC(\R))=\Sat(\cL,\RCP(\R))$
only in the case $\cL = \cBc$ or $\cBci$~\cite{ijcai:kphz10}.



\section{Undecidability in the Plane}\label{sec:undecidability}

Let $\cL$ be any of $\cBc$, $\cBCc$, $\cBci$ or $\cBCci$. In this
section, we show, via a reduction of the {\em Post correspondence
  problem} (PCP), that $\Sat(\cL,\RC(\R^2))$ is r.e.-hard, and
$\Sat(\cL,\RCP(\R^2))$ is r.e.-complete. An {\em instance} of the PCP
is a quadruple $\fW = (S, T, \fw_1, \fw_2)$ where $S$ and $T$ are
finite alphabets, and each $\fw_i$ is a word morphism from $T^*$ to
$S^*$. We may assume that $S = \set{0,1}$ and $\fw_i(t)$ is non-empty
for any $t \in T$. The instance $\fW$ is {\em positive} if there
exists a non-empty $\tau \in T^*$ such that $\fw_1(\tau) = \fw_2(\tau)$. The set
of positive PCP-instances is known to be r.e.-complete. The reduction
can only be given in outline here: full details are given in 
Appendix~\ref{sec:UndecidabilityB}.


To deal with arbitrary regular closed subsets of $\RC(\R^2)$, we use
the technique of `wrapping' a region inside two bigger ones. Let us
say that a \emph{3-region} is a triple $\tseq{a} =
(a,\intermediate{a},\inner{a})$ of elements of $\RC(\R^2)$ such that
$0 \neq \inner{a} \ll \intermediate{a} \ll a$, where $r \ll s$
abbreviates $\neg C(r, -s)$. It helps to think of $\tseq{a} =
(a,\intermediate{a},\inner{a})$ as consisting of a kernel,
$\inner{a}$, encased in two protective layers of shell. As a simple
example, consider the sequence of 3-regions $\tseq{a}_1, \tseq{a}_2,
\tseq{a}_3$ depicted in Fig.~\ref{fig:stack}, where the inner-most
regions form a sequence of externally touching polygons.
\begin{figure}[h]
\begin{center}
\begin{tikzpicture}[	s0/.style={fill=Gray!40,fill opacity=0.5},			
						s1/.style={dashed,fill=Gray!20,fill opacity=0.3},			
						s2/.style={dotted,fill=white,fill opacity=0}] 		
	
	\newcount\mod
	\newcount\jj
	
	\coordinate (H0) at (2,0); 
	\coordinate (V0) at (0,0.8); 
	\coordinate (H1) at (2,0); 
	\coordinate (V1) at (0,.4); 
	
	\coordinate (HW) at (.4,0); 
	\coordinate (VW) at (0,.2); 
	\coordinate (P) at (0,0);
	
	\foreach \j in {2,1,0}
	{
	   \foreach \i in {1,2,3}
		{			
		    \pgfmathsetcount{\mod}{mod(\i,2)}
			\coordinate (H) at ($\j*(0.2,0)$); 
			\coordinate (V) at ($\j*(0,0.33)$); 
			
			\filldraw[s\j] ($(P)-(H)-(V)-.5*(V\the\mod)+(VW)$)
				--++($(V\the\mod)+2*(V)-2*(VW)$)
				--++($(HW)+(VW)$)
				--++($(H\the\mod)+2*(H)-2*(HW)$)
				--++($(HW)-(VW)$)
				--++($-1*(V\the\mod)-2*(V)+2*(VW)$)
				--++($-1*(HW)-(VW)$)
				--++($-1*(H\the\mod)-2*(H)+2*(HW)$)
				--++($-1*(HW)+(VW)$);			
			\ifnum \j=0 \node at ($(P)-0.5*(V\the\mod)+0.5*(H\the\mod)+(0,.2)$) {\small $\inner{a}_\i$}; \fi
			\ifnum \j=1 \node at ($(P)-0.5*(V\the\mod)+0.5*(H\the\mod)-.5*(V)+(0.,-0.05)$) {\small $\intermediate{a}_\i$}; \fi
			\ifnum \j=2 \node at ($(P)-0.5*(V\the\mod)+0.5*(H\the\mod)-.75*(V)+(0.,-0.05)$) {\small $a_\i$}; \fi
		    \coordinate (P) at ($(P)+(H\the\mod)$);
		}
	\coordinate (P) at (0,0);
	}
\end{tikzpicture}
\end{center}
\vspace*{-2mm}
\caption{A chain of 3-regions satisfying $\stack(\tseq{a}_1,
\tseq{a}_2, \tseq{a}_3)$.
}\label{fig:stack}
\end{figure}
When describing arrangements of 3-regions, we use the variable
$\tseq{r}$ for the triple of variables $(r, \intermediate{r},
\inner{r})$, taking the conjuncts $\inner{r} \neq 0$, $\inner{r} \ll
\intermediate{r}$ and $\intermediate{r} \ll r$ to be implicit. As with
ordinary variables, we often ignore the difference between 3-region
variables and the 3-regions they stand for.

For $k \geq 3$, define the formula $\stack(\tseq{a}_1, \ldots,
\tseq{a}_k)$ by 
\begin{equation*}
\bigwedge_{1 \leq i \leq k}
     c(\intermediate{a}_i + \inner{a}_{i + 1} + \cdots + \inner{a}_k) \ \  \ \ 
\land \ \ 
\bigwedge_{j - i > 1} \neg C(a_i,a_j).
\end{equation*}
Thus, the triple of 3-regions in Fig.~\ref{fig:stack} satisfies
$\stack(\tseq{a}_1, \tseq{a}_2, \tseq{a}_3)$. This formula plays a
crucial role in our proof.  If $\stack(\tseq{a}_1,\ldots,\tseq{a}_k)$
holds, then any point $p_0$ in the inner shell $\intermediate{a}_1$ of
$\tseq{a}_1$ can be connected to any point $p_k$ in the kernel
$\inner{a}_k$ of $\tseq{a}_k$ via a Jordan arc
$\gamma_1\cdots\gamma_k$ whose $i$th segment, $\gamma_i$, never leaves
the outer shell $a_i$ of $\tseq{a}_i$. Moreover, each $\gamma_i$
intersects the inner shell $\intermediate{a}_{i+1}$ of
$\tseq{a}_{i+1}$, for $1 \leq i <k$.

This technique allows us to write $\cBCc$-formulas whose satisfying
regions are guaranteed to contain various networks of arcs, exhibiting
almost any desired pattern of intersections. Now recall the
construction of Sec.~\ref{sec:sensitivity}, where constraints on the
variables $d_0, \ldots, d_3$ were used to enforce `cyclic' patterns of
components. Using $\stack(\tseq{a}_1,\ldots,\tseq{a}_k)$, we can
write a formula with the property that the regions in any satisfying
assignment are forced to contain the pattern of arcs having the form
shown in Fig.~\ref{fig:Summary1}.
\begin{figure}[h]
\begin{center}
\begin{tikzpicture}[>=latex]
\draw (0.3,0.2) rectangle +(7.9,2.1);
\draw[thick] (0.3,0.9) -- ++(6,0);
\foreach \x/\y in {1/1,2/2,3/3,6/{n}}
{
    \draw[thick] (\x + 0.1,0.9) -- ++(0,1.4);
    \filldraw[black] (\x + 0.3,0.9) circle(0.04);
    \draw[ultra thin] (\x + 0.3,0.9) -- ++ (0,-0.25);
    \draw[<->,dashed] (\x - 0.7,0.75) to node [label=below:{\small $\zeta_{\y}$}] {} (\x + 0.3,0.75);
    \draw[<->,dashed] (\x - 0.05,0.9) to node [label=left:{\small $\eta_{\y}$}] {} (\x - 0.05,2.3);
}
\filldraw[black] (0.3,0.9) circle(0.04);
\filldraw[black] (5.3,0.9) circle(0.04);
\draw[ultra thin] (5.3,0.8) -- ++ (0,-0.25);
\end{tikzpicture}
\end{center}
\vspace*{-2mm}
\caption{Encoding the PCP: Stage 1.}
\label{fig:Summary1}
\end{figure}
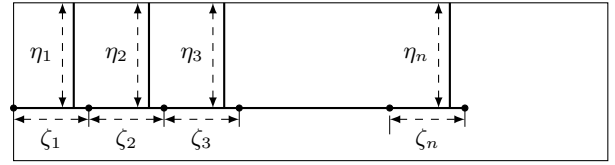
%
%
These arcs define a `window,' containing a sequence $\set{\zeta_i}$ of
`horizontal' arcs ($1 \leq i \leq n$), each connected by a
corresponding `vertical arc,' $\eta_i$, to some point on the `top
edge.' We can ensure that each $\zeta_i$ is included in a region
$a_{\md{i}}$, and each $\eta_i$ ($1 \leq i \leq n$) in a region
$b_{\md{i}}$, where $\md{i}$ now indicates $i \ \mbox{mod}\ 3$. By
repeating the construction, a second pair of arc-sequences,
$\set{\zeta'_i}$ and $\set{\eta'_i}$ ($1 \leq i \leq n'$) can be
established, but with each $\eta'_i$ connecting $\zeta'_i$ to the
`bottom edge.' Again, we can ensure each $\zeta'_i$ is included in a
region $a'_{\md{i}}$ and each $\eta'_i$ in a region $b'_{\md{i}}$ ($1 \leq
i \leq n'$). Further, we can ensure that the final horizontal arcs
$\zeta_n$ and $\zeta'_{n'}$ (but no others) are joined by an arc
$\zeta^*$ lying in a region $z^*$.
\begin{figure}[h]
\centering
\begin{tikzpicture}[>=latex]
\draw (0.3,0.2) rectangle +(7.9,2.1);
\draw[gray,thick] (0.3,0.8) -- ++(6,0);
\draw (6.3,1.6) -- (7,1.6) -- (7,0.8) -- (6.3,0.8);
\draw[thick] (0.3,1.6) -- ++(6,0);
\foreach \x/\y in {1/1,2/2,3/3,6/{n}}
{
    \draw[thick,gray] (\x + 0.1,0.8) -- ++(0,1.5);
    \draw[thick] (\x - 0.1,0.2) -- ++(0,1.4);
    \filldraw[gray] (\x + 0.3,0.8) circle(0.04);
    \filldraw[black] (\x + 0.3,1.6) circle(0.04);
    \draw[ultra thin] (\x + 0.3,1.6) -- ++ (0,0.25);
    \draw[<->,dashed] (\x + 0.3,1.75) to node [label=above:{\footnotesize $\zeta'_{\y}$}] {} (\x - 0.7,1.75);
    \draw[<->,dashed] (\x - 0.25,1.6) to node {} (\x - 0.25,0.2);
    \node at (\x-0.45,1.15) {{\footnotesize $\eta'_{\y}$}};
}
\filldraw[gray] (0.3,0.8) circle(0.04);
\filldraw[gray] (5.3,0.8) circle(0.04);
\filldraw[black] (0.3,1.6) circle(0.04);
\filldraw[black] (5.3,1.6) circle(0.04);
\draw[ultra thin] (5.3,1.6) -- ++ (0,0.25);
%
\node (last) at (6.9,1.3) [label=right:{\small $\zeta^*$}] {};
\end{tikzpicture}
\caption{Encoding the PCP: Stage 2.}
\label{fig:Summary2}
\end{figure}
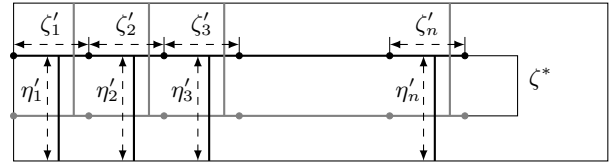
%
%
%
%
%
%
The crucial step is to match up these arc-sequences.  To do so, we
write $\neg C(a'_i, b_j) \wedge \neg C(a_i, b'_j) \wedge \neg C(b_i +
b'_i, b_j + b'_j + z^*)$, for all $i$, $j$ ($0 \leq i, j < 3$, $i \neq
j$). A simple argument based on planarity considerations then ensures
that the upper and lower sequences of arcs must cross (essentially) as
shown in Fig.~\ref{fig:Summary2}. In particular, we are guaranteed
that $n = n'$ (without specifying the value $n$), and that, for all $1
\leq i \leq n$, $\zeta_i$ is connected by $\eta_i$ (and also by
$\eta'_i$) to $\zeta'_i$.

Having established the configuration of Fig.~\ref{fig:Summary2}, we
write $(b_i \leq l_0 + l_1) \wedge \neg C(b_i \cdot l_0, b_i \cdot
l_1)$, for $0\leq i < 3$, ensuring that each $\eta_i$ is included in
exactly one of $l_0$, $l_1$. These inclusions naturally define a word
$\sigma$ over the alphabet $\set{0,1}$.  Next, we write
$\cBCc$-constraints which organize the sequences of arcs
$\set{\zeta_i}$ and $\set{\zeta'_i}$ (independently) into consecutive
blocks. These blocks of arcs can then be put in 1--1 correspondence
using essentially the same construction used to put the individual
arcs in 1--1 correspondence. Each pair of corresponding blocks can now
be made to lie in exactly one region from a collection $t_1, \ldots,
t_\ell$. We think of the $t_j$ as representing the letters of the
alphabet $T$, so that the labelling of the blocks with these elements
defines a word $\tau \in T^*$. It is then straightforward to write
non-contact constraints involving the arcs $\zeta_i$ ensuring that
$\sigma = \fw_1(\tau)$ and non-contact constraints involving the arcs
$\zeta'_i$ ensuring that $\sigma = \fw_2(\tau)$. Let $\phi_\fW$ be the
conjunction of all the foregoing $\cBCc$-formulas. Thus, if $\phi_\fW$
is satisfiable over $\RC(\R^2)$, then $\fW$ is a positive instance of
the PCP. On the other hand, if $\fW$ is a positive instance of the
PCP, then one can construct a tuple satisfying $\phi_\fW$ over
$\RCP(\R^2)$ by `thickening' the above collections of arcs into
polygons in the obvious way. So, $\fW$ is positive iff $\phi_\fW$ is
satisfiable over $\RC(\R^2)$ iff $\phi_\fW$ is satisfiable over
$\RCP(\R^2)$. This shows r.e.-hardness of $\Sat(\cBCc,\RC(\R^2))$ and
$\Sat(\cBCc,\RCP(\R^2))$.  Membership of the latter problem in r.e.~is
immediate because all polygons may be assumed to have vertices with
rational coordinates, and so may be effectively enumerated.  Using the
techniques of Corollaries~\ref{cor:inftyCci}--\ref{cor:inftyBc} and
Theorem~\ref{theo:inftyBci}, we obtain:
\begin{theorem}
\label{theo:undecidable}
For $\mathcal{L}\in \{\cBci, \cBc, \cBCci, \cBCc\}$,
$\Sat(\mathcal{L},\RC(\R^2))$ is r.e.-hard, and
$\Sat(\mathcal{L},\RCP(\R^2))$ is r.e.-complete.
\end{theorem}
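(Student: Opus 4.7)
The plan is to treat $\cBCc$ first: the formula $\phi_\fW$ constructed in the outline above already witnesses r.e.-hardness, since $\fW$ is a positive PCP instance iff $\phi_\fW$ is satisfiable over $\RC(\R^2)$ iff $\phi_\fW$ is satisfiable over $\RCP(\R^2)$. For the r.e.-upper bound in the polygonal case, I would enumerate tuples of polygons with rational-coordinate vertices and, for each, decide the formula; all primitives ($+$, $\cdot$, $-$, $=$, $C$, $c$, $\ic$) are algorithmically computable on rational polygons, so satisfaction is decidable.

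For the remaining three languages, the plan is to transform $\phi_\fW$ by the same syntactic tricks used to derive Corollaries~\ref{cor:inftyCci}--\ref{cor:inftyBc} and Theorem~\ref{theo:inftyBci} from Theorem~\ref{theo:inftyCc}, preserving the equivalence between positivity of $\fW$ and satisfiability over $\RC(\R^2)$ (equivalently, over $\RCP(\R^2)$). For $\cBCci$: verify that every occurrence of $c$ in $\phi_\fW$ is positive, then replace it by $\ic$. The strengthened formula $\ti{\phi}_\fW$ entails $\phi_\fW$, and the `thickened-arc' polygonal witness used to satisfy $\phi_\fW$ is in fact interior-connected wherever required. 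For $\cBc$: verify that every occurrence of $C$ is negative, and replace each atom $\neg C(u,v)$ by a block $c(r) \land c(s) \land (u \leq r) \land (v \leq s) \land \neg c(r+s)$ with fresh variables $r, s$, as in~\eqref{eq:contactTrick}. The block entails $\neg C(u,v)$ and is realized in the polygonal witness by enclosing $u$ and $v$ in disjoint connected polygons.

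The $\cBci$ case is the main obstacle, because the analogue of~\eqref{eq:contactTrick} with $c$ replaced by $\ic$ fails: two interior-connected regions may touch while their union remains interior-connected. Following the hint in Theorem~\ref{theo:inftyBci}, I would encode each required non-contact by a local planarity gadget, analogous to the use of $\phi_5$ from~\eqref{eq:ex1} as a non-planarity witness, that forces $u$ and $v$ to lie in different faces of some Kuratowski-style embedded subgraph. The technical challenge is to arrange these gadgets so that they encode exactly the non-contacts of $\phi_\fW$ without imposing spurious topological constraints that would break satisfaction by the polygonal witness. Once this is done, combining the planarity gadgets with the $c$-to-$\ic$ strengthening above yields the required $\cBci$-formula and completes the reduction for all four languages.
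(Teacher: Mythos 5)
Your treatment of $\cBCc$ (the PCP reduction for r.e.-hardness, plus enumeration of rational-vertex polygon tuples for the upper bound) and of $\cBCci$ (replacing the positive occurrences of $c$ by $\ic$ in $\phi_\fW$) is exactly the paper's route. For $\cBc$ you are also on the paper's route, but you overlook that the literals to be eliminated are not only the displayed conjuncts $\neg C(u,v)$: every 3-region variable $\tseq{r}$ carries the implicit contact literals $\inner{r}\ll\intermediate{r}$ and $\intermediate{r}\ll r$, i.e.\ $\neg C(\inner{r},-\intermediate{r})$ and $\neg C(\intermediate{r},-r)$, and these must be turned into $\cBc$-formulas as well. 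For them your realization recipe (``enclose $u$ and $v$ in disjoint connected polygons'') cannot be carried out: a connected region containing $\intermediate{r}$ and disjoint from a connected region containing $-r$ would have to lie inside $r$ and hence join the many components of $r$, so the naive replacement is unsatisfiable. The paper repairs this by writing $-r=s_1+s_2$ with each $s_i$ having connected complement and applying the connectedness gadget to each $s_i$ separately (the crenellation construction in Appendix~\ref{sec:UndecidabilityB}).

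The genuine gap is the $\cBci$ case, which your proposal explicitly leaves open: you correctly observe that the analogue of \eqref{eq:contactTrick} fails for $\ic$ and appeal to an unspecified ``Kuratowski-style'' planarity gadget, but constructing that gadget and proving it both sound and jointly satisfiable with the main formula is precisely the substantive content of this case. The paper's gadget is the formula $\eta^*(r,s,\bar{v})$ of Lemma~\ref{lma:Cci2BciStar}: the conjunct $\frameFlai(t_0,\dots,t_5)$ forces, by Lemma~\ref{lma:FrameLemmaInt}, a Jordan curve $\Gamma$ inside an annular frame disjoint from $m_1+m_2$, and the conjuncts $\ic(t_{2i+1}+m_j)$ then force $m_1$ and $m_2$ (hence $r\le m_1$ and $s\le m_2$) into different residual domains of $\Gamma$ --- a Jordan-curve separation argument rather than a Kuratowski-subgraph one. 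Equally important is the direction you call ``the technical challenge'' and do not address: one must show that all non-contact literals of $\ti{\phi}_\fW$ (including the implicit ones from 3-regions) can simultaneously be replaced by such gadgets while remaining satisfiable by the polygonal witness. The paper verifies this by noting that the relevant polygons are quasi-bounded, splitting regions into parts with connected complements where necessary (Lemma~\ref{lma:Cci2Bci}), and using Newman's separation lemma (Lemma~\ref{lma:Newman}) to obtain a separating Jordan curve that is then thickened into the annular frame $t_0,\dots,t_5$. Without this construction and verification, the r.e.-hardness of $\Sat(\cBci,\RC(\R^2))$ and $\Sat(\cBci,\RCP(\R^2))$ --- one quarter of the theorem --- is not established by your argument.
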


The complexity of $\Sat(\mathcal{L},\RC(\R^3))$ remains open for the
languages $\mathcal{L}\in \{\cBc, \cBCci, \cBCc\}$. However, as we
shall see in the next section, for $\cBci$ it drops dramatically.



\section{\cBci{} in 3D}\label{sec:3d}

In this section, we consider the complexity of satisfying
\cBci-constraints by polyhedra and regular closed sets in
three-dimensional Euclidean space. Our analysis rests on an important
connection between geometrical and graph-theoretic interpretations. We
begin by briefly discussing the results of~\cite{ijcai:kp-hwz10} for
the {\em polyhedral} case.

Recall that every partial order $(W,R)$, where $R$ is a transitive and
reflexive relation on $W$, can be regarded as a topological space by
taking $X \subseteq W$ to be open just in case $x \in X$ and $xRy$
imply $y\in X$. Such topologies are called \emph{Aleksandrov spaces}.
If $(W,R)$ contains no proper paths of length greater than 2, we call
$(W,R)$ a \emph{quasi-saw} (Fig.~\ref{fig:broom}).  If, in
addition, no $x \in W$ has more than two proper $R$-successors, we
call $(W,R)$ a \emph{$2$-quasi-saw}.  The properties of 2-quasi-saws
we need are as follows~\cite{ijcai:kp-hwz10}:
\begin{itemize}\itemsep=0pt
\item[--] satisfiability of $\cBc$-formulas in arbitrary topological
  spaces coincides with satisfiability in 2-quasi-saws, and is
  \ExpTime-complete;

\item[--] $X \subseteq W$ is connected in a 2-quasi-saw $(W,R)$ iff it is interior-connected in $(W,R)$.
\end{itemize}
The following construction lets us apply these results to the problem
$\Sat(\cBci,\RCP(\R^3))$.  Say that a \emph{connected partition} in
$\RCP(\R^3)$ is a tuple $X_1,\dots,X_k$ of non-empty polyhedra having
connected and pairwise disjoint interiors, which sum to the entire
space $\R^3$. The \emph{neighbourhood graph} $(V,E)$ of this partition
has vertices $V = \{X_1, \ldots, X_k\}$ and edges $E = \{\{X_i, X_j\}
\mid i \ne j \text{ and } \ti{(X_i + X_j)} \text{ is connected}\}$
(Fig.~\ref{fig:conn-part}).
\begin{figure}[h]
\begin{center}
\begin{tikzpicture}[point/.style={circle,draw=black,minimum size=1mm,inner sep=0pt},scale=0.2mm]
\filldraw[fill=Gray!20] (0,-2,-3) -- (0,2,-3) -- (0,2,2) -- (0,-2,2) --   cycle;
\filldraw[fill=Gray!60,fill opacity=0.5] (0,0,0) -- (3,0,0) -- (3,-2,-3) -- (0,-2,-3) --   cycle;
\filldraw[fill=Gray!10,fill opacity=0.5] (0,0,-3) -- (0,0,2) -- (3,0,2) -- (3,0,-3) --   cycle;
\filldraw[fill=white,fill opacity=0.5] (0,0,0) -- (3,0,0) -- (3,2,0) -- (0,2,0) --   cycle;
\begin{scope}[dashed,draw=white]
\clip (0,0,0) -- (3,0,0) -- (3,2,0) -- (0,2,0) --   cycle;
\draw (0,-2,-3) -- (0,2,-3);
\draw (0,0,2) -- (0,0,-3) -- (3,0,-3);
\end{scope}
\begin{scope}[dashed,draw=white]
\clip (0,0,-3) -- (0,0,2) -- (3,0,2) -- (3,0,-3) --   cycle;
\draw (0,-2,-3) -- (0,2,-3);
\draw (0,-2,-3) -- (0,0,0);
\end{scope}
\filldraw[dashed,fill=Gray,fill opacity=0.5] (0,0,-1.5) -- (2,0,0) -- (0,1.5,0)  --   cycle;
\draw (2,0,0) -- (0,1.5,0);
\node (1) at (-1.5,0,0) {\small $X_1$};
\node (2) at (1.8,-1,1.5) {\small $X_2$};
\node (3) at (3,-1,-2.5) {\small $X_3$};
\node (4) at (1.8,1.5,-2.5) {\small $X_4$};
\node (5) at (2.5,.5,2) {\small $X_5$};
\node (6) at (0.3,0.3,-0.3) {\small $X_6$};
\begin{scope}[xshift=80mm]
\node [label=left:{\small $X_1$}] (v1) at (-1.5,0,0) [point] {};
\node [label=right:{\small $X_2$}](v2) at (1.8,-1.5,1.5)[point] {};
\node [label=right:{\small $X_3$}](v3) at (1.8,-1.5,-2.5)[point] {};
\node [label=right:{\small $X_4$}](v4) at (1.8,1.5,-2.5)[point] {};
\node [label=right:{\small $X_5$}](v5) at (1.8,1.5,1.5)[point] {};
\node [label=above:{\small $X_6$}] (v6) at (0,0.8,0)[point] {};
\draw (v1) -- (v2);
\draw (v2) -- (v3);
v\draw (v1) -- (v3);
\draw (v3) -- (v4);
\draw (v2) -- (v5);
\draw (v1) to [bend right, looseness=0.3] (v5);
\draw (v1) to [bend left, looseness=1.5] (v4);
\draw (v3) to [bend left, looseness=0.3] (v6);
\draw (v4) -- (v5);
\draw (v1) -- (v6);
\draw (v4) -- (v6);
\draw (v5) -- (v6);
\end{scope}
\end{tikzpicture}
\end{center}
\vspace*{-2mm}
\caption{A connected partition and its neighbourhood graph.}\label{fig:conn-part}
\end{figure}
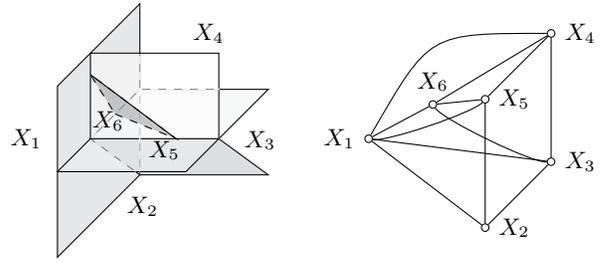
One can show that {\em every} 
connected graph is the neighbourhood graph of
some connected partition in $\RCP(\R^3)$. Furthermore, every
neighbourhood graph $(V,E)$ gives rise to a 2-quasi-saw, namely, $(W_0
\cup W_1, R)$, where $W_0 = V$, $W_1 = \{z_{x,y} \mid \{x,y\} \in
E\}$, and $R$ is the reflexive closure of $\{(z_{x,y}, x), (z_{x,y},
y) \mid \{x,y\} \in E \}$. From this, we see that ({\em i}) a
$\cBci$-formula $\varphi$ is satisfiable over $\RCP(\R^3)$ iff ({\em
  ii}) $\varphi$ is satisfiable over a connected $2$-quasi-saw iff ({\em
  iii}) the $\cBc$-formula $\phi^{\bullet}$, obtained from $\phi$ by
replacing every occurrence of $\ic$ with $c$, is satisfiable over a
connected 2-quasi-saw. Thus, $\Sat(\cBci, \RCP(\R^3))$ is
\ExpTime-complete.

The picture changes if we allow variables to range over $\RC(\R^3)$
rather than $\RCP(\R^3)$. Note first that the $\cBci$-formula
\eqref{eq:wiggly} is not satisfiable over 2-quasi-saws, but has a
quasi-saw model as in Fig.~\ref{fig:broom}.
\begin{figure}[ht]
\centering\begin{tikzpicture}[>=latex, point/.style={circle,draw=black,minimum size=1mm,inner sep=0pt}]
\node (r1) at (2,0.8)  [point,label=left:{$x_1$}] {};
\node (r2) at (4,0.8)  [point,label=right:{$x_2$}] {};
\node (r3) at (6,0.8)  [point,label=right:{$x_3$}] {};
\node (r) at (4,0)  [point,label=below:{$z$}] {};
\draw[->] (r) to node [below] {\scriptsize $R$} (r1);
\draw[->] (r) to node [right] {\scriptsize $R$} (r2);
\draw[->] (r) to node [below] {\scriptsize $R$} (r3);
\node at (8,0) {\small $W_1$ = depth 1};
\node at (8,0.8) {\small $W_0$ = depth 0};
\end{tikzpicture}
\vspace*{-3mm}
\caption{A quasi-saw model $\mathfrak{I}$ of~\eqref{eq:wiggly}: $r_i^\mathfrak{I} = \{x_i,z\}$.}
\label{fig:broom}
\end{figure}
Some extra geometrical work will show now that ({\em iv}) a
$\cBci$-formula is satisfiable over $\RC(\R^3)$ iff ({\em v}) it is
satisfiable over a connected quasi-saw.  And as shown
in~\cite{ijcai:kp-hwz10}, satisfiability of $\cBci$-formulas in
connected spaces coincides with satisfiability over connected
quasi-saws, and is \NP-complete.
\begin{theorem}\label{theo:BciRCR3}
The problem $\Sat(\cBci,\RC(\R^3))$ is \NP-complete.
\end{theorem}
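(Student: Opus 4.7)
The plan is to prove NP-completeness by reducing to the known quasi-saw results cited at the end of the section. NP-hardness is immediate: $\R^3$ is a connected topological space, and satisfiability of $\cBci$-formulas in connected spaces is NP-hard. For the matching upper bound, the key is to establish the equivalence between (iv) and (v) asserted in the excerpt, namely that a $\cBci$-formula $\phi$ is satisfiable over $\RC(\R^3)$ iff it is satisfiable over a connected quasi-saw. Since satisfiability over connected quasi-saws is NP-complete (with a polynomial-size model property), this equivalence immediately yields membership in NP.

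For the direction (v) $\Rightarrow$ (iv), I would give a geometric realization. Given a connected quasi-saw $(W_0 \cup W_1, R)$ with a satisfying valuation, place each depth-0 point $x \in W_0$ as a pairwise-disjoint tame closed ball $P_x \subset \R^3$, arranged so that the $P_x$ all share a designated accumulation point $p^*$. For each depth-1 point $z \in W_1$ with $R$-successors $x_{i_1}, \dots, x_{i_m}$, attach a non-tame "wiggly" connector $A_z$ whose interior is disjoint from every $P_x$, but whose closure reaches each $P_{i_j}$ through a spiralling accumulation (generalising Fig.~\ref{fig:wiggly} to arbitrary fan-out $m$; crucially, $\R^3$ has enough room to fit all these without spurious intersections, unlike $\RCP(\R^2)$). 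The interpretation of a variable $r$ is then the regular closure of $\bigcup \{P_x \mid x \in r^{\mathfrak{J}}\} \cup \bigcup \{A_z \mid z \in r^{\mathfrak{J}}\}$. The verification reduces to checking that $\ic(\tau)$ in the quasi-saw matches interior-connectedness of $\tau^{\mathfrak{I}}$ in $\R^3$, which follows by induction on the interior-connected components: each connected "cluster" in the quasi-saw (depth-0 points glued by depth-1 points) becomes a single interior-connected piece in the geometric realization, and vice versa.

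For the direction (iv) $\Rightarrow$ (v), I would abstract a geometric model to a combinatorial one. Given $\mathfrak{I}$ satisfying $\phi$ over $\RC(\R^3)$, let $X_1, \dots, X_N$ be the atoms of the Boolean algebra generated by $\{r_i^{\mathfrak{I}}\}$. Define an equivalence relation on the interior-connected components of the $X_\alpha$, classifying two components as equivalent if they are indistinguishable with respect to every $\ic$-subterm of $\phi$ (this uses only finitely many information bits per component, yielding finitely many classes). The depth-0 nodes $W_0$ of the quasi-saw are the equivalence classes; the depth-1 nodes $W_1$ encode non-trivial "meeting patterns" — finite subsets of $W_0$ that arise as the set of classes incident to some common limit locus in $\R^3$. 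Set $R = \{(z, x) \mid x \in z\}$ and interpret each variable in the obvious way. Verification that this quasi-saw satisfies $\phi$ amounts to showing that $\ic$ is faithfully preserved under the abstraction, for which Lemma~\ref{lma:ourNewman}-style boundary arguments are invoked to relate Euclidean interior-connectedness to its quasi-saw counterpart.

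The main obstacle is the (iv) $\Rightarrow$ (v) direction, and specifically the step from the (potentially infinitely many) connected components guaranteed by Theorem~\ref{theo:inftyCc} down to a polynomial-size quasi-saw. The equivalence classification must simultaneously (a) collapse arbitrary component multiplicity, (b) preserve every $\ic$-atomic assertion of $\phi$, and (c) not accidentally merge components that are separated by contact constraints. Executing this while bounding the quasi-saw's size polynomially in $|\phi|$ is the delicate combinatorial heart of the argument; the (v) $\Rightarrow$ (iv) construction is conceptually similar to the 2-quasi-saw realization used for $\RCP(\R^3)$, extended with non-tame attachments to accommodate unbounded fan-out.
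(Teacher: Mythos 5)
Your division of labour is inverted relative to what the theorem actually requires, and the direction you do construct in detail is flawed. First, the direction you call the ``delicate combinatorial heart'' -- from satisfiability over $\RC(\R^3)$ to satisfiability over a connected quasi-saw -- needs no new argument: $\R^3$ is a connected topological space, and the cited result of Kontchakov et al.\ already gives that $\cBci$-satisfiability over connected spaces coincides with satisfiability over finite connected quasi-saws and is in \NP. Your sketched abstraction (equivalence classes of interior-connected components, ``meeting patterns'' at a ``common limit locus'') is therefore unnecessary, and as written it is not a proof: regions in $\RC(\R^3)$ may have infinitely many components, the limit-locus notion is undefined, and no polynomial bound is derived. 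The entire burden of the theorem is the converse direction, realizing a quasi-saw model geometrically, which is the only part the paper proves.

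Your realization of that direction does not work as stated. Interpreting $r$ as the regular closure of $\bigcup\{P_x \mid x \in r\} \cup \bigcup\{A_z \mid z \in r\}$ is not a Boolean-respecting embedding: finitely many balls and connectors cover only a proper subset of $\R^3$, so $1$, complements and term equalities are not preserved (a conjunct $-r = s$ true in the quasi-saw fails in $\R^3$, since $-(r^{\mathfrak I})$ contains all the uncovered space). Worse, assigning each depth-1 point $z$ its own piece $A_z$ breaks the $\ic$-correspondence: if $A_z$ is a non-empty regular closed set it has non-empty interior, and a regular closed $X$ containing $z$ but not all of its $R$-successors (such sets exist and may be values of variables) then acquires the extra interior component $\ti{A_z}$, disjoint by your own design from the $\ti{P_x}$, so $\ti{(X^{\mathfrak I})}$ is disconnected where $\ti{X}$ is connected; if instead $A_z$ has empty interior it vanishes under regular closure and fails to glue the balls when \emph{all} successors of $z$ are present. (Also, finitely many pairwise disjoint closed balls cannot share an accumulation point.) The paper sidesteps both problems by assigning solid regions only to depth-0 points, building them into a connected partition $\{B_x\}$ of all of $\R^3$, and realizing each depth-1 point $z$ not as a region but as a boundary-sharing pattern: an open ball $D_z$ is filled by the rational-point/rod construction of Fig.~\ref{fig:apollonian} so that every point of $D_z$ is either interior to some $B_x$ with $zRx$ or on the boundary of all such $B_x$. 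Then $f(X) = \bigcup\{B_x \mid x \in X \cap W_0\}$ is a Boolean embedding and $\ti{(f(X))}$ consists of the $\ti{B_x}$ together with exactly those $D_z$ whose successor set lies in $X$, which is what makes interior-connectedness transfer in both directions.
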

\begin{proof}
From the preceding discussion, it suffices to show that ({\em v})
implies ({\em iv}) for any $\cBci$-formula $\phi$. So suppose
$\mathfrak A \models \varphi$, with $\mathfrak A$ based on a finite
connected quasi-saw $(W_0\cup W_1,R)$, where $W_i$ contains all points
of depth $i \in \{0,1\}$ (Fig.~\ref{fig:broom}).  Without loss of
generality we will assume that there is a special point $z_0$ of depth
1 such that $z_0 R x$ for all $x$ of depth 0.  We show how
$\mathfrak{A}$ can be embedded into $\RC(\R^3)$.

Take pairwise disjoint \emph{closed} balls $B^1_x$, for $x$ of depth
0, and pairwise disjoint \emph{open} balls $D_z$, for all $z$ of depth
1 except $z_0$ (we assume the $D_z$ are disjoint from the
$B^1_x$). Let $D_{z_0}$ be the closure of the complement of all
$B_x^1$ and $D_z$.

We expand the $B^1_x$ to sets $B_x$ in such a way that
\begin{itemize}\itemsep=0pt
\item[(A)] the $B_x$ form a connected partition in $\RC(\R^3)$, that
  is, they are regular closed and sum up to $\R^3$, and their
  interiors are non-empty, connected and pairwise disjoint;

\item[(B)] every point in $D_z$ is either
in the interior of some $B_x$ with $zRx$, or on the boundary of \emph{all} of the $B_x$ with $zRx$.
\end{itemize}
The required $B_x$ are constructed as follows.
Let $q_1,
q_2, \ldots$ be an enumeration of all the points in the interiors of $D_z$ with \emph{rational} coordinates.
For $x\in W_0$, we set $B_x$ to be the closure of the infinite union
$\bigcup_{k=1}^\infty \ti{(B_x^k)}$, where the regular closed sets
$B_x^k$ are defined inductively as follows (Fig.~\ref{fig:apollonian}).
Assuming that the $B^k_x$ are defined, let $q_i$ be the first point in
the list $q_1, q_2, \ldots$ that is not in any $B^k_x$ yet. So, $q_i$ is in the interior of some $D_z$. Take an open ball
$C_{q_i}$ in the interior of $D_z$ centred in $q_i$ and
disjoint from the $B^k_x$. For each $x\in W_0$ with $zRx$, expand
$B_x^k$ by a closed ball in $C_{q_i}$ and a closed `rod' connecting it
to $B_x^1$ in such a way that the ball and the rod are disjoint from
the rest of the $B^k_x$; the result is denoted by $B^{k+1}_x$.
\begin{figure}[h]
\begin{center}
\begin{tikzpicture}[clball/.style={circle,draw=black,minimum size=4mm,inner sep=0pt},
opball/.style={circle,dashed,draw=black,minimum size=25mm,inner sep=0pt}]
\node [label=left:{\small $B_{x_1}$}] (x1) at (0,-3)[clball,fill=Gray!20] {};
\node [label=right:{\small $B_{x_2}$}] (x2) at (6,-1.9)[clball,fill=Gray] {};
\node [label=right:{\small $B_{x_3}$}] (x3) at (5.5,-3.5)[clball,fill=Gray!50] {};
%
\node [label=below left:{\small $D_{z_1}$}] (z1) at (2.8,-2.8)[opball,minimum size=24mm] {};
\node [label=below:{\small $C_q$}](c) at (2.5,-2.5)[opball,minimum size=15mm] {};
\node [label=below:{\small $q$}](q) at (2.5,-2.5)[circle,inner sep=0pt,minimum size=1,draw=black] {};
\node (xc1) at (2.1,-2.7)[clball,fill=Gray!20,minimum size=5mm] {};
\node (xc2) at (2.5,-2.1)[clball,fill=Gray,minimum size=5mm] {};
\node (xc3) at (2.9,-2.7)[clball,fill=Gray!50,minimum size=5mm] {};
\draw[double=Gray!20,double distance=2pt] (xc1) to [bend left, looseness=0.5] (x1);
\draw[double=Gray,double distance=2pt] (xc2) to [bend left, looseness=0.5] (x2);
\draw[double=Gray!50,double distance=2pt] (xc3) to [bend right, looseness=0.5] (x3);
\node (cp) at (3,-2.2)[opball,minimum size=5mm] {};
\node (xcp1) at (2.9,-2.3)[clball,fill=Gray!20,minimum size=1.5mm] {};
\node (xcp2) at (3,-2.1)[clball,fill=Gray,minimum size=1.5mm] {};
\node (xcp3) at (3.1,-2.3)[clball,fill=Gray!50,minimum size=1.5mm] {};
\draw[double=Gray!20,double distance=1pt] (xcp1) to [bend right, looseness=0.9] (x1);
\draw[double=Gray,double distance=1pt] (xcp2) to [bend right, looseness=0.5] (x2);
\draw[double=Gray!50,double distance=1pt] (xcp3) to [bend left, looseness=0.5] (x3);
\end{tikzpicture}
\end{center}
\vspace*{-2mm}
\caption{Filling $D_{z_1}$ with $B_{x_i}$, for $z_1 R x_i$, $i = 1,2,3$.}\label{fig:apollonian}
\end{figure}
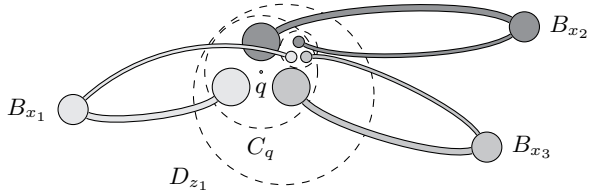
Consider a function $f$ that maps regular closed sets $X \subseteq W$ to
$\RC(\R^3)$ so that $f(X)$ is the union of all $B_x$, for $x$ of depth
$0$ in $X$.  By~(A), $f$ preserves $+$, $\cdot$, $-$, $0$ and $1$.
Define an interpretation $\mathfrak{I}$ over $\RC(\R^3)$ by
$r^\mathfrak{I} = f(r^\mathfrak{A})$. To show that $\mathfrak I
\models \varphi$, it remains to prove that $\ti{X}$ is connected iff
$\ti{(f(X))}$ is connected (details are in Appendix~\ref{sec:Bci3D_C}).
\end{proof}


The remarkably diverse computational behaviour of \cBci{} over
$\RC(\R^3)$, $\RCP(\R^3)$ and $\RCP(\R^2)$ can be explained as
follows. To satisfy a \cBci-formula $\phi$ in $\RC(\R^3)$, it suffices
to find polynomially many points in the regions mentioned in $\phi$
(witnessing non-emptiness or non-internal-connectedness constraints),
and then to `inflate' those points to (possibly internally connected)
regular closed sets using the technique of Fig.~\ref{fig:apollonian}.
By contrast, over $\RCP(\R^3)$, one can write a \cBci-formula
analogous to \eqref{eq:contactTrick} stating that two internally
connected polyhedra do not share a 2D face.  Such
`face-contact' constraints can be used to generate constellations of
exponentially many polyhedra simulating runs of alternating Turing
machines on polynomial tapes, leading to \ExpTime-hardness. Finally,
over $\RCP(\R^2)$, planarity considerations endow \cBci{} with the
extra expressive power required to enforce full non-contact constructs
(not possible in higher dimensions), and thus to encode the PCP as
sketched in Sec.~\ref{sec:undecidability}.

%
%


\section{Conclusion}\label{conclusion}

This paper investigated topological constraint
languages featuring connectedness predicates and Boolean operations on
regions.  Unlike their less expressive cousins, \RCCE{} and \RCCF,
such languages are highly sensitive to the spaces over which they
are interpreted, and exhibit more challenging computational
behaviour. Specifically, we demonstrated that the languages $\cBCc$,
$\cBCci$ and $\cBc$ contain formulas satisfiable over $\RC(\R^n)$, $n
\geq 2$, but only by regions with infinitely many components. Using a
related construction, we proved that the satisfiability problem for
any of $\cBc$, $\cBCc$, $\cBci$ and $\cBCci$, interpreted either over
$\RC(\R^2)$ or over its polygonal subalgebra, $\RCP(\R^2)$, is
\emph{undecidable}. Finally, we showed that the satisfiability problem
for $\cBci$, interpreted over $\RC(\R^3)$, is \NP-complete, which
contrasts with \ExpTime-completeness for $\RCP(\R^3)$.  The complexity
of satisfiability for $\cBc$, $\cBCc$ and $\cBCci$ over $\RC(\R^n)$ or
$\RCP(\R^n)$ for $n \geq 3$ remains open. 
The obtained results rely on certain
distinctive topological properties of Euclidean spaces. 
Thus, for example, the argument of
Sec.~\ref{sec:sensitivity} is based on the property of
Lemma~\ref{lma:ourNewman}, while Sec.~\ref{sec:undecidability}
similarly relies on {\em planarity} considerations. In both cases,
however, the moral is the same: the topological spaces of most
interest for Qualitative Spatial Reasoning exhibit special
characteristics which any topological constraint language able to
express connectedness must take into account.

The results of Sec.~\ref{sec:undecidability} pose a challenge for
Qualitative Spatial Reasoning in the Euclidean plane.  On the one hand, the
relatively low complexity of \RCCE{} over disc-homeomorphs suggests
the possibility of usefully extending the expressive power of \RCCE{}
without compromising computational properties. On the other hand, our
results impose severe limits on any such extension.  We observe,
however, that the constructions used in the proofs depend on a strong
interaction between the connectedness predicates and the Boolean
operations on regular closed sets. We believe that by restricting this
interaction one can obtain non-trivial constraint languages with more
acceptable complexity. For example, the extension of \RCCE{} with
connectedness constraints is still in \NP{} for both $\RC(\R^2)$ and
$\RCP(\R^2)$~\cite{ijcai:kphz10}.

\smallskip
\noindent
{\bf Acknowledgments.}\ \ This work was partially supported by the U.K. EPSRC grants EP/E034942/1 and EP/E035248/1.


\bibliographystyle{named}

\cleardoublepage

\appendix

\section{Regions with infinitely many components}
\label{sec:sensitivityA}
First we give detailed proofs of Lemma~\ref{lma:ourNewman} and
Theorem~\ref{theo:inftyCc}.
\begin{theorem}[\cite{ijcai:Newman64}]\label{thm:NewmanBnd} 
If $X$ is a connected subset of $\R^n$, then every connected component
of $\R^n\setminus X$ has a connected boundary.
\end{theorem}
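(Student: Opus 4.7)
The plan is to apply a reduced Mayer--Vietoris argument, formulated in \v{C}ech cohomology, to the decomposition of the one-point compactification $S^n$ induced by a component of the complement. First I would reduce to the case when $X$ is closed: replacing $X$ by $\overline{X}$ preserves connectedness, and one checks that a component $C$ of $\R^n\setminus X$ and the component $C'\subseteq C$ of $\R^n\setminus\overline{X}$ satisfy $\partial C=\partial C'$. I would then pass to $S^n$ by adjoining $\infty$, appending it to $X$ if $X$ is unbounded; boundaries of bounded components are unchanged, and unbounded components become components of $S^n\setminus X$ containing $\infty$ with unchanged boundary. It thus suffices to prove: if $X$ is a closed connected subset of $S^n$ and $C$ is a component of $S^n\setminus X$, then $\partial C$ is connected.

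By local connectedness of $S^n$, each component of the open set $S^n\setminus X$ is open, and in particular so is $C$. A point of $\partial C=\overline{C}\setminus C$ cannot lie in another (open) component of $S^n\setminus X$, so $\partial C\subseteq X$. Set $A=\overline{C}$ and $B=S^n\setminus C$; then $A\cup B=S^n$ and $A\cap B=\partial C$. The crux of the argument is to show that both $A$ and $B$ are connected. For $A$ this is automatic (closure of the connected set $C$). For $B$ one observes that $B=X\cup\bigcup_{\beta}\overline{C_\beta}$, where $C_\beta$ ranges over the components of $S^n\setminus X$ other than $C$; each $\overline{C_\beta}$ is connected and meets $X$ in the non-empty set $\partial C_\beta$, so the entire union with the connected set $X$ is connected.

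The reduced Mayer--Vietoris sequence in \v{C}ech cohomology now gives
\[
\tilde H^0(S^n)\to\tilde H^0(A)\oplus\tilde H^0(B)\to\tilde H^0(\partial C)\to\tilde H^1(S^n).
\]
For $n\ge 2$ we have $\tilde H^1(S^n)=0$, and the connectedness of $A$ and $B$ gives $\tilde H^0(A)=\tilde H^0(B)=0$; hence $\tilde H^0(\partial C)=0$, i.e., $\partial C$ is connected. The one-dimensional case is elementary, as connected subsets of $\R$ (or of $S^1$) have complementary components with at most one boundary point.

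The main obstacle is the rigorous use of Mayer--Vietoris for the closed, possibly topologically wild, sets $A$ and $B$: singular homology does not directly apply, which is why I would work in \v{C}ech cohomology (alternatively, thicken $A$ and $B$ to open neighbourhoods $A_\varepsilon,B_\varepsilon$ and pass to the limit). A more elementary route would assume $\partial C=F_1\sqcup F_2$ with $F_1,F_2$ disjoint, non-empty and closed, separate them by disjoint open neighbourhoods $U\supseteq F_1$ and $V\supseteq F_2$, and construct an explicit separation of $B$ contradicting the connectedness established above; this avoids cohomology but the topological bookkeeping is appreciably more intricate.
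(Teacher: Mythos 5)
Your overall strategy — cover the sphere by two closed connected sets meeting exactly in $\partial C$ and use a closed-cover Mayer--Vietoris in \v{C}ech/Alexander--Spanier cohomology with $\tilde H^1(S^n)=0$ — is a standard and legitimate route (it is essentially the proof that $S^n$ is unicoherent), and your verification that $\overline{C}$ and $S^n\setminus C$ are connected is correct when $X$ is closed. (For comparison: the paper does not prove this theorem at all; it imports it from Newman's book, whose own treatment of the planar case is elementary point-set topology, not cohomological.) However, your write-up has two genuine gaps. The first is the reduction to closed $X$: the claim that a component $C$ of $\R^n\setminus X$ contains a (unique) component $C'$ of $\R^n\setminus\overline{X}$ with $\partial C=\partial C'$ is false. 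Take $X=\{(x,y)\in\R^2: y\neq 0\}\cup\{(0,0)\}$: this is connected, $\overline{X}=\R^2$, and $C=\{(x,0): x>0\}$ is a component of $\R^2\setminus X$ with $\partial C=\{(x,0):x\geq 0\}$, while $\R^2\setminus\overline{X}$ is empty, so there is no $C'$ at all; with $X$ a circle plus a diameter with its midpoint removed, one gets a component $C$ containing \emph{two} components of $\R^2\setminus\overline{X}$, whose boundaries both differ from $\partial C$. Since the statement is about arbitrary connected subsets, this step must be replaced. A clean repair stays within your framework: take $A=\overline{C}$ and $B$ the complement of the \emph{interior} of $C$; then $A\cap B=\partial C$ exactly, $A\cup B$ is the whole space, and $B=\overline{X}\cup\bigcup_\beta\overline{C_\beta}$ (the $C_\beta$ being the other components) is connected because each $\overline{C_\beta}$ has non-empty boundary contained in $\overline{X}$ — no reduction to closed $X$ is needed.

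The second gap is the compactification bookkeeping. If $X$ and the component $C$ are both unbounded, then after adjoining $\infty$ to $X$ the boundary of $C$ in $S^n$ is $\partial_{\R^n}C\cup\{\infty\}$, not $\partial_{\R^n}C$ ($\infty$ lies in the closure of $C$ but cannot lie in its interior, since $X$ is unbounded and disjoint from $C$). Connectedness of $\partial_{\R^n}C\cup\{\infty\}$ does not imply connectedness of $\partial_{\R^n}C$: two parallel lines become connected in $S^2$ after adding $\infty$. So, as written, your argument proves the wrong statement precisely for unbounded components of the complement of an unbounded $X$ (e.g.\ $X$ a line), and there is no cheap patch of the form ``remove $\infty$ again,'' since ruling out $\infty$ as a cut point is essentially the theorem itself. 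The repair is to run the same closed-cover Mayer--Vietoris directly in $\R^n$: Alexander--Spanier/\v{C}ech cohomology has the required exactness for closed covers of a paracompact space, $\tilde H^1(\R^n)=0$, and reduced $\check H^0$ still detects connectedness. Your remaining steps — $\partial C\subseteq X$ in the closed case, connectedness of $\overline{C}$ and of $S^n\setminus C$, and the elementary $n=1$ case — are fine.
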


\begin{swetheorem}{Lemma~\ref{lma:ourNewman}}
If $X \in \RC(\R^n)$ is connected, then every component of $-X$ has a
connected boundary.
\end{swetheorem}
\begin{proof}
Let $Y$ be a connected component of $-X$.  Suppose that the boundary
$\beta$ of $Y$ is not connected, and let $\beta_1$ and $\beta_2$ be
two sets separating $\beta$: $\beta_1$ and $\beta_2$ are disjoint,
non-empty, closed subsets of $\beta$ whose union is $\beta$. We will
show that $Y$ is not connected.  We have $Y=\tc{(\bigcup_{i\in
    I}Z_i)}$, for some index set $I$, where the $Z_i$ are distinct
connected components of $\R^n\setminus X$.  By
Theorem~\ref{thm:NewmanBnd},`the boundaries $\alpha_i$ of $Z_i$ are
connected subsets of $\beta$, for each $i\in I$. Hence, either
$\alpha_i\subseteq\beta_1$ or $\alpha_i\subseteq\beta_2$, for
otherwise $\alpha_i\cap\beta_1$ and $\alpha_i\cap\beta_2$ would
separate $\alpha_i$. Let $I_j=\{i\in I\mid \alpha_i\subseteq
\beta_j\}$ and $Y_j=\tc{(\bigcup_{i\in I_j}Z_i)}$, for $j=1,2$.
Clearly, $Y_1$ and $Y_2$ are closed, and $Y=Y_1\cup Y_2$. Hence, it
suffices to show that $Y_1$ and $Y_2$ are disjoint. We know that, for
$j=1,2$,
$$
Y_j=\tc{(\bigcup_{i\in I_j}\alpha_i)}\cup\bigcup_{i\in I_j}Z_i.
$$
Clearly, $\bigcup_{i\in I_1}Z_i$ and $\bigcup_{i\in I_2}Z_i$ are disjoint. We also know that 
$\tc{(\bigcup_{i\in I_1}\alpha_i)}$ and $\tc{(\bigcup_{i\in I_2}\alpha_i)}$ are disjoint, as 
subsets of $\beta_1$ and $\beta_2$, respectively. Finally, $\tc{(\bigcup_{i\in I_j}\alpha_i)}$ 
and $\bigcup_{i\in I_k}Z_i$ are disjoint, for $j,k=1,2$, as subsets of the boundary and the 
interior of $Y$, respectively. So, $Y$ is not connected, which is a contradiction.
\end{proof}

\begin{swetheorem}{Theorem~\ref{theo:inftyCc}}
	If $\mathfrak I$ is an interpretation over $\RC(\R^n)$ such that
	$\mathfrak I \models \phi_\infty$, then every $d_i^\mathfrak{I}$
	has infinitely many components.
\end{swetheorem}
\begin{proof}
To simplify presentation, we ignore the difference  between variables and the regions 
they stand for, writing, for example, $a_i$ instead of $a_i^\mathfrak{I}$. We also set 
$b_i=d_i\cdot(-a_i)$. We construct a sequence of disjoint components $X_i$ of $d_{\md{i}}$ 
and open sets $V_i$ connecting $X_i$ to $X_{i+1}$ (Fig.~\ref{fig:InfCmpConstr}). By the first 
conjunct of~\eqref{eq:basic-regions}, let $X_0$ be a component of $d_0$ containing points in 
$a_0$. Suppose $X_i$ has been constructed, for $i \geq 0$. By~\eqref{eq:InfContact} 
and~\eqref{notC}, there exists a point $q \in X_i \cap a_{\md{i+1}}$. Since 
$q\notin b_{\md{i+1}}\cup d_{\md{i+2}}\cup d_{\md{i+3}}$, and because $\R^n$ is locally connected, 
there exists a connected neighbourhood $V_i$ of $q$ such that 
$V_i\cap (b_{\md{i+1}}\cup d_{\md{i+2}}\cup d_{\md{i+3}})=\emptyset$, and so, 
by~\eqref{eq:InfPart1}, $V_i\subseteq d_{\md{i}}+a_{\md{i+1}}$. Further, since $q\in a_{\md{i+1}}$,
$V_i\cap \ti{a_{\md{i+1}}}\neq \emptyset$. Take $X_{i+1}'$ to be a component of $a_{\md{i+1}}$ that 
intersects $V_i$ and $X_{i+1}$ the component of $d_{\md{i+1}}$ containing $X_{i+1}'$.
	
To see that the $X_i$ are distinct, let $S_{i+1}$ and $R_{i+1}$ be the
components of $-X_{i+1}$ containing $X_i$ and $X_{i+2}$,
respectively. It suffices to show $S_{i+1} \subseteq\ti{S}_{i+2}$.
Note that the connected set $V_i$ must intersect $\delta S_{i+1}$.
Evidently, $\delta S_{i+1} \subseteq X_{i+1} \subseteq d_{\md{i+1}}$.
Also, $\delta S_{i+1} \subseteq -X_{i+1}$; hence,
by~\eqref{eq:InfPart1} and~\eqref{eq:InfNTriv1}, $\delta S_{i+1}
\subseteq d_{i} \cup d_{\md{i+2}}$.  By Lemma~\ref{lma:ourNewman},
$\delta S_{i+1}$ is connected, and therefore, by~\eqref{eq:InfNTriv1},
is entirely contained either in $d_{\md{i}}$ or in
$d_{\md{i+2}}$. Since $V_i \cap \delta S_{i+1} \neq \emptyset$ and
$V_i \cap d_{\md{i+2}} = \emptyset$, we have $\delta S_{i+1} \not
\subseteq d_{\md{i+2}}$, so $\delta S_{i+1} \subseteq d_i$. Similarly,
$\delta R_{i+1}\subseteq d_{i+2}$.  By~\eqref{eq:InfNTriv1}, then,
$\delta S_{i+1} \cap \delta R_{i+1} = \emptyset$, and since $S_{i+1}$
and $R_{i+1}$ are components of the same set, they are
disjoint. Hence, $S_{i+1}\subseteq \ti{(-R_{i+1})}$, and since
$X_{i+2}\subseteq R_{i+1}$, also $S_{i+1}\subseteq
\ti{(-X_{i+2})}$. So, $S_{i+1}$ lies in the interior of
a component of $-X_{i+2}$, and since $\delta S_{i+1}\subseteq
X_{i+1}\subseteq S_{i+2}$, that component must be $S_{i+2}$.
\end{proof}

Now we extend the result to the language $\cBCci$. All occurrences of
$c$ in $\phi_\infty$ have positive polarity.  Let $\ti{\phi}_\infty$
be the result of replacing them with the predicate $\ic$. In the
configuration of Fig.~\ref{fig:InfCmpSat}, all connected regions
mentioned in $\phi_\infty$ are in fact interior-connected; hence
$\ti{\phi}_\infty$ is satisfiable over $\RC(\R^n)$. Since
interior-connectedness implies connectedness, $\ti{\phi}_\infty$
entails $\phi_\infty$ in a common extension of $\cBCci$ and
$\cBCc$. Hence:
\begin{swetheorem}{Corollary~\ref{cor:inftyCci}}
There is a $\cBCci$-formula satisfiable over $\RC(\R^n)$, $n \geq 2$,
but not by regions with finitely many components.
\end{swetheorem}
\begin{figure}[h]
\begin{center}
\begin{tikzpicture}
\scriptsize{
	\coordinate (Px) at (-3.8,0);
	\coordinate (Qx) at (-3.3,0);
	\coordinate (Py) at (0,1.3);
	\coordinate (Pty) at (0.23,-.9);
	\coordinate (Pby) at (0,1);
	\foreach \i/\a/\b in {2/20/0,1/10/0,0/0/30,3/0/0,2/20/0,1/10/0,0/0/30}
	{
		\draw[fill=black!\a] ($(Px)-(Py)$) rectangle ($(Py)-(Px)$);		
		\draw[fill=black!\b] ($(Px)-(Pby)$) rectangle ($(Qx)+(Pby)$);
		\node at ($(Px)+(Pty)$) {$a_\i$};
		\node at ($(-0.2,0)-(Px)$) {$b_\i$};		
		
		\coordinate (Px) at ($(Px)+(.5,0)$);
		\coordinate (Py) at ($(Py)-(0,0.15)$);
		\coordinate (Pty) at ($(Pty)+(0,0.15)$);
		\coordinate (Pby) at ($(Pby)-(0,0.15)$);
		\coordinate (Qx) at ($(Qx)+(.5,0)$);
	}
	\draw (4.3,0.5)--(.5,0.25)--(.5,0.2)--(4.3,0.2);
	\node at (4.1,.35) {$t$};
	\node at (4.1,0) {$\ldots$};
	\draw (-4.3,0.25)--(-.75,0)--(-.75,-0.05)--(-4.3,-0.05);
	\node at (-4.1,.125) {$s$};
}
\end{tikzpicture}	
\end{center}
	\caption{Satisfying $\phi_{\lnot C}^c(a_0, b_1,s,t)$ and $\phi_{\lnot C}^c(a_0, b_2,s,t)$.}
	\label{fig:InfCmpElAiBi}	
\end{figure}
To extend Theorem~\ref{theo:inftyCc} to the language $\cBc$, 
notice that all occurrences of $C$ in $\phi_\infty$ are negative.
We shall eliminate these using only the predicate $c$. We use
the fact that, if the sum of two connected regions is not 
connected, then they must be disjoint. Consider the formula 
\begin{align*}
	\phi_{\lnot C}^c(r,s,r',s'):=c(r+r')\land c(s+s') \hspace{2cm}
		\\\hfill\land \lnot c((r+r')+(s+s')).
\end{align*}
Note that $\phi_{\lnot C}^c(r,s,r',s')$ implies $\lnot C(r,s)$.  We
replace $\lnot C(a_i,t)$ with $\phi_{\lnot
  C}^c(a_i,t,a_0+a_1+a_2+a_3,t)$, which is clearly satisfiable by the
regions on Fig.~\ref{fig:InfCmpSat}. Further, we replace $\lnot C(a_i,
b_{\md{i+1}})$ with $\phi_{\lnot C}^c(a_i,b_{\md{i+1}},s,t)$. As shown
on Fig.~\ref{fig:InfCmpElAiBi}, there exists a region $s$ satisfying
this formula.  Instead of dealing with $\lnot C(d_i,d_{i+2})$, we
consider the equivalent:
\begin{align*}
	\lnot C(a_i,b_{\md{i+2}})\land\lnot C(b_i,a_{\md{i+2}})\land \hspace{3cm}\\
	\lnot C(a_i,a_{\md{i+2}})\land\lnot C(b_i,b_{\md{i+2}}).
\end{align*}
We replace $\lnot C(a_i,b_{\md{i+2}})$ by $\phi^c_{\lnot C}(a_i,b_{\md{i+2}},s,t)$,
which is satisfiable by the regions depicted on Fig.~\ref{fig:InfCmpElAiBi}.
We ignore $\lnot C(b_i,a_{\md{i+2}})$, because it is logically equivalent to 
$\lnot C(a_{i},b_{\md{i+2}})$, for different values of $i$. We replace 
$\lnot C(a_i,a_{\md{i+2}})$ by $\phi_{\lnot C}^c(a_i,a_{\md{i+2}},a_i',a_{\md{i+2}}')$, 
which is satisfiable by the regions depicted on Fig.~\ref{fig:InfCmpElAiAi2}. The fourth 
conjunct is then treated symmetrically.
\begin{figure}[h]
\begin{center}
\begin{tikzpicture}
\scriptsize{
	\node at (-5.2,0) {$\ldots$};
	\clip (-5,-2) rectangle (2.5,2);
	\coordinate (Px) at (-4.8,0);
	\coordinate (Qx) at (-4.3,0);
	\coordinate (Py) at (0,1.7);
	\coordinate (Pty) at (0.23,0);
	\coordinate (Pby) at (0,1.4);
	\coordinate (Q0) at (-4.8,-1.4);
	\coordinate (Q2) at (-4.8,1.4);
	
	\foreach \i/\a/\b in {0/0/30,3/0/0,2/0/10,1/0/0,0/0/30,3/0/0,2/0/10,1/0/0,0/0/30}
	{
		\draw[fill=black!\a,draw=gray] ($(Px)-(Py)$) rectangle ($(Py)-(Px)$);		
		\draw[fill=black!\b,draw=gray] ($(Px)-(Pby)$) rectangle ($(Qx)+(Pby)$);
		\node at ($(Px)+(Pty)$) {$a_\i$};
		\node at ($(-0.2,0)-(Px)$) {$b_\i$};		
		\ifnum \i=0
			\draw[fill=black!30]  ($(Q0)+(0,0)$) to[out=-135,in=-45]  
			($(Q0)-(1.5,.6)$)--++(-.2,0) to[out=-60,in=-135] ($(Q0)+(.2,0)$)
			node[very near end, below]{$a_0'$}--cycle;
		\fi
		\ifnum \i=2
			\draw[fill=black!10]  ($(Q2)+(0,0)$) to[out=135,in=45]  
			($(Q2)-(1.5,-.6)$)--++(-.2,0) to[out=60,in=135] ($(Q2)+(.2,0)$)
			node[very near end, above]{$a_2'$}--cycle;
		\fi
		\coordinate (Px) at ($(Px)+(.5,0)$);
		\coordinate (Py) at ($(Py)-(0,0.15)$);
		\coordinate (Q0) at ($(Q0)+(.5,.15)$);
		\coordinate (Q2) at ($(Q2)+(.5,-.15)$);
		\coordinate (Pby) at ($(Pby)-(0,0.15)$);
		\coordinate (Qx) at ($(Qx)+(.5,0)$);
	}
	\draw (4.3,0.5)--(.5,0.25)--(.5,0.2)--(4.3,0.2);
	\node at (4.1,.35) {$t$};
	\node at (4.1,0) {$\ldots$};	
}
\end{tikzpicture}		
\end{center}
	\caption{Satisfying $\phi_{\lnot C}^c(a_0, a_2,a_0', a_2')$.}
	\label{fig:InfCmpElAiAi2}	
\end{figure}
Transforming $\phi_\infty$ in the way just described, we obtain a
$\cBc$-formula $\phi_\infty^c$, which implies $\phi_\infty$ (in the
language $\cBCc$) and which is satisfiable by the arrangement of
$\RC(\R^n)$. Hence, we obtain the following:
\begin{swetheorem}{Corollary~\ref{cor:inftyBc}} 
There is a $\cBc$-formula satisfiable over $\RC(\R^n)$, $n \geq 2$,
but not by regions with finitely many components.
\end{swetheorem}

The only remaining task in this section is to prove Theorem~\ref{theo:inftyBci}.
The construction is similar to the one developed in Sec.~\ref{sec:undecidability},
and as such uses similar techniques. We employ the following notation. 
If $\alpha$ is a Jordan arc, and $p$, $q$ are points on $\alpha$ such that 
$q$ occurs after $p$, we denote by $\alpha[p,q]$ the segment of $\alpha$ from 
$p$ to $q$.
Consider the formula $\stacki(a_1,\ldots, a_n)$ given by:
\begin{align*}
	\bigwedge_{1\leq i<n} \left(\ic(a_i+\cdots+a_n)\land a_i\cdot a_{i+1}=0\right)
	\land \bigwedge_{j-i>1} \lnot C(a_i,a_j)
\end{align*}
This formula allows us to construct sequences of arcs in the following
sense:
\begin{lemma}\label{lma:StackLemmai} 
Suppose that the condition $\stacki(a_1,\ldots,a_n)$ obtains,
$n>1$. Then every point $p_1\in \ti a_1$ can be connected to every
point $p_n\in \ti a_n$ by a Jordan arc
$\alpha=\alpha_1\cdots\alpha_{n-1}$ such that for all $i$ \textup{(}$1\leq i<
n$\textup{)}, each segment $\alpha_i\subseteq \ti{(a_i+a_{i+1})}$ is a
non-degenerate Jordan arc starting at some point $p_i\in\ti a_i$.
\end{lemma}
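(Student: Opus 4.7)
The plan is to proceed by induction on $n \geq 2$. For the base case $n = 2$, the hypothesis $\ic(a_1+a_2)$ makes $\ti{(a_1+a_2)}$ a connected open subset of $\R^2$, hence path-connected; any continuous path from $p_1$ to $p_2$ inside it can be pruned to a Jordan arc $\alpha = \alpha_1$ of the required form, non-degenerate because $\ti a_1 \cap \ti a_2 = \emptyset$ by $a_1 \cdot a_2 = 0$.

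For the inductive step ($n \geq 3$), I would apply the inductive hypothesis to the sub-stack $(a_2,\ldots,a_n)$---which clearly inherits the $\stacki$ property---taking an arbitrary $p_2 \in \ti a_2$ as starting point and the given $p_n$ as endpoint. This produces a Jordan arc $\beta = \beta_2\cdots\beta_{n-1}$ from $p_2$ to $p_n$ with each $\beta_i \subseteq \ti{(a_i+a_{i+1})}$ starting at some $p_i \in \ti a_i$. Setting $\alpha_i := \beta_i$ for $2 \leq i \leq n-1$, the task reduces to constructing a Jordan arc $\alpha_1 \subseteq \ti{(a_1+a_2)}$ from $p_1$ to $p_2$ meeting $\beta$ only at $p_2$; the concatenation $\alpha_1 \beta$ is then the desired arc. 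Two preparatory observations localize this potential intersection. From $a_i \cdot a_{i+1} = 0$ together with $\lnot C(a_i,a_j)$ for $|i-j| \geq 2$, regularity yields $\ti a_i \cap a_j = \emptyset$ whenever $i \neq j$; hence $\ti{(a_1+a_2)} \cap a_j = \emptyset$ for all $j \geq 3$, so $\ti{(a_1+a_2)} \cap \beta_i = \emptyset$ for $i \geq 3$, and a parallel argument gives $\ti{(a_1+a_2)} \cap \ti{(a_2+a_3)} \subseteq \ti a_2$. Thus $\alpha_1$ need only avoid the portion of $\beta_2$ lying in $\ti a_2$.

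The main obstacle is the construction of $\alpha_1$ itself, which rests on planar topology. I would take a small open disk $D$ around $p_2$ inside $\ti{(a_1+a_2)} \cap \ti{(a_2+a_3)} \subseteq \ti a_2$ on which $\beta_2$ appears as a Jordan arc from $p_2$ to a point of $\partial D$. Then $D \setminus (\beta_2 \setminus \{p_2\})$ is topologically a slit disk, still path-connected and with $p_2$ reachable from either side of the slit. Joining $p_1$ to a suitable point of $\partial D$ through $\ti{(a_1+a_2)} \setminus D$---where path-connectivity survives because a Jordan arc does not separate $\R^2$, so removing the closed remainder of $\beta_2$ from the connected open set $\ti{(a_1+a_2)} \setminus D$ still leaves each pair of points connectable---and completing the journey through the slit disk to $p_2$ yields a continuous path avoiding $\beta \setminus \{p_2\}$, which may be pruned to the required Jordan arc $\alpha_1$. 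The delicate part is this final path-avoidance step, especially for possible extra components of $\beta_2 \cap \ti{(a_1+a_2)}$ outside $D$, which must be handled by noting that each such component is a closed sub-arc with both endpoints on $\partial \ti{(a_1+a_2)}$, together with the non-separation property of Jordan arcs in $\R^2$.
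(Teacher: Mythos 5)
Your proposal has a genuine gap, and it occurs at the very first reduction of the inductive step. For $n\geq 3$ the hypothesis $\stacki(a_1,\ldots,a_n)$ supplies $\ic(a_i+\cdots+a_n)$ for the \emph{tail} sums only; it does \emph{not} supply $\ic(a_1+a_2)$, yet your argument treats $\ti{(a_1+a_2)}$ as a connected open set through which $p_1$ can be joined to an \emph{arbitrarily chosen} $p_2\in\ti{a}_2$. This is false in general: take $a_1$ a square sharing a fat edge with $R$, let $a_2=R\cup S$ with $R,S$ disjoint squares, and let $a_3$ share fat edges with both $R$ and $S$ while avoiding $a_1$. All conjuncts of $\stacki(a_1,a_2,a_3)$ hold, but $\ti{(a_1+a_2)}$ has a component $\ti{S}$ unreachable from $p_1$; if the induction hypothesis is invoked first with $p_2\in\ti{S}$, no admissible $\alpha_1$ exists. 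The freedom in choosing the intermediate points $p_i$ is exactly what makes the lemma true, and your proof discards it by fixing $p_2$ (and the whole arc $\beta$) before $\alpha_1$ is constructed.

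The final avoidance step is also unsound as described: non-separation of a Jordan arc in $\R^2$ does not prevent a component of $\beta_2\cap\tc{\ti{(a_1+a_2)}}$ with both endpoints on the boundary of a component $U$ of $\ti{(a_1+a_2)}$ from separating $U$ (a chord separates a disk), so with $\beta$ fixed in advance the point $p_1$ may be cut off from $p_2$. The paper avoids both problems by building the arc \emph{forwards}: at stage $i$ it uses $\ic(a_i+\cdots+a_n)$ to draw an arc from the current point to $p_n$ inside $\ti{(a_i+\cdots+a_n)}$, uses the constraints $a_i\cdot a_{i+1}=0$ and $\lnot C(a_i,a_j)$ ($j-i>1$) to show this new arc can meet the portion already built only in its last segment, and then trims both arcs at the first intersection point; the points $p_i$ emerge from this construction rather than being fixed beforehand. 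Your localization claims (e.g.\ $\ti{(a_1+a_2)}\cap a_j=\emptyset$ for $j\geq 3$ and $\ti{(a_1+a_2)}\cap\ti{(a_2+a_3)}\subseteq\ti{a}_2$) are correct and would be useful in such an argument, but the induction as organized cannot be completed without reordering the choices in essentially the paper's way.
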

\begin{proof}
	By $\ic(a_1+\cdots+a_n)$, let $\alpha_1'\subseteq
        \ti{(a_1+\cdots+a_n)}$ be a Jordan arc connecting $p_1$ to
        $p_n$ (Fig.~\ref{fig:stacki}). By the non-contact constraints,
        $\alpha_1'$ has to contain points in $\ti a_2$. Let $p_2'$ be
        one such point.  For $2\leq i<n$ we suppose $\alpha_1, \ldots,
        \alpha_{i-2}$, $\alpha'_{i-1}$ and $p'_i$ to have been
        defined, and proceed as follows. By $\ic(a_i+\cdots+a_n)$, let
        $\alpha_i''\subseteq \ti{(a_i+\cdots+a_n)}$ be a Jordan arc
        connecting $p_i'$ to $p_n$. By the non-contact constraints,
        $\alpha_i''$ can intersect
        $\alpha_1\cdots\alpha_{i-2}\alpha_{i-1}'$ only in its final
        segment $\alpha_{i-1}'$. Let $p_{i-1}$ be the first point of
        $\alpha_{i-1}'$ lying on $\alpha_i'$; let $\alpha_{i-1}$ be
        the initial segment of $\alpha_{i-1}'$ ending at $p_{i-1}$;
        and let $\alpha_i'$ be the final segment of $\alpha_i''$
        starting at $p_{i-1}$. It remains only to define
        $\alpha_{n-1}$, and to this end, we simply set
        $\alpha_{n-1}:=\alpha_{n-1}'$. To see that $p_i$, $2\leq i<n$,
        are as required, note that $p_i\in
        \alpha_i\cap\alpha_{i-1}$. By the disjoint constraints $p_i$
        must be in $a_i$. If $p_i$ was in $\delta(a_i)$, it would also
        have to be in $\delta(a_{i-1})$ and $\delta(a_{i+1})$, which
        is forbidden by the disjoint constraints. Hence $p_i\in\ti
        a_i$, $1\leq i\leq n$. Given $a_i\cdot a_{i+1}=0$, $1\leq
        i<n$, this also guarantees that the arcs $\alpha_i$ are
        non-degenerate.
\end{proof}
\begin{figure}[h]\begin{center}
	\begin{tikzpicture}
		\scriptsize
		{
			\draw[very thick] (1,2) circle(1pt) node[below]{$p_1$}--
					(2,2) circle (1pt) node[above left]{$p_2$} node[below,midway]{$\alpha_1$};
			\draw[dotted,-latex] (3,2)--(4,2) node[above,midway]{$\alpha_1'$};
			\draw[ultra thin,-latex,shorten >=1pt] (3,2) node[below]{$p_2'$} 
					to[out=90,in=90] (2,2) node[above,midway]{$\alpha_2''$};
			\fill[ultra thin] (3,2) circle(1pt);
			\draw[ultra thin] (3,2) --(2,2) node[below,midway]{$\alpha_1'$};
			
			\draw[very thick,-latex] (2,2) to[out=-70,in=180] (3.3,1.2) node[midway,below]{$\alpha_2$};
			\begin{scope}[xshift=-.5cm]
				\node at (4.2,1.2) {{\normalsize $\ldots$}};
				
				\draw[very thick] (4.6,1.2) -- (5.7,1.2) node[midway,below]{$\alpha_{n-2}$} 
						node[above left]{$p_{n-1}$} circle(1pt);
				\draw[ultra thin] (5.7,1.2)--(7,1.2) circle(1pt) node[above]{$p_{n-1'}$};
				\draw[dotted,-latex](7,1.2)--++(1,0) node[below,midway]{$\alpha_{n-2}'$};
				\draw[ultra thin,-latex,shorten >=1pt] (7,1.2) to[out =-90,in=-90]
								(5.7,1.2) node[midway,below]{$\alpha_{n-1}''$};
				\draw[very thick,-latex,shorten >=1pt] (5.7,1.2) to[out=90,in=180] 
					(8,2) node[above]{$p_n$} node[midway,above]{$\alpha_{n-1}$};
				\fill[very thick] (8,2) circle(1pt);
			\end{scope}
		}
	\end{tikzpicture}
\end{center}
\caption{The constraint $\stacki(a_1,\ldots, a_n)$ ensures the existence of a Jordan arc 
$\alpha=\alpha_1\cdots\alpha_{n-1}$ which connects a point $p_1\in \ti a_1$ to a point $p_n\in\ti a_n$.}
\label{fig:stacki}
\end{figure}

Consider now the formula $\frameFlai(a_0,\ldots,a_{n-1})$ given by:
\begin{align*}
	\bigwedge_{0\leq i < n} \left(\ic(a_i)\land 
		\ic(a_i+ a_{\md{i+1}})\land a_i\neq 0 \right)\land \\
\bigwedge_{j-i>1}a_i\cdot a_j=0,	
\end{align*}
where $\md{k}$ denotes $k \mbox{ mod } n$.  This formula allows us to
construct Jordan curves in the plane, in the following sense:
\begin{lemma}\label{lma:FrameLemmaInt}
	Let $n\geq 3$, and suppose $\frameFlai(a_0, \ldots,a_{n-1})$.
        Then there exist Jordan arcs $\alpha_0$, \ldots, $\alpha_{n-1}$
        such that $\alpha_0\ldots\alpha_{n-1}$ is a Jordan curve lying in
        the interior of $a_0+\cdots+a_{n-1}$, and $\alpha_i \subseteq
        \ti{(a_i+ a_{\md{i+1}})}$, for all $i$, $0 \leq i < n$.
\end{lemma}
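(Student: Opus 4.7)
The plan is to adapt the proof of Lemma~\ref{lma:StackLemmai} to the cyclic setting: first trace a Jordan arc through $a_0, a_1, \ldots, a_{n-1}$, then close it up with a final arc in $\ti{(a_{n-1} + a_0)}$.

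For the first phase, I would chain the constraints $\ic(a_i + a_{\md{i+1}})$ through the non-empty interior-connected overlaps $\ti{a_{\md{i+1}}}$ to deduce that $\ti{(a_i + a_{\md{i+1}} + \cdots + a_{\md{i+k}})}$ is connected for every $i$ and every $k < n$. Starting from any $p_0 \in \ti{a_0}$, one then reproduces the inductive construction of Lemma~\ref{lma:StackLemmai} on the linear sequence $(a_0, a_1, \ldots, a_{n-1})$ to obtain points $p_i \in \ti{a_i}$ and Jordan arcs $\alpha_i \subseteq \ti{(a_i + a_{i+1})}$ such that $\alpha_0 \cdots \alpha_{n-2}$ is a Jordan arc from $p_0$ to $p_{n-1}$. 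The only novelty is that $\frameFlai$ supplies $a_i \cdot a_j = 0$ rather than $\neg C(a_i, a_j)$ for non-adjacent pairs; this still suffices because, for $j + 1 < i$, any non-empty intersection $\ti{(a_j + a_{j+1})} \cap \ti{(a_i + \cdots + a_{n-1})}$ would be an open set contained in the finite union $\bigcup_{j' \in \{j,j+1\},\, k' \geq i}(a_{j'} \cap a_{k'})$, each of whose members is nowhere dense, contradicting the Baire category theorem.

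For the second phase, use $\ic(a_{n-1} + a_0)$ to obtain a Jordan arc $\beta \subseteq \ti{(a_{n-1} + a_0)}$ from $p_{n-1}$ to $p_0$. The same Baire-type argument shows that $\beta$ is disjoint from $\alpha_1, \ldots, \alpha_{n-3}$, but it may meet $\alpha_0$ (within $\ti{a_0}$) and $\alpha_{n-2}$ (within $\ti{a_{n-1}}$). Let $q$ be the first point of $\beta$ strictly after $p_{n-1}$ that lies on $\alpha_{n-2}$, and let $q'$ be the last point of $\beta$ strictly before $p_0$ that lies on $\alpha_0$. Define $\alpha_{n-1}$ to be the sub-arc of $\beta$ from $q$ to $q'$, and truncate $\alpha_{n-2}$ to end at $q$ and $\alpha_0$ to begin at $q'$. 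The concatenation $\alpha_0 \alpha_1 \cdots \alpha_{n-1}$ is then the desired Jordan curve.

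The main obstacle lies in justifying the splicing: one must check that each truncated arc remains in its prescribed open set (which holds because sub-arcs of arcs in open sets stay in those open sets), and that the resulting closed curve is simple (which follows because any two non-adjacent $\alpha_i, \alpha_j$ lie in disjoint open sets by the Baire-type argument above, while adjacent ones share only their splice points by construction).
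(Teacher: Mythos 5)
Your route (a stack-style traversal of the linear sequence $a_0,\ldots,a_{n-1}$ followed by a closing arc in $\ti{(a_{n-1}+a_0)}$ and a splice) mirrors the paper's proof of the $\cBCc$ frame lemma, Lemma~\ref{lma:FrameLemma}, rather than its proof of the present lemma, and your preliminary chaining claim about connectivity of the partial sums is fine; the trouble is the pivotal claim that interior-disjointness can stand in for non-contact. The hypotheses of Lemma~\ref{lma:StackLemmai} differ from $\frameFlai$ not only on non-adjacent pairs: $\stacki$ also contains the conjuncts $a_i\cdot a_{i+1}=0$ for \emph{adjacent} pairs, and $\frameFlai$ does not. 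Your Baire argument needs every member $a_{j'}\cap a_{k'}$ of the covering family to be nowhere dense, but when $i=j+2$ the member with $j'=j+1$, $k'=i$ is an adjacent pair, on which $\frameFlai$ imposes nothing, so it may contain an open set. Already for $n=3$, taking $a_0$ and $a_2$ to be the two halves of a closed disc and $a_1$ the whole disc satisfies $\frameFlai(a_0,a_1,a_2)$, yet $\ti{(a_0+a_1)}\cap\ti{a_2}\neq\emptyset$. The same defect hits your second phase: $\beta\subseteq\ti{(a_{n-1}+a_0)}$ may meet $\alpha_1$ inside $\ti{(a_0\cap a_1)}$ and $\alpha_{n-3}$ inside $\ti{(a_{n-2}\cap a_{n-1})}$. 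So the separation facts on which the simplicity of your spliced curve rests are not consequences of $\frameFlai$, and a later arc can cross an arc two steps back.

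Independently, even where interior-disjointness does hold, it does not reproduce the step of Lemma~\ref{lma:StackLemmai}'s induction that genuinely uses $\neg C$: forcing the connecting arc in $\ti{(a_i+\cdots+a_{n-1})}$ to contain a point of $\ti{a_{i+1}}$, which is what supplies the next base point and keeps the peeled-off segment inside $\ti{(a_i+a_{i+1})}$. Under $\frameFlai$, $a_i$ and $a_{i+2}$ are merely interior-disjoint and may touch, so the arc can pass from $a_i$ straight into $a_{i+2}$ through a common boundary point lying in $\ti{(a_i+a_{i+2})}$, never meeting $\ti{a_{i+1}}$; adding the pairwise-disjointness conjuncts available in the intended application, \eqref{eq:BciInf5}, would repair the Baire step but not this one. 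The paper's proof of the present lemma avoids both problems by not re-running the stack construction at all: since each $a_i$ is a non-empty regular closed set and $\ic(a_i+a_{\md{i+1}})$ holds, it fixes points $p'_i\in\ti{a_i}$ for all $i$ in advance, joins $p'_i$ to $p'_{\md{i+1}}$ by an arc inside the connected open set $\ti{(a_i+a_{\md{i+1}})}$, and then obtains a simple closed curve by truncating consecutive arcs at their first intersection points. Recasting your argument in that form, rather than importing the $\stacki$ induction, is the natural way to close the gap.
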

\begin{proof}
For all $i$ ($0 \leq i < n$), pick $p'_i \in \ti{a}_i$, and pick a Jordan arc
$\alpha'_i \subseteq \ti{(a_i + a_{\md{i+1}})}$ from $p_i$ to $p_{\md{i+1}}$.  
For all $i$ ($2 \leq i \leq n$), let $p_{\md{i}}$ be the first point
of $\alpha_{i-1}$ lying on $\alpha_{\md{i}}$, and let $p''_1$ be the
first point of $\alpha'_0$ lying on $\alpha'_1$. For all $i$ ($2 \leq
i < n$), let $\alpha_i = \alpha'_i[p_i, p_{i+1}]$, let $\alpha''_1 =
\alpha'_1[p''_1, p_2]$, and let $\alpha''_0$ denote the section of
$\alpha'_0$ (in the appropriate direction) from $p_0$ to $p''_1$. Now
let $p_1$ be the first point of $\alpha''_0$ lying on
$\alpha''_1$, let $\alpha_0 = \alpha''_0[p_0,p_1]$, and let $\alpha_1 =
\alpha''_1[p_1,p_2]$. It is routine to verify that the arcs
$\alpha_0$, \ldots, $\alpha_{n-1}$ have the required properties.
\end{proof}

We will now show how to separate certain types of regions in the language $\cBci$. 
We make use of Lemma~\ref{lma:FrameLemmaInt} and the following fact.
\begin{lemma}\label{lma:Newman}{\cite[p.~137]{ijcai:Newman64}}
	Let $F$, $G$ be disjoint, closed subsets of $\R^2$ such that
	$\R^2\setminus F$ and $\R^2 \setminus G$ are connected. Then
	$\R^2\setminus (F \cup G)$ is connected.
\end{lemma}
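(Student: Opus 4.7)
The plan is to argue by contradiction: suppose $\R^2 \setminus (F \cup G) = U \cup V$ is a disconnection, with $U, V$ disjoint nonempty open sets of $\R^2$, and derive a contradiction. A first point-set observation is that every boundary point of $U$ in $\R^2$ lies in $F \cup G$, since such a point is a limit of points of $U$ but not itself in $U$, so it cannot lie in the open set $V$. This gives the separation concrete geometric content: $U$ is ``sealed in'' by $F \cup G$. However, disconnectedness of $\partial U$ alone does not immediately contradict the connectedness of $\R^2 \setminus F$ or $\R^2 \setminus G$, so a global argument is needed.

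To convert the setup into a genuine contradiction I would pass to the one-point compactification $S^2 = \R^2 \cup \{\infty\}$ and apply Alexander duality. Set $F^{*} = F \cup \{\infty\}$ if $F$ is unbounded (and $F^{*} = F$ otherwise), similarly for $G^{*}$; then $F^{*}$ and $G^{*}$ are compact in $S^2$ with $F^{*} \cap G^{*} \subseteq \{\infty\}$. Connectedness of $\R^2 \setminus F = S^2 \setminus F^{*}$ translates, via Alexander duality, into $\check H^1(F^{*}) = 0$, and analogously $\check H^1(G^{*}) = 0$. The reduced \v{C}ech Mayer--Vietoris sequence
\begin{equation*}
\check H^0(F^{*}) \oplus \check H^0(G^{*}) \to \check H^0(F^{*} \cap G^{*}) \to \check H^1(F^{*} \cup G^{*}) \to 0,
\end{equation*}
combined with surjectivity of the first map (trivial when $F^{*} \cap G^{*} = \emptyset$; when $F^{*} \cap G^{*} = \{\infty\}$, one of $F, G$ is unbounded, so $\infty$ sits in a component of the corresponding compactification and supplies the required surjection onto $\check H^0(\{\infty\})$), forces $\check H^1(F^{*} \cup G^{*}) = 0$. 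Dualising back, $S^2 \setminus (F^{*} \cup G^{*})$ is connected; this set either equals $\R^2 \setminus (F \cup G)$ outright, or differs from it only by the single point $\infty$, whose removal from a nonempty connected open subset of the locally connected $2$-manifold $S^2$ preserves connectedness. Either way, $\R^2 \setminus (F \cup G)$ is connected, contradicting the assumption.

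The main obstacle is the cohomological bookkeeping around $\infty$ and the need to use \v{C}ech (rather than singular) cohomology so that Alexander duality applies to arbitrary compact subsets of $S^2$. A more elementary route, presumably the one taken in Newman's book, would instead pick paths $\gamma_F \subseteq \R^2 \setminus F$ and $\gamma_G \subseteq \R^2 \setminus G$ joining a chosen point of $U$ to a chosen point of $V$ (using local path-connectedness of $\R^2$); each such path must cross the \emph{other} set, and a Jordan-arc surgery on the first and last crossings, exploiting the disjointness of $F$ and $G$, produces a path in $\R^2 \setminus (F \cup G)$ from $U$ to $V$, directly contradicting the assumed disconnection.
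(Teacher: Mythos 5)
The paper does not actually prove this lemma: it is imported verbatim from Newman's book (the citation to p.~137 \emph{is} the proof), so there is no in-paper argument to compare against. Judged on its own, your main argument is a correct and standard modern proof of this Janiszewski/Phragm\'en--Brouwer-type separation fact: pass to $S^2$, use Alexander duality in the form $\tilde H_0(S^2\setminus K)\cong \check H^1(K)$ for compact $K$ (\v{C}ech cohomology is indeed needed here, as you say), and kill $\check H^1(F^*\cup G^*)$ with the closed Mayer--Vietoris sequence, using that $F^*\cap G^*$ is empty or a single point. This is a genuinely different route from Newman's elementary plane-topology development; what it buys is brevity at the price of heavier machinery. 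Two small repairs: your asserted identity $\R^2\setminus F=S^2\setminus F^*$ holds only when $F$ is unbounded; when $F$ is bounded, $S^2\setminus F^*=(\R^2\setminus F)\cup\{\infty\}$, and you need the (easy) remark that adjoining $\infty$ to the unbounded connected set $\R^2\setminus F$ preserves connectedness --- the mirror image of the puncturing step you do carry out at the end. Also, that final puncturing step is justified by the dimension being at least two, not by local connectedness alone (removing a point from a connected open subset of $\R$ shows the difference); the fact you use is nevertheless true.

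The closing ``elementary route'' you attribute to Newman is not a proof as sketched, and you should not lean on it. Given $\gamma_F\subseteq\R^2\setminus F$ and $\gamma_G\subseteq\R^2\setminus G$ joining $u\in U$ to $v\in V$, a surgery at ``the first and last crossings'' does not go through: the points where $\gamma_F$ meets $G$ and where $\gamma_G$ meets $F$ can interleave arbitrarily along the two arcs, and any splice of the two paths may still meet $F\cup G$; a detour avoiding $G$ exists by connectedness of $\R^2\setminus G$, but nothing prevents it from hitting $F$, and vice versa. Ruling out this interleaving is exactly the content of the theorem (unicoherence of the plane/sphere), and Newman's actual proof is correspondingly more involved. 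Since that paragraph is only an aside and your duality argument carries the proof, the proposal stands once the bounded-case remark is added.
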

\begin{figure}[h]
	\begin{center}
		\begin{tikzpicture}
		\scriptsize{
			\draw[black,thick] (0,0)--++(0,1.5)--++(2,0);
			\filldraw[black,thick] (0,0) circle(1pt) --++ (2,0) circle(1pt);
			\filldraw[black,thick] (2,0)--(2,1.5) circle(1pt);
			
			\draw (4,.75) node[right]{$M_1$} circle(1pt)--
				(2,.75) circle(1pt) node[left]{$P_0$} node[midway,above]{$\mu_0$};
			\draw (4,.75) to[out=90,in=45] 
				(1,1.5) circle(1pt) node[below]{$P_1$} node[midway,above]{$\mu_1$};
			\draw (4,.75) to[out=-90,in=-45] 
				(1,0) circle(1pt) node[above]{$P_2$} node[midway,below]{$\mu_2$};;
			
			\node at (.4,-.2){$\tau_2$};
			\node at (.3,1.7){$\tau_1$};
			\node at (2.2,1.1){$\tau_0$};
			
			\node at (-.5,.75) {$n_0$};
			\node at (2.6,1.4) {$n_2$};
			\node at (2.6,0) {$n_1$};
		}	
	\end{tikzpicture}
	\end{center}
	\caption{The Jordan curve $\Gamma=\tau_0\tau_1\tau_2$ separating $m_1$ from $m_2$.}
	\label{fig:SepBci}
\end{figure}
We say that a region $r$ is \emph{quasi-bounded} if either $r$ or $-r$ is bounded.
We can now prove the following.
\begin{lemma}\label{lma:Cci2BciStar}
	There exists a $\cBci$-formula $\eta^*(r,s, \bar{v})$ with the
	following properties: \textup{(}i\textup{)} $\eta^*(r,s, \bar{v})$
	entails $\neg C(r,s)$ over $\RC(\R^2)$; \textup{(}ii\textup{)}
	if the regions $r$ and $s$ can be separated by a Jordan curve, 
	then there exist polygons $\bar{v}$ such that 
	$\eta^*(\tau_1,\tau_2, \bar{v})$; \textup{(}iii\textup{)} if $r$,
	$s$ are disjoint polygons such that $r$ is quasi-bounded and
	$\R^2 \setminus (r+s)$ is connected, then there exist polygons
	$\bar{v}$ such that $\eta^*(\tau_1,\tau_2, \bar{v})$.
\end{lemma}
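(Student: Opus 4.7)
\ \  I would encode non-contact in $\cBci$ by forcing the existence of a polygonal ``thick Jordan curve'' separating $r$ from $s$, using the frame formula $\frameFlai$ of Lemma~\ref{lma:FrameLemmaInt} as the main building block. Concretely, take $\bar v$ to name two nested triples of frame regions $v_0,v_1,v_2$ and $u_0,u_1,u_2$, together with auxiliary regions $w_1,w_2$ and a buffer region $w_{\mathrm{mid}}$, all jointly partitioning $\R^2$ up to interior-disjointness of summands. The formula $\eta^*$ asserts $\frameFlai(v_0,v_1,v_2)$ and $\frameFlai(u_0,u_1,u_2)$; the partition and interior-disjointness identities; $\ic(w_i)$ and $\ic(w_{\mathrm{mid}})$; conjuncts of the form $\ic(v_j + w_1)$, $\ic(v_j + w_{\mathrm{mid}})$, $\ic(u_j + w_{\mathrm{mid}})$, $\ic(u_j + w_2)$ forcing each $w$-region to be interior-adjacent to every frame piece on its appropriate side; and the containments $r \leq w_1$, $s \leq w_2$.

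\textbf{Establishing (i).}\ \  Let $\mathfrak I \models \eta^*(r,s,\bar v)$ over $\RC(\R^2)$. Lemma~\ref{lma:FrameLemmaInt} supplies a Jordan curve $\Gamma^v \subseteq \ti{(v_0+v_1+v_2)}$ and a Jordan curve $\Gamma^u \subseteq \ti{(u_0+u_1+u_2)}$. Interior-disjointness places each $\ti{w_i}$ and $\ti{w_{\mathrm{mid}}}$ in $\R^2\setminus(\Gamma^v \cup \Gamma^u)$, and their interior-connectedness confines each to a single component; the adjacency conjuncts $\ic(v_j+w_i)$ and $\ic(u_j+w_i)$ then pin these components down, forcing $\Gamma^v$ and $\Gamma^u$ to be nested Jordan curves, with $\ti{w_1}$ lying in the innermost disk bounded by $\Gamma^v$, $\ti{w_{\mathrm{mid}}}$ in the annulus between $\Gamma^v$ and $\Gamma^u$, and $\ti{w_2}$ in the unbounded exterior of $\Gamma^u$. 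Thus $r$ is contained in the closed inner disk and $s$ in the closed unbounded exterior. These two closed sets are \emph{strictly} disjoint because they are separated by the polygonal annulus of positive width supplied by the buffer (and the two frames). Hence $r \cap s = \emptyset$, i.e.\ $\neg C(r,s)$.

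\textbf{Establishing (ii), (iii), and the main obstacle.}\ \  For (ii), sandwich the separating Jordan curve between two nested polygonal Jordan curves, thicken each into a polygonal annulus split into three arcs satisfying $\frameFlai$, and read off the $w_\bullet$'s as the bounded inner disk, the annular buffer, and the unbounded exterior; all conjuncts of $\eta^*$ are then routine to verify. For (iii), WLOG $r$ is bounded (the co-bounded case is symmetric modulo swapping inside with outside); the outer polygonal boundary of $r$ is a Jordan curve, and since $r,s$ are disjoint polygons with $\R^2 \setminus (r+s)$ connected, a small polygonal outward displacement of this boundary---feasible by compactness of $\partial r$, positivity of $\mathrm{dist}(r,s)$, and connectedness of the complement in which to route the displacement---yields a polygonal Jordan curve in $\R^2 \setminus (r+s)$ separating $r$ from $s$, reducing to (ii). The main obstacle is step (i): verifying rigorously that the interior-connectedness constraints really force the desired \emph{nested} frame geometry in every model, so that no boundary contact of $r$ and $s$ can sneak through. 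This is exactly where planarity of $\R^2$ must be brought to bear in a way that compensates for the absence of a contact predicate in $\cBci$; the nested frame plus buffer is the device that upgrades interior-separation (which $\ic$ alone would give) to full closure-separation.
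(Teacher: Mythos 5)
Your property (i) is not proved, and the intermediate claim on which it rests is false as stated. From your constraints nothing forces the two frame curves to be nested with $w_1$ innermost and $w_2$ unbounded: take two disjoint, side-by-side annuli (each split into three cyclically adjacent pieces serving as the $v$- and $u$-frames), let $w_1$ and $w_2$ be the two inner disks and $w_{\mathrm{mid}}$ the common exterior; every conjunct you list is satisfied, yet the curves are not nested and $w_2$ is bounded. So the step ``the adjacency conjuncts pin these components down, forcing nested curves'' cannot be right, and you yourself flag its verification as ``the main obstacle'' without supplying it. More importantly, even if you could show $r$ and $s$ end up on opposite sides of some curve in the \emph{interior} sense, that is not yet $\neg C(r,s)$: all of your conjuncts are interior-disjointness and interior-connectedness conditions, and such conditions never by themselves prevent two regions from sharing boundary points (e.g.\ a single point where four regions meet, which creates contact without any ``thick'' adjacency and so escapes planarity arguments phrased in terms of adjacency graphs). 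Upgrading interior separation to separation of \emph{closures} is exactly the content of the lemma, and your sketch simply asserts it (``strictly disjoint because they are separated by the polygonal annulus of positive width'' -- neither polygonality nor positive width is available in an arbitrary $\RC(\R^2)$ model).

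For comparison, the paper's proof uses one six-piece frame $t_0,\dots,t_5$ together with two regions $m_1\geq r$, $m_2\geq s$ that are interior-disjoint from the frame, and the conjuncts $\ic(t_{2i+1}+m_j)$ for the three odd pieces and both $j$. From Lemma~\ref{lma:FrameLemmaInt} it extracts a single Jordan curve $\Gamma$, notes that the points of $t_{2i+1}$ on $\Gamma$ lie on the $i$th third $\tau_i$, draws three arcs in the sets $\ti{(m_1+t_{2i+1})}$ from a point $M_1\in\ti{m_1}$ to the three $\tau_i$, and shows that a point of $m_2$ in any of the three resulting subdomains would disconnect $\ti{(t_{2i+1}+m_2)}$; hence $m_2$ lies entirely in the open residual domain of $\Gamma$ not containing $M_1$, and symmetrically for $m_1$, so the closures $m_1,m_2$ (and therefore $r,s$) are genuinely disjoint. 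Some argument of this kind is what your step (i) is missing; note also that with only three pieces per frame each piece meets two thirds of the curve, which weakens the ``three doors'' structure the paper exploits. Finally, in your (iii) the assertion that ``the outer polygonal boundary of $r$ is a Jordan curve'' is unjustified for a general quasi-bounded polygon (it may be disconnected or have holes); the paper instead obtains the separating piecewise-linear Jordan curve in $\R^2\setminus(r+s)$ via Lemma~\ref{lma:Newman}. Your reductions of (ii) and (iii) to a thickened separating curve are otherwise in the spirit of the paper, but the lemma stands or falls with (i).
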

\begin{proof}
	Let $\bar{v}$ be the tuple of variables $(t_0,\ldots,t_5, m_1, m_2)$, and let
	$\eta^*(r,s,\bar{v})$ be the formula
	\begin{multline*}
			\frameFlai(t_0,\ldots,t_5)\land  r\leq m_1 \wedge s\leq m_2 \wedge \\
		  (t_0 + \ldots + t_5) \cdot (m_1 +m_2) = 0  \wedge \bigwedge_{\substack{i=1,3,5\\j=1,2}} \ic(t_i + m_j).
	\end{multline*}
	
	Property ({\em i}) follows by a simple planarity argument. By
	$\frameFlai(t_0,\ldots,t_5)$ and Lemma~\ref{lma:FrameLemmaInt}, 
	let $\alpha_i$, for $0\leq i\leq 5$, be such that 
	$\Gamma=\alpha_0\cdots\alpha_5$ is a Jordan curve included in 
	$\ti{(t_0 + \cdots + t_5)}$. Further, let 
	$\tau_i=\alpha_{2i}\alpha_{2i+1}$, $0\leq i\leq 2$ 
	(Fig.\ref{fig:SepBci}). Note that all points in $a_{2i+1}$,
	$0\leq i\leq 2$, that are on $\Gamma$ are on $\tau_i$. By 
	$\ic(t_{2i+1} + m_1)$, $0\leq i\leq 2$, let 
	$\mu_i\subseteq(m_1+t_{2i+1})^\circ$ be a Jordan arc with 
	endpoints $M_1\in m_1^\circ$ and $T_i\in\tau_i\cap t_{2i+1}^\circ$. 
	We may assume that these arcs intersect only at their common 
	endpoint $M_1$, so that they divide the residual domain of 
	$\Gamma$ which contains $M_1$ into three sub-domains $n_i$, for 
	$0\leq i\leq 2$. The existence of a point $M_2\in m_2$ in any 
	$n_i$, $0\leq i\leq 2$, will contradict $\ic(t_{2i+1} + m_2)$. 
	So, $m_2$ must be contained entirely in the residual domain of 
	$\Gamma$ not containing $M_1$. Similarly, all points in $m_1$ must 
	lie in the residual domain of $\Gamma$ containing $M_1$. It follows 
	that $m_1$ and $m_2$ are disjoint, and by $r\leq m_1$ and 
	$s\leq m_2$, that $r$ and $s$ are disjoint as well.
	For Property ({\em ii}), let $\Gamma$ be a Jordan curve
	separating $r$ and $s$. Now thicken $\Gamma$ to form an annular
	element of $\RCP(\R^2)$, still disjoint from $r$ and $s$, and divide
	this annulus into the three regions $t_0,\ldots,t_5$ as shown
	(up to similar situation) in Fig.~\ref{fig:Cci2BciStar}.
	Choose $m_1$ and $m_2$ to be the connected components
	of $-(t_0+\cdots+t_5)$ containing $r$ and $s$, respectively.	
	For Property ({\em iii}), it is routine using Lemma~\ref{lma:Newman} 
	to show that there exists a piecewise linear Jordan curve
	$\Gamma$  in $\R^2 \setminus(r+s)$ separating $r$ and $s$.
\end{proof}

\begin{figure}[h]
\begin{center}
\begin{tikzpicture}	
	\scriptsize{
	\draw[fill=white,very thick,draw=black] (0,0) rectangle (4,2.5);
	\draw[fill=white!20,draw=black] (0.3,0.3) rectangle (3.7,2.2);
	
	\draw[fill=white,draw=black] (0.8,0.8) rectangle (1.5,1.3);		\node at (1.15,1.05) {$r$};
	\draw[fill=white,draw=black] (2.2,.8) rectangle (3.2,1.5);		\node at (2.7,1.15) {$r$};
	\draw[fill=white,draw=black] (1.3,1.5) rectangle (1.8,2.);		\node at (1.55,1.75) {$r$};
	
	\draw[fill=white,draw=black] (4.2,.5) rectangle (4.6,1.9);		\node at (4.4,1.3) {$s$};				
	
	\draw (4.7,1)--(7,2.5);
	\draw (4.7,1)--(7,0);
	\node at (6, 1.3) {$s$};
	
	\node at (3,2.35) {$t_0$};
	\node at (3.85,.6) {$t_1$};
	\node at (3,.15) {$t_2$};
	\node at (1,.15) {$t_3$};
	\node at (0.15,.6) {$t_4$};
	\node at (0.15,2) {$t_5$};
	\node at (4.3,.15) {$\Gamma$};
	
	\node at (3.4,.5) {$m_1$};
	\node at (4.35,2.3) {$m_2$};
	
	\draw (2,0)--(2,.3);
	\draw (1,2.2)--(1,2.5);	
	\draw (.3,.3)--(.3,0);
	\draw (0,1.35)--(.3,1.35);
	\draw (3.7,1.35)--(4,1.35);	
	\draw (3.7,.3)--(3.7,0);	
	}
\end{tikzpicture}
\end{center}
\caption{Separating disjoint polygons by an annulus.}	\label{fig:Cci2BciStar}
\end{figure}
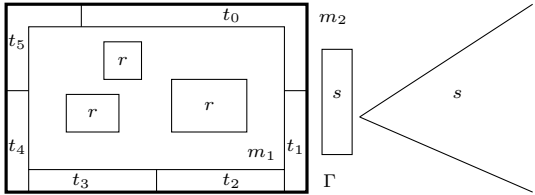
\begin{lemma}\label{lma:Cci2Bci}
	There exists a $\cBci$-formula $\eta(r,s, \bar{v})$ with the following
	properties: \textup{(}i\textup{)} $\eta(r,s, \bar{v})$ entails $\neg
	C(r,s)$ over $\RC(\R^2)$; \textup{(}ii\textup{)} if $r$, $s$ are
	disjoint quasi-bounded polygons, then there exist
	polygons $\bar{v}$ such that $\eta(\tau_1, \tau_2, \bar{v})$.
\end{lemma}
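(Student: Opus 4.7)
The plan is to extend Lemma~\ref{lma:Cci2BciStar} from the connected-complement case (iii) to arbitrary disjoint quasi-bounded polygons. The natural route is to combine several instances of $\eta^*$ via Boolean decomposition, using auxiliary polygonal variables to encode a partition of $r$ (and possibly $s$) into sub-pieces, each of which either satisfies premise (ii) or premise (iii) of Lemma~\ref{lma:Cci2BciStar}.

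Concretely, I would first try the formula
\[
\eta(r, s, v_1, v_2, \bar{w}_1, \bar{w}_2) := (r \leq v_1 + v_2) \land \eta^*(v_1, s, \bar{w}_1) \land \eta^*(v_2, s, \bar{w}_2).
\]
Property (i) is immediate: each conjunct $\eta^*(v_i, s, \bar{w}_i)$ entails $\neg C(v_i, s)$ by Lemma~\ref{lma:Cci2BciStar}(i), whence $\neg C(v_1 + v_2, s)$, and the inclusion $r \leq v_1 + v_2$ gives $\neg C(r, s)$. For property (ii), given disjoint quasi-bounded polygons $r, s$, I would decompose $r$ according to its position relative to $s$: let $v_1$ collect those components of $r$ that lie in bounded components (``holes'') of $\R^2 \setminus s$, and let $v_2$ collect those in the unbounded component. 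Each $v_i$ is then locally separable from $s$ by a polygonal Jordan curve running in the corresponding cell of the arrangement, which, via Lemma~\ref{lma:Cci2BciStar}(ii), provides the required $\bar{w}_i$.

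The main obstacle is deeper nesting. If a component $r_i$ of $r$ is an annular region with $s$-components both inside and outside it, then $r_i$ alone is not Jordan-separable from $s$ taken as a whole (a simply-connected side of any separating Jordan curve containing $r_i$ is forced to engulf either an inner or an outer $s$-component). So merely splitting $r$ into two pieces by one level of nesting is not enough; in principle, a polygonal arrangement can exhibit arbitrarily many alternating nesting levels, and a fixed-arity formula with only two $\eta^*$-instances cannot cope.

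The resolution I would aim for is to strengthen the decomposition by further partitioning each $v_i$ along an auxiliary polygonal ``cut''---for example, by a half-plane or a single polygonal arc---that severs annular components of $r_i$ into simply-connected pieces, each of which \emph{can} be Jordan-separated from all of $s$ at once. This suggests enlarging $\bar v$ to something like $(v_1,\dots,v_k,\bar{w}_1,\dots,\bar{w}_k)$ for a small constant $k$ (four should morally suffice: split $r$ by its position inside/outside holes of $s$, then cut each annular piece by a generic line). The hard part will be proving the geometric claim that a fixed number of such cuts always turns $r$ into a sum of polygonal pieces that are each Jordan-separable from $s$; once this is established, property (ii) follows by applying Lemma~\ref{lma:Cci2BciStar}(ii) independently to each pair $(v_i, s)$, and property (i) is essentially unchanged.
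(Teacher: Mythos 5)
Your property (i) argument is fine, but the heart of the lemma is exactly the step you leave open, and it is a genuine gap. You yourself identify the obstruction (a piece of $r$ whose complementary components contain $s$-material on both sides cannot be Jordan-separated from the whole of $s$), but the proposed remedy --- split $r$ by its position inside/outside the holes of $s$ and then cut each piece ``by a generic line'' --- does not work in general: if a single component of $r$ has three or more holes in general position, each containing a component of $s$, no single line (indeed no fixed family of straight cuts chosen as you describe) severs all the essential loops, so some resulting piece still encloses a component of $s$ and, by your own obstruction argument, cannot be separated from $s$ by one Jordan curve. Since your formula keeps $s$ whole and invokes only Lemma~\ref{lma:Cci2BciStar}(ii), property (ii) of the present lemma is not established; the ``hard geometric claim'' you defer is precisely what would need proving, and in the form you state it, it is false.

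The paper avoids this difficulty by decomposing \emph{both} regions rather than only $r$: it sets $\eta(r,s,\bar v)$ to be $r=r_1+r_2 \wedge s=s_1+s_2$ together with $\eta^*(r_i,s_j,\bar u_{i,j})$ for all four pairs. The decomposition criterion is not simple connectedness but \emph{connectedness of the complement}: it is routine to write any polygon as a sum of two polygons each with connected complement, and then Lemma~\ref{lma:Newman} (disjoint closed sets with connected complements have a connected joint complement) yields that $\R^2\setminus(r_i+s_j)$ is connected, so Lemma~\ref{lma:Cci2BciStar}(\emph{iii}) --- not (\emph{ii}) --- supplies the witnesses $\bar u_{i,j}$ pairwise. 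Because separation is only ever required between one piece of $r$ and one piece of $s$, the nesting problem you ran into never arises. If you want to salvage your route, you would have to prove a uniform bound on the number of pieces of $r$ that are each Jordan-separable from \emph{all} of $s$, which is considerably more delicate than the pairwise argument via Newman's lemma.
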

\begin{proof}
	Let $\eta(r,s, \bar{v})$ be the formula
	\begin{equation*}
	r = r_1 + r_2 \wedge s = s_1 + s_2 \wedge \bigwedge_{\substack{1 \leq i \leq 2\\
	  1 \leq j \leq 2}} \eta^*(r_i, s_j, \bar{u}_{i,j}),
	\end{equation*}
	where $\eta^*$ is the formula given in
	Lemma~\ref{lma:Cci2BciStar}. Property ({\em i}) is then immediate. For
	Property ({\em ii}), it is routine to show that there exist  polygons $r_1$, $r_2$
	such that $r = r_1 + r_2$ and $\R^2 \setminus r_i$ is connected for $i
	= 1,2$; let $s_1$, $s_2$ be chosen analogously. Then for all $i$ ($1 \leq i
	\leq 2$) and $j$ ($1 \leq j \leq 2$) we have $r_i \cap s_j =
	\emptyset$ and, by Lemma~\ref{lma:Newman}, $\R^2 \setminus (r_i +
	s_j)$ connected. By Lemma~\ref{lma:Cci2BciStar}, let
	$\bar{u}_{i,j}$ be such that $\eta^*(r_i, s_j, \bar{u}_{i,j})$.
\end{proof}
We are now ready to prove:
\begin{swetheorem}{Theorem~\ref{theo:inftyBci}}
There is a $\cBci$-formula satisfiable over $\RC(\R^2)$, but only by
regions with infinitely many components.
\end{swetheorem}
\begin{proof}
We first write a $\cBCci$-formula, $\phi^*_\infty$ with the required
properties, and then show that all occurrences of $C$ can be
eliminated.  Note that $\phi^*_\infty$ is not the same as the formula
$\ti{\phi}_\infty$ constructed for the proof of
Corollary~\ref{cor:inftyCci}.

Let $s$, $s'$, $a$, $a'$, $b$, $b'$, $a_{i,j}$ and $b_{i,j}$ 
($0 \leq i <2$, $1 \leq j \leq 3$) be variables. The constraints
\begin{align}
	&\frameFlai(s,s',b,b',a,a')\label{eq:BciInf1}\\
	&\stacki(s,b_{i,1},b_{i,2},b_{i,3},b)\label{eq:BciInf2}\\
	&\stacki(b_{\md{i-1},2},a_{i,1},a_{i,2},a_{i,3},a)\label{eq:BciInf3}\\
	&\stacki(a_{\md{i-1},2},b_{i,1},b_{i,2},b_{i,3},b)\label{eq:BciInf4}
\end{align}
are evidently satisfied by the arrangement of Fig.~\ref{fig:InfBci}.
\begin{figure}[h]
\begin{tikzpicture}[scale=1,
		c0/.style={fill=gray!5},
		c1/.style={fill=gray!5},
		c2/.style={fill=gray!25},
		c3/.style={fill=gray!25}
	]
		\scriptsize{
		\newcounter{md}\newcounter{mdf} \newcounter {mdd}
		\coordinate (H) at (0,5);
		\coordinate (P) at (0,0);
		\coordinate (Q) at (H);
		\coordinate (start) at (0,1.5);\coordinate (start') at ($(start)+(1,0)$);
		\coordinate (first) at (0,2.9);\coordinate (first') at ($(first)+(1,0)$);
		\coordinate (second) at (0,4);\coordinate (second') at ($(second)+(1,0)$);
		\coordinate (end) at (H);\coordinate (end') at ($(end)+(1,0)$);
		
		\coordinate (width) at (.8,0);
		
		\foreach \d in {0,...,35}
		{
			\pgfmathparse{.91^\d}			\let\factor=\pgfmathresult 
			\pgfmathparse{2*mod(\d,2)-1}	\let\prt=\pgfmathresult
			\pgfmathsetcounter{mdf}{mod(\d,4)}
			\pgfmathsetcounter{md}{mod(\d,2)}
			\pgfmathsetcounter{mdd}{mod(\d/2,2)}
			
			\coordinate (Pm) at ($(intersection of P--Q and start--start')$);
			\coordinate (Qm) at ($(end-|Pm)+\factor*(width)$);
			\foreach \a/\ind in {-90/0,90/} 
			{
				\coordinate (Pm') at ($(Pm)!{.2cm*\factor}!{\prt*\a}:(Qm)$);
				\coordinate (Qm') at ($(Qm)!{.2cm*\factor}!{-\prt*\a}:(Pm)$);
				\coordinate (P\ind) at ($(intersection of Pm'--Qm' and P--Q)$);
				\coordinate (Q\ind) at ($(intersection of Pm'--Qm' and end--end')$);
			}
			\draw[c\themdf] (P0)--(Q0)--(Q)--(P)--cycle;			
			
			\draw ($(intersection of first--first' and P0--Q0)$) 
				--($(intersection of first--first' and P--Q)$);
			\draw ($(intersection of second--second' and P0--Q0)$)
				--($(intersection of second--second' and P--Q)$);			
			
			\coordinate (start) at ($(H)-(start)$);\coordinate (start') at ($(start)+(1,0)$);
			\coordinate (first) at ($(H)-(first)$);\coordinate (first') at ($(first)+(1,0)$);			
			\coordinate (second) at ($(H)-(second)$);\coordinate (second') at ($(second)+(1,0)$);
			\coordinate (end) at ($(H)-(end)$);\coordinate (end') at ($(end)+(1,0)$);
			
			\ifnum \d<6
				\ifnum \themd=0
					\node[rotate=81] at ($(Pm)!.3!(Qm)$) {$b_{\themdd,1}$};
					\node[rotate=81] at ($(Pm)!.6!(Qm)$) {$b_{\themdd,2}$};
					\node[rotate=81] at ($(Pm)!.9!(Qm)$) {$b_{\themdd,3}$};
				\fi
				\ifnum \themd=1
					\node[rotate=-81] at ($(Pm)!.2!(Qm)$) {$a_{\themdd,1}$};
					\node[rotate=-81] at ($(Pm)!.6!(Qm)$) {$a_{\themdd,2}$};
					\node[rotate=-81] at ($(Pm)!.9!(Qm)$) {$a_{\themdd,3}$};
				\fi
			\fi
		}
		}
		\draw (-0.4,-0.4) rectangle ($(8,.4)+(H)$); 	\draw (0,0) rectangle ($(7.6,0)+(H)$);
		\draw (0,0)--++(-.4,0);
		\draw (H)--++(0,.4);
		\draw ($(H)!.1!(0,0)$)--++(-.4,0);
		\draw ($(.4,0)$)--++(0,-.4);
		\draw ($(H)!.45!(0,0)+(7.6,0)$) --++(.4,0);
		\draw ($(H)!.55!(0,0)+(7.6,0)$) --++(.4,0);
		\node at ($(H)!.55!(0,0)+(-.2,0)$){$s$};
		\node at ($(H)!.025!(0,0)+(-.2,0)$){$s'$};
		\node at (0,-.2){$a'$};
		\node at ($(H)+(3.8,.2)$){$b$};
		\node at ($(H)!.5!(0,0)+(7.8,0)$){$b'$};
		\node at (3.8,-.2){$a$};
		
	\end{tikzpicture}
\caption{A tuple of regions satisfying
  {\eqref{eq:BciInf1}--\eqref{eq:BciInf4}}: the pattern of components of the
  $a_{i,j}$ and $b_{i,j}$ repeats forever.}
\label{fig:InfBci}
\end{figure}
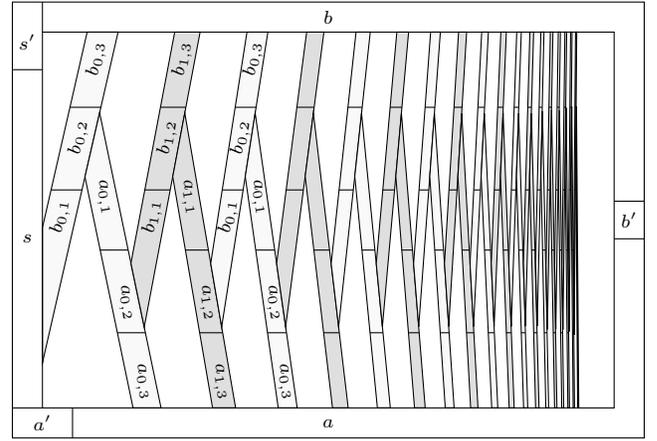

Let $\phi^*_\infty$ be the conjunction of~\eqref{eq:BciInf1}--\eqref{eq:BciInf4}
as well as all conjuncts
\begin{align}
	r\cdot r'=0 & \label{eq:BciInf5},
\end{align}
where $r$ and $r'$ are any two distinct regions depicted on Fig.~\ref{fig:InfBci}. 
Note that the regions $a_{i,j}$ and $b_{i,j}$ have infinitely 
many  connected components. We will now show that this is true for every satisfying 
tuple of $\phi^*_\infty$. 

By \eqref{eq:BciInf1}, we can use Lemma~\ref{lma:FrameLemmaInt} to construct 
a Jordan curve $\Gamma=\sigma\sigma'\beta\beta'\alpha\alpha'$ whose segments 
are Jordan arcs lying in the respective sets $(s+s')^\circ$, $(s'+b)^\circ$, 
$(b+b')^\circ$, $(b'+a)^\circ$, $(a+a')^\circ$, $(a'+s)^\circ$.
Further, let $\sigma_0=\sigma\sigma'$, $\beta_0=\beta\beta'$ and 
$\alpha_0=\alpha\alpha'$ (Fig.~\ref{subfig:BciFrame}). Note that all points in
$s$, $a$ and $b$ that are on $\Gamma$ are on $\sigma_0$, $\alpha_0$ and $\beta_0$, 
respectively. Let $o_0'\in\sigma_0 \cap\ti s$, and let $q^*\in \beta_0\cap\ti b$. 
By \eqref{eq:BciInf2} and Lemma~\ref{lma:StackLemmai} 
we can connect $o_0'$ to $q^*$ by a Jordan arc $\beta_{0,1}'\beta_{0,2}\beta_{0,3}'$ 
whose segments lie in the respective sets $\ti{(s+b_{0,1})}$, 
$\ti{(b_{0,1}+b_{0,2}+b_{0,3})}$ and $\ti{(b+b_{0,3})}$ (Fig.~\ref{subfig:BciBeta0}). 
Let $o_0$ be the last point on $\beta_{0,1}'$ that is on $\sigma_0$ and let $\beta_{0,1}$ 
be the final segment of $\beta_{0,1}'$ starting at $o_0$. Similarly, let $q_0$ 
be the first point on $\beta_{0,3}'$ that is on $\beta_0$ and let $\beta_{0,3}$
be the initial segment of $\beta_{0,3}'$ ending at $q_0$. Hence, the arc $\beta_{0,1}\beta_{0,2}
\beta_{0,3}$ divides one of the regions bounded by $\Gamma$ into two 
sub-regions. We denote the sub-region whose boundary is disjoint from $\alpha_0$ by $U_0$, and 
the other sub-region we denote by $U_0'$. 
Let $\beta_1:=\beta_{0,3}\beta_0[q_0,r]\subseteq \ti{(b+b_{0,3}+b_{1,3})}$.
\begin{figure}[h]\begin{center}
\scriptsize{
	\subfloat[The arcs $\alpha_0$, $\beta_0$ and $\sigma_0$.]{
	\begin{tikzpicture}
		\clip (-.5,-.5) rectangle (3.5,2.5);
		\draw (0,0)rectangle(3,2);
		\fill (0,2) circle(1pt) node[above left]{$q$};
		\fill (0,0) circle(1pt) node[below left]{$p$};
		\fill (3,1) circle(1pt) node[right]{$r$};		
		\draw (0,.75)  node[left]{$\sigma_0$};		
		\draw (1.5,2)  node[above]{$\beta_0$};
		\draw (1.5,0)  node[below]{$\alpha_0$};
	\end{tikzpicture}\label{subfig:BciFrame}}
	\subfloat[The regions $U_0$ and $U_0'$.]{
	\begin{tikzpicture}
		\clip (-.5,-.5) rectangle (3.5,2.5);
		\draw (0,0)rectangle(3,2);
		\foreach \x in {(0,2),(0,0),(3,1)} \fill \x circle(1pt);
		
		\fill (0,1.5) circle(1pt);
		\draw (0,1) circle(1pt) coordinate (o0') node[left]{$o_0'$};
		\draw(0,0.3) circle(1pt) coordinate (o0) node[left]{$o_0$};
		\draw[densely dotted,decorate,decoration=snake] (o0') -- (o0);
		
		\draw (1.5,2) circle(1pt) coordinate(qs) node[above]{$q^*$};
		\draw (1,2) circle(1pt) coordinate (q0) node[above]{$q_0$};
		\draw[densely dotted,decorate,decoration=snake](q0)--(qs);

		\fill (o0)--($(o0)!.2!(q0)$) circle(1pt) node[midway,below right]{$\beta_{0,1}$}--
					($(o0)!.7!(q0)$) circle(1pt) node[midway,below right]{$\beta_{0,2}$}--
					(q0) node[midway,below right]{$\beta_{0,3}$};
		\draw (o0)--(q0);
		\node at (.35,1.6){$U_0$};
		\node at (2.25,1){$U_0'$};
	\end{tikzpicture}\label{subfig:BciBeta0}}\\
	\newcommand{\BciEstablishRegionsViUi}[1]{
	\begin{tikzpicture}
		\clip (-.5,-.5) rectangle (3.5,2.5);
		\draw (0,0)rectangle(3,2);
		\foreach \x in {(0,2),(0,0),(3,1)} \fill \x circle(1pt);
		\draw (0,.5) coordinate (p) circle(1pt)--(1.5,2) circle(1pt) coordinate(q);
		\node at (.35,1.75){$U_#1$};		
		\foreach \x in {0.2,0.9} \fill ($(p)!\x!(q)$) circle(1pt);
		\draw ($(p)!.4!(q)$) circle(1pt) coordinate (e0') node[above left ]{$e_#1'$};
		\draw ($(p)!.75!(q)$) circle(1pt) coordinate (e0) node[above left]{$e_#1$};
		\draw[densely dotted,decorate,decoration=snake] (e0') --(e0);
		\draw (2,0) circle(1pt) coordinate(qs) node[below]{$p^*$};
		\draw ($(0,0)!(e0)!(1,0)$) circle(1pt) coordinate (p0) node[below]{$p_#1$}--(e0);
		\draw[densely dotted, decorate,decoration=snake] (p0)--(qs);
		\foreach \x in {0.3,0.7} \fill ($(e0)!\x!(p0)$) circle(1pt);
		\node[right] at ($(e0)!.15!(p0)$) {$\alpha_{#1,1}$};
		\node[right] at ($(e0)!.5!(p0)$) {$\alpha_{#1,2}$};
		\node[right] at ($(e0)!.8!(p0)$) {$\alpha_{#1,3}$};
		\node at (.5,.4){$V_#1$};
		\node at (2.25,1){$W_#1$};
	\end{tikzpicture}\label{subfig:BciAlpha#1}}
	\subfloat[The regions $V_0$ and $W_0$.]{
	\BciEstablishRegionsViUi{0}}
	\subfloat[The regions redrawn.]{\label{subfig:BciRedrawn}
	\begin{tikzpicture}		
		\clip (-.5,-.5) rectangle (3.5,2.5);
		\draw[densely dashed] (0,-.2) -- (0,2.2)--++(1.5,0)--++(0,-.2) coordinate(q0) node[above right]{$q_0$};
		\draw[densely dashed] (0,-.2) --++(1.5,0)--++(0,.2)  coordinate (p0) node[below right]{$p_0$};
		\foreach \x in {(0,2.2),(0,-.2),(3,1)} \fill \x circle(1pt);
		\foreach \x in {(p0),(q0)} \draw \x circle(1pt);
		\coordinate (p) at (.75,2);
		\draw (.75,0) coordinate(q) rectangle (3,2);
		\foreach \x in {0,.55,1} \fill($(p)!\x!(q)$) circle(1pt);
		\draw[densely dashed] (0,1.25) coordinate(o0) node[left]{$o_0$}--++(.75,0) coordinate(e0) node[right]{$e_0$};
		\foreach \x in  {(o0),(e0)} \draw \x  circle(1pt);
		
		\node at (.35,1.75){$U_0$};
		\node at (.35,.5){$V_0$};
		\node at (1.1,.45){$\alpha_{0,2}$};
		\node at (2,1){$W_0$};
		\draw[densely dotted, latex-latex,rounded corners=1pt] ($(p)+(0.05,-.1)$)--
			(2.9,1.9) node[below left]{$\beta_1$}--(2.9,1.1);
		\draw[densely dotted, latex-latex,rounded corners=2pt] ($(q)+(0.05,.1)$)--
			(2.9,.1) node[above left]{$\alpha_1$}--(2.9,.9);
	\end{tikzpicture}}\\
	\subfloat[The regions $U_1$ and $U_1'$.]{\label{subfig:BciBeta1}
	\begin{tikzpicture}		
		\clip (-.5,-.5) rectangle (3.5,2.5);
		\draw[densely dashed] (0,-.2) -- (0,1.75) coordinate(o0)--++(.75,0) coordinate (e0);
		\draw[densely dashed] (0,-.2) --++(1.5,0)--++(0,.2)  coordinate (p0);
		\foreach \x in {(0,-.2),(3,1)} \fill \x circle(1pt);
		\foreach \x in {(p0),(o0),(e0)} \draw \x circle(1pt);
		\coordinate (p) at (.75,2);
		\draw (.75,0) coordinate(q) rectangle (3,2);
		\foreach \x in {0,.25,1} \fill($(p)!\x!(q)$) circle(1pt);
		
		\draw (.75,1.25) circle(1pt) coordinate (o1') node[left]{$o_1'$};
		\draw(.75,0.35) circle(1pt) coordinate (o1) node[right]{$o_1$};
		\draw[densely dotted,decorate,decoration=snake] (o1') -- (o1);
		
		\draw (2,2) circle(1pt) coordinate(qs) node[above]{$q^*$};
		\draw (1.5,2) circle(1pt) coordinate (q1) node[above]{$q_1$};
		\draw[densely dotted,decorate,decoration=snake](q1)--(qs);
		
		\draw (o1)--(q1);
		\foreach \x in {.2,.8} \fill($(o1)!\x!(q1)$) circle(1pt);
		\node[below right] at ($(o1)!.5!(q1)$) {$\beta_{1,2}$};
		
		\node at (.35,.75){$V_0$};
		\node at (1.1,1.75){$U_1$};
		\node at (2.3,1){$U_1'$};
		
		\draw[densely dotted, latex-latex,rounded corners=2pt] ($(q)+(0.05,.1)$)--
			(2.9,.1) node[above left]{$\alpha_1$}--(2.9,.9);
	\end{tikzpicture}}
	\subfloat[The regions $V_1$ and $W_1$.]{\BciEstablishRegionsViUi{1}
	}}
	\end{center} \caption{Establishing infinite sequences of arcs.}
	\label{fig:BciInfCmp}
	\end{figure}

We will now construct a cross-cut $\alpha_{0,1}\alpha_{0,2} \alpha_{0,3}$ in $U_0'$. 
Let $e_0'\in\beta_{0,2}\cap \ti{b_{0,2}}$ and $p^*\in\alpha_0\cap\ti a$. By \eqref{eq:BciInf3} 
and Lemma~\ref{lma:StackLemmai} we can connect $e_0'$ to $p^*$ by a Jordan arc 
$\alpha_{0,1}'\alpha_{0,2}\alpha_{0,3}'$ whose segments lie in the respective 
sets $\ti{(b_{0,2}+a_{0,1})}$, $\ti{(a_{0,1}+a_{0,2}+a_{0,3})}$ and $\ti{(a+a_{0,3})}$ 
(Fig.~\ref{subfig:BciAlpha0}).  Let $e_0$ be the last point on $\alpha_{0,1}'$ that is on 
$\beta_{0,2}$ and let $\alpha_{0,1}$ be the final segment of $\alpha_{0,1}'$ starting at $e_0$.
Similarly, let $p_0$ be the first point on $\alpha_{0,3}'$ that is on $\alpha_0$ and let $\alpha_{0,3}$
be the initial segment of $\alpha_{0,3}'$ ending at $p_0$. By the non-overlapping constraints, 
$\alpha_{0,1}\alpha_{0,2}\alpha_{0,3}$ does not intersect the boundaries of $U_0$ and $U_0'$
except at its endpoints, and hence it is a cross-cut in one of these regions. Moreover, that region has 
to be $U_0'$ since the boundary of $U_0$ is disjoint from $\alpha_0$. So, $\alpha_{0,1}\alpha_{0,2}
\alpha_{0,3}$ divides $U_0'$ into two sub-regions. We denote the sub-region whose boundary contains 
 $\beta_1$ by $W_0$, and the other sub-region we denote by $V_0$. Let 
 $\alpha_1:=\alpha_{0,3}\alpha_0[p_0,r]$ (Fig~\ref{subfig:BciRedrawn}). Note that 
 $\alpha_1\subseteq \ti{(a+a_{0,3}+a_{1,3})}$.
 
 We can now forget about the region $U_0$, and start constructing a cross-cut 
 $\beta_{1,1}\beta_{1,2}\beta_{1,3}$ in $W_0$. As before, let $\beta_{1,1}'\beta_{1,2}\beta_{1,3}'$
 be a Jordan arc connecting a point $o_1'\in\alpha_{0,2}\cap\ti a_{0,2}$ to a point 
 $q^*\in\beta_1\cap\ti b_i$ such that its segments are contained in the respective sets 
 $\ti{(a_{0,2}+b_{1,1})}$, $\ti{(b_{1,1}+b_{1,2}+b_{1,3})}$ and $\ti{(b+b_{1,3})}$. As before, we choose 
 $\beta_{1,1}\subseteq \beta_{1,1}'$ and $\beta_{1,3}\subseteq \beta_{1,3}'$ so that the Jordan arc
 $\beta_{1,1}\beta_{1,2}\beta_{1,3}$ with its endpoints removed is disjoint from the boundaries of 
 $V_0$ and $W_0$. Hence $\beta_{1,1}\beta_{1,2}\beta_{1,3}$ has to be a cross-cut in $V_0$ or 
 $W_0$,  and since the boundary of $V_0$ is disjoint from $\beta_1$ it has to be a cross-cut in  
 $W_0$  (Fig.~\ref{subfig:BciBeta1}). So, $\beta_{1,1}\beta_{1,2}\beta_{1,3}$ separates $W_0$ 
 into two regions $U_1$ and $U_1'$ so that the boundary of $U_1$ is disjoint from $\alpha_1$.
 Let $\beta_2:=\beta_{1,3}\beta_1[q_1,r]\subseteq \ti{(b+b_{0,3}+b_{1,3})}$. Now, we can ignore 
 the region $V_0$, and reasoning as before we can construct a cross-cut 
 $\alpha_{1,1}\alpha_{1,2}\alpha_{1,3}$ in $U_1'$ dividing it into two sub-regions $V_1$ and $W_1$. 
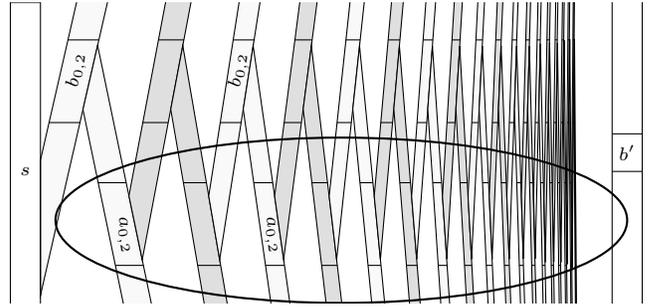
\begin{figure}[h]
\begin{tikzpicture}[scale=1,
		c0/.style={fill=gray!5},
		c1/.style={fill=gray!5},
		c2/.style={fill=gray!25},
		c3/.style={fill=gray!25}
	]
		\clip (-.5,.5) rectangle (8,4.5);
		\scriptsize{
		\coordinate (H) at (0,5);
		\coordinate (P) at (0,0);
		\coordinate (Q) at (H);
		\coordinate (start) at (0,1.5);\coordinate (start') at ($(start)+(1,0)$);
		\coordinate (first) at (0,2.9);\coordinate (first') at ($(first)+(1,0)$);
		\coordinate (second) at (0,4);\coordinate (second') at ($(second)+(1,0)$);
		\coordinate (end) at (H);\coordinate (end') at ($(end)+(1,0)$);
		
		\coordinate (width) at (.8,0);
		
		\foreach \d in {0,...,35}
		{
			\pgfmathparse{.91^\d}			\let\factor=\pgfmathresult 
			\pgfmathparse{2*mod(\d,2)-1}	\let\prt=\pgfmathresult
			\pgfmathsetcounter{mdf}{mod(\d,4)}
			\pgfmathsetcounter{md}{mod(\d,2)}
			\pgfmathsetcounter{mdd}{mod(\d/2,2)}
			
			\coordinate (Pm) at ($(intersection of P--Q and start--start')$);
			\coordinate (Qm) at ($(end-|Pm)+\factor*(width)$);
			\foreach \a/\ind in {-90/0,90/} 
			{
				\coordinate (Pm') at ($(Pm)!{.2cm*\factor}!{\prt*\a}:(Qm)$);
				\coordinate (Qm') at ($(Qm)!{.2cm*\factor}!{-\prt*\a}:(Pm)$);
				\coordinate (P\ind) at ($(intersection of Pm'--Qm' and P--Q)$);
				\coordinate (Q\ind) at ($(intersection of Pm'--Qm' and end--end')$);
			}
			\draw[c\themdf] (P0)--(Q0)--(Q)--(P)--cycle;			
			
			\draw ($(intersection of first--first' and P0--Q0)$) 
				--($(intersection of first--first' and P--Q)$);
			\draw ($(intersection of second--second' and P0--Q0)$)
				--($(intersection of second--second' and P--Q)$);			
			
			\coordinate (start) at ($(H)-(start)$);\coordinate (start') at ($(start)+(1,0)$);
			\coordinate (first) at ($(H)-(first)$);\coordinate (first') at ($(first)+(1,0)$);			
			\coordinate (second) at ($(H)-(second)$);\coordinate (second') at ($(second)+(1,0)$);
			\coordinate (end) at ($(H)-(end)$);\coordinate (end') at ($(end)+(1,0)$);
			
			\ifnum \d<6
				\ifnum \themdd=0
					\ifnum \themd=0
						\node[rotate=81] at ($(Pm)!.6!(Qm)$) {$b_{\themdd,2}$};
					\fi
					\ifnum \themd=1
						\node[rotate=-81] at ($(Pm)!.6!(Qm)$) {$a_{\themdd,2}$};
					\fi
				\fi
			\fi
		}
		}
		\draw[thick] (4,1.6) ellipse (3.8 and 1.1);
		\draw (-0.4,-0.4) rectangle ($(8,.4)+(H)$); 	\draw (0,0) rectangle ($(7.6,0)+(H)$);
		\draw (0,0)--++(-.4,0);
		\draw (H)--++(0,.4);
		\draw ($(H)!.1!(0,0)$)--++(-.4,0);
		\draw ($(.4,0)$)--++(0,-.4);
		\draw ($(H)!.45!(0,0)+(7.6,0)$) --++(.4,0);
		\draw ($(H)!.55!(0,0)+(7.6,0)$) --++(.4,0);
		\node at ($(H)!.55!(0,0)+(-.2,0)$){$s$};
		\node at ($(H)!.025!(0,0)+(-.2,0)$){$s'$};
		\node at (0,-.2){$a'$};
		\node at ($(H)+(3.8,.2)$){$b$};
		\node at ($(H)!.5!(0,0)+(7.8,0)$){$b'$};
		\node at (3.8,-.2){$a$};
	\end{tikzpicture}
\caption {Separating $a_{0,2}$ from $b_{0,2}$ by a Jordan curve.}
\label{fig:BciInftySep}
\end{figure}
 
 Evidently, this process continues forever. Now, note that by
 construction and \eqref{eq:BciInf5}, $W_{2i}$ contains in its
 interior $\beta_{2i+1,2}$ together with the connected component $c$
 of $b_{1,2}$ which contains $\beta_{2i+1,2}$. On the other hand,
 $W_{2i+2}$ is disjoint from $c$, and since $W_i\subseteq W_j$, $i>j$,
 $b_{1,2}$ has to have infinitely many connected components.

So far we know that the $\cBCci$-formula $\phi^*_{\infty}$ forces
infinitely many components.  Now we replace every conjunct in
$\phi^*_{\infty}$ of the form $\lnot C(r,s)$ by $\eta^*(r,s,\bar v)$,
where $\bar v$ are fresh variables each time. The resulting formula
entails $\phi^*_{\infty}$, so we only have to show that it is still
satisfiable. By Lemma~\ref{lma:Cci2BciStar} (\emph{ii}), it suffices
to separate by Jordan curves every two regions on
Fig.~\ref{fig:InfBci} that are required to be disjoint. It is shown on
Fig.~\ref{fig:BciInftySep} that there exists a curve which separates
the regions $b_{0,2}$ and $a_{0,2}$. All other non-contact constraints
are treated analogously.
\end{proof}

\section{Undecidability of \cBc{} and \cBCc{} in the Euclidean plane}
\label{sec:UndecidabilityB}
In this section, we prove the undecidability of the problems
$\Sat(\cL,\RC(\R^2))$ and $\Sat(\cL,\RCP(\R^2))$, for $\cL$ any of
$\cBc$, $\cBCc$, $\cBci$ or $\cBCci$. We begin with some technical
preliminaries, again employing the notation from the proof of
Theorem~\ref{theo:inftyBci}: if $\alpha$ is a Jordan arc, and $p$, $q$
are points on $\alpha$ such that $q$ occurs after $p$, we denote by
$\alpha[p,q]$ the segment of $\alpha$ from $p$ to $q$. For brevity of
exposition, we allow the case $p= q$, treating $\alpha[p,q]$ as a
(degenerate) Jordan arc.

Our first technical preliminary is to formalize our earlier observations
concerning the formula $\stack(\tseq{a}_1, \ldots, \tseq{a}_n)$,
defined by:
\begin{equation*}
\bigwedge_{1 \leq i \leq n}
     c(\intermediate{a}_i + \inner{a}_{i + 1} + \cdots + \inner{a}_n)
\mbox{}\wedge
\bigwedge_{j - i > 1} \neg C(a_i,a_j).
\end{equation*}
\begin{lemma}
\label{lma:stackLemma}
Let $\tseq{a}_1,\ldots,\tseq{a}_n$ be 3-regions satisfying
$\stack(\tseq{a}_1,\ldots,\tseq{a}_n)$, for $n \geq 3$.  Then, for
every point $p_0\in \intermediate{a}_1$ and every point $p_n \in
\inner{a}_n$, there exist points $p_1, \ldots, p_{n-1}$ and Jordan
arcs $\alpha_1, \ldots, \alpha_n$ such that\textup{:}
\begin{itemize}
\item[\textup{(}i\textup{)}]
$\alpha = \alpha_1 \cdots \alpha_n$ is a Jordan arc from $p_0$ to
$p_n$\textup{; }
\item[\textup{(}ii\textup{)}] for all $i$ \textup{(}$0 \leq i < n$\textup{)}, $p_i \in
\intermediate{a}_{i+1} \cap \alpha_i$\textup{;} and 
\item[\textup{(}iii\textup{)}]
for all $i$ \textup{(}$1 \leq i \leq n$\textup{)}, $\alpha_i \subseteq a_i$.
\end{itemize}
\end{lemma}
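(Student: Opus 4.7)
I would prove this by induction on $n$, mimicking the construction of Lemma~\ref{lma:StackLemmai} but compensating for the fact that $\stack$ asserts only \emph{connectedness} (not interior-connectedness) of the sums $\intermediate{a}_i + \inner{a}_{i+1} + \cdots + \inner{a}_n$. The basic topological observation is that $\intermediate{a}_i \ll a_i$ forces $\intermediate{a}_i \subseteq \ti{a_i}$, so two points lying in the same component of $\intermediate{a}_i$ lie in a single open, hence arc-connected, component of $\ti{a_i}$ and can therefore be joined by a Jordan arc inside $a_i$.

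The central technical step is a \emph{chaining lemma}: for each $i$ with $1 \leq i < n$, every connected component $P$ of $\intermediate{a}_i$ meets $\inner{a}_{i+1}$. Indeed, $\neg C(a_i, a_j)$ for $j \geq i+2$ gives $\intermediate{a}_i \cap \inner{a}_j = \emptyset$, so within the connected sum $\intermediate{a}_i + \inner{a}_{i+1} + \cdots + \inner{a}_n$ the only way $P$ can attach to the rest is through $\inner{a}_{i+1}$; otherwise $P$ would be a clopen proper subset, contradicting the $c$-conjunct of $\stack$. For the boundary case $i = n$, the conjunct $c(\intermediate{a}_n)$ asserts $\intermediate{a}_n$ itself is connected.

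The inductive construction now runs as follows. Starting from the prescribed $p_0 \in \intermediate{a}_1$, let $P$ be its component in $\intermediate{a}_1$; by the chaining lemma pick $p_1 \in P \cap \inner{a}_2$, noting $\inner{a}_2 \subseteq \intermediate{a}_2$ preserves the invariant for the next iteration. Iterate up to $i = n$, where the prescribed $p_n$ and the already-chosen $p_{n-1}$ both lie in the connected set $\intermediate{a}_n$ and hence in a common component. At each stage, $p_{i-1}$ and $p_i$ belong to a single component $D_i$ of $\ti{a_i}$, so a Jordan arc $\alpha_i \subseteq D_i \subseteq a_i$ from $p_{i-1}$ to $p_i$ exists by open arc-connectivity in $\R^2$.

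The principal obstacle is making $\alpha = \alpha_1 \cdots \alpha_n$ a genuine Jordan arc. Non-consecutive pieces are disjoint automatically, since $\neg C(a_i, a_j)$ yields $a_i \cap a_j = \emptyset$ for $|i-j| \geq 2$; only consecutive pieces may cross, inside $a_i \cap a_{i+1}$. I sidestep this by constructing $\alpha_{i+1}$ to avoid $\alpha_i \setminus \{p_i\}$ outright: since $\inner{a}_{i+1} \ll \intermediate{a}_{i+1}$ places $p_i$ in the topological interior of $\intermediate{a}_{i+1}$, one may depart $p_i$ inside a small open disk $B \subseteq \intermediate{a}_{i+1}$ transverse to $\alpha_i$, and then use the fact that a Jordan arc does not separate $\R^2$ to route the remainder of $\alpha_{i+1}$ around $\alpha_i$ within $D_{i+1}$. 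If one prefers not to argue from local transversality, the standard trimming device of Lemma~\ref{lma:StackLemmai} may be applied instead, with the same containment $p_i \in \ti{\intermediate{a}_{i+1}}$ ensuring any redefined junction stays within $\intermediate{a}_{i+1}$, so condition (ii) is preserved.
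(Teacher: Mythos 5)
Your proof breaks at the ``chaining lemma,'' and the gap is not cosmetic: the claim is false for arbitrary regular closed sets, which is precisely the case this lemma exists to handle. From $c(\intermediate{a}_i + \inner{a}_{i+1} + \cdots + \inner{a}_n)$ you conclude that a component $P$ of $\intermediate{a}_i$ missing $\inner{a}_{i+1}$ would be ``a clopen proper subset,'' but a component of a closed set need not be relatively open in it: the other components of $\intermediate{a}_i$ may accumulate onto $P$, so deleting $P$ from the sum need not leave a closed set, and connectedness only forces $P \cap \tc{(S \setminus P)} \neq \emptyset$, where the contact can be supplied by those other components rather than by $\inner{a}_{i+1}$. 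Concretely (for $n=3$): let $\intermediate{a}_1$ be the half-strip $P=[-1,0]\times[0,\infty)$ together with pairwise disjoint half-strips $Q_k=[c_k,d_k]\times[0,\infty)$ with $d_k\to 0$, accumulating onto $P$; let $\inner{a}_2=\{(x,y): 0<x\le 6,\ y\ge 1/x\}$, which is closed, meets every $Q_k$, but misses $P$ because its contact points with the $Q_k$ escape to infinity; take $a_2=\R^2$, $a_1$ a large vertical slab containing $\intermediate{a}_1$ in its interior, and $\tseq{a}_3$ a nest of disks far to the right overlapping $\intermediate{a}_2$ but disjoint from $a_1$. All conjuncts of $\stack(\tseq{a}_1,\tseq{a}_2,\tseq{a}_3)$ (and the implicit 3-region constraints) hold, yet for $p_0\in P$ the component of $p_0$ in $\intermediate{a}_1$ contains no point of $\inner{a}_2$, so your induction cannot take its first step. (With polygons the finiteness would rescue the clopen argument, but the 3-region machinery is there exactly to cope with wild elements of $\RC(\R^2)$.)

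The paper avoids components of $\intermediate{a}_i$ altogether: since $\intermediate{a}_i\subseteq\ti{a}_i$ and $\inner{a}_j\subseteq\ti{\intermediate{a}_j}$, the connected set $\intermediate{a}_i + \inner{a}_{i+1} + \cdots + \inner{a}_n$ lies in the \emph{open} set $\ti{(a_i + \intermediate{a}_{i+1} + \cdots + \intermediate{a}_n)}$, hence in a single open, arc-connected component of it; so from the current point one draws a Jordan arc all the way to $p_n$ inside that open set, and this arc must visit $\intermediate{a}_{i+1}$ because it starts in $a_i$, ends in $\intermediate{a}_n$, and $a_i$ is disjoint from $\intermediate{a}_{i+2}+\cdots+\intermediate{a}_n$. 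The Jordan property of the concatenation is then obtained by trimming each such ``tail'' arc against the previous one, non-consecutive pieces being disjoint by $\neg C(a_i,a_j)$. Your two splicing fixes are also unsound as stated: a Jordan arc, while it does not separate $\R^2$, can certainly separate the open component $D_{i+1}$ (an arc crossing a disk does), so non-separation of the plane does not let you route $\alpha_{i+1}$ around $\alpha_i$ inside $D_{i+1}$; and plain trimming relocates the junction to an arbitrary point of $a_i\cap a_{i+1}$, which need not lie in $\intermediate{a}_{i+1}$ --- in the paper's construction condition (ii) survives because $p_i$ remains an interior point of $\alpha_i$, while the actual junction with $\alpha_{i+1}$ is a different point supplied by an auxiliary arc inside $\ti{a}_i\cup\{p_i\}$.
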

\begin{proof}
Since $\intermediate{a}_1 + \inner{a}_2 + \cdots + \inner{a}_n$ is a
connected subset of $\ti{(a_1 + \intermediate{a}_2 + \cdots +
  \intermediate{a}_n)}$, let $\beta_1$ be a Jordan arc connecting
$p_0$ to $p_n$ in $\ti{(a_1 + \intermediate{a}_2 + \cdots +
  \intermediate{a}_n)}$. Since $a_1$ is disjoint from all the $a_i$
except $a_2$, let $p_1$ be the first point of $\beta_1$ lying in
$\intermediate{a}_2$, so $\beta_1[p_0,p_1]\subseteq \ti{a}_1\cup \{
p_1\}$, i.e., the arc $\beta_1[p_0,p_1]$ is either included in
$\ti{a}_1$, or is an end-cut of $\ti{a}_1$. (We do not rule out $p_0 =
p_1$.) Similarly, let $\beta'_2$ be a Jordan arc connecting $p_1$ to
$p_n$ in $\ti{(a_2 + \intermediate{a}_3 + \cdots +
  \intermediate{a}_n)}$, and let $q_1$ be the last point of $\beta'_2$
lying on $\beta_1[p_0,p_1]$. If $q_1 = p_1$, then set $v_1 = p_1$,
$\alpha_1 = \beta_1[p_0,p_1]$, and $\beta_2 = \beta'_2$.  so that the
endpoints of $\beta_2$ are $v_1$ and $p_n$.  Otherwise, we have $q_1
\in \ti{a}_1$.  We can now construct an arc $\gamma_1 \subseteq
\ti{a}_1 \cup \{ p_1 \}$ from $p_1$ to a point $v_1$ on
$\beta'_2[q_1,p_n]$, such that $\gamma_1$ intersects
$\beta_1[p_0,p_1]$ and $\beta'_2[q_1,p_n]$ only at its endpoints,
$p_1$ and $v_1$ (upper diagram in Fig.~\ref{fig:stackLemma}). Let
$\alpha_1 = \beta_1[p_0,p_1]\gamma_1$, and let $\beta_2 =
\beta'_2[v_1,p_n]$.

Since $\beta_2$ contains a point $p_2 \in \intermediate{a}_3$, we may
iterate this procedure, obtaining $\alpha_2, \alpha_3, \ldots
\alpha_{n-1}, \beta_{n}$. We remark that $\alpha_i$ and $\alpha_{i+1}$
have a single point of contact by construction, while $\alpha_i$ and
$\alpha_j$ ($i < j-1$) are disjoint by the constraint $\neg C(a_i,
a_j)$.  Finally, we let $\alpha_n = \beta_n$ (lower diagram in
Fig.~\ref{fig:stackLemma}).
\end{proof}
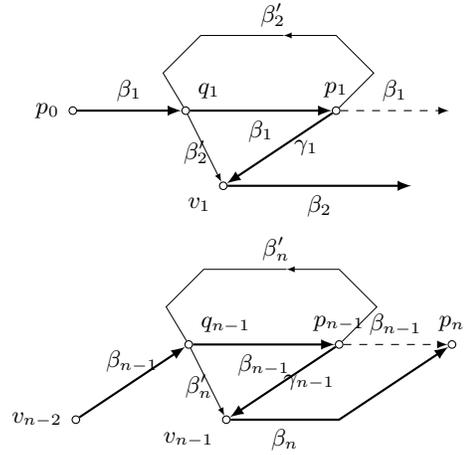
\begin{figure}
\begin{center}
\begin{tikzpicture}[>=latex, point/.style={circle,draw=black,minimum size=1mm,inner sep=0pt}]
						
    \node[point,label=left:{\small $p_0$}] (P0) at (-0.5,0) {};
    \node[point,label=above right:{\small $q_1$}] (Q1) at (1,0) {};
    \node[point,label=below left:{\small $v_1$}] (Q1') at (1.5,-1) {};
    \node[point,label=above:{\small $p_1$}] (P1) at (3,0) {};
    \draw[->,thick] (P0) to node[above] {\footnotesize $\beta_1$} (Q1);
    \draw[->,thick] (Q1) to node[below] {\footnotesize $\beta_1$} (P1);
    \draw[->,dashed] (P1) to node[above] {\footnotesize $\beta_1$} (4.5,0);

	\draw[->,ultra thin] (P1) -- ++(.5,0.5) --++(-.5,.5)  -- (2.3,1);
	\draw[ultra thin] (2.3,1) to node[above, very near start]{\footnotesize $\beta_2'$} (1.2,1) -- ++(-.5,-0.5) -- (Q1);
    \draw[->,ultra thin] (Q1) to node[below,near start]{\footnotesize $\beta_2'$} (Q1');
	\draw[->,thick] (P1) to node[below,near start]{\footnotesize $\gamma_1$} (Q1');
	
	\draw[->,thick] (Q1') to node[below] {\footnotesize $\beta_2$} (4,-1);
\end{tikzpicture}
\begin{tikzpicture}[>=latex, point/.style={circle,draw=black,minimum size=1mm,inner sep=0pt}]
    \node[point,label=left:{\small $v_{n-2}$}] (Qn2) at (6.5,-1) {};
    \node[point,label=above right:{\small $q_{n-1}$}] (Qn1) at (8,0) {};
    \node[point,label=below left:{\small $v_{n-1}$}] (Qn1') at (8.5,-1) {};
    \node[point,label=above:{\small $p_{n-1}$}] (Pn1) at (10,0) {};
    \node[point,label=above:{\small $p_n$}] (Pn) at (11.5,0) {};
    \draw[->,thick] (Qn2) to node[above] {\footnotesize $\beta_{n-1}$} (Qn1);
    \draw[->,thick] (Qn1) to node[below] {\footnotesize $\beta_{n-1}$} (Pn1);
    \draw[->,dashed] (Pn1) to node[above] {\footnotesize $\beta_{n-1}$} (Pn);

	\draw[->,ultra thin] (Pn1) -- ++(.5,0.5) --++(-.5,.5)  -- (9.3,1);
	\draw[ultra thin] (9.3,1) to node[above, very near start]{\footnotesize $\beta_n'$} (8.2,1) -- ++(-.5,-0.5) -- (Qn1);
    \draw[->,ultra thin] (Qn1) to node[below,near start]{\footnotesize $\beta_n'$} (Qn1');
	\draw[->,thick] (Pn1) to node[below,near start]{\footnotesize $\gamma_{n-1}$} (Qn1');
	
	\draw[->,thick] (Qn1') to node[below] {\footnotesize $\beta_n$} (10,-1) to (Pn);
\end{tikzpicture}
\end{center}
\caption{Proof of Lemma~\ref{lma:stackLemma}.}\label{fig:stackLemma}
\end{figure}

In fact, we can add a `switch' $w$ to the formula $\stack(\tseq{a}_1,
\ldots, \tseq{a}_n)$, in the following sense.  If $w$ is a region
variable, consider the formula $\stack_w(\tseq{a}_1, \ldots, \tseq{a}_n)$
\begin{align*}
\neg C(w\cdot \intermediate{a}_1, (-w) \cdot \intermediate{a}_1) \ \wedge \ \stack((-w)\cdot \tseq{a}_1, \tseq{a}_2, \ldots, \tseq{a}_n),
\end{align*}
where $w \cdot \tseq{a}$ denotes the 3-region $(w\cdot a,w\cdot
\intermediate{a},w\cdot \inner{a})$. The first conjunct of
$\stack_w(\tseq{a}_1, \ldots, \tseq{a}_n)$ ensures that any component
of $\intermediate{a}_1$ is either included in $w$ or included in
$-w$. The second conjunct then has the same effect as
$\stack(\tseq{a}_1, \ldots, \tseq{a}_n)$ for those components of
$\intermediate{a}_1$ included in $-w$. That is, if $p \in
\intermediate{a}_1 \cdot (-w)$, we can find an arc $\alpha_1 \cdots
\alpha_n$ starting at $p$, with the properties of
Lemma~\ref{lma:stackLemma}.  However, if $p \in \intermediate{a} \cdot
w$, no such arc need exist.  Thus, $w$ functions so as to
`de-activate' the formula $\stack_w(\tseq{a}_1, \ldots, \tseq{a}_n)$
for any component of $\intermediate{a}_1$ included in it.

As a further application of Lemma~\ref{lma:stackLemma}, consider the
formula $\frameFla(\tseq{a}_0, \ldots, \tseq{a}_n)$ given by:
\begin{align}
  \stack(\tseq{a}_0, \ldots, \tseq{a}_{n-1}) \wedge \lnot C(a_n,a_1+\ldots+a_{n-2})\wedge\nonumber\\
     c(\intermediate a_n)\wedge \intermediate{a}_0\cdot \intermediate a_n\neq 0\wedge 
	\inner{a}_{n-1}\cdot \intermediate a_n\neq 0.
\label{eq:frame}
\end{align}
This formula allows us to construct Jordan curves in the plane, in
the following sense:
\begin{lemma}
\label{lma:FrameLemma}
Let $n\geq 3$, and suppose $\frameFla(\tseq{a}_0, \ldots,
\tseq{a}_n)$.  Then there exist Jordan arcs $\gamma_0$, \ldots,
$\gamma_n$ such that $\gamma_0\ldots\gamma_n$ is a Jordan
curve, and $\gamma_i \subseteq a_i$, for all $i$, $0 \leq i \leq n$.
\end{lemma}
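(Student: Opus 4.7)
The plan is to build $\gamma_0,\ldots,\gamma_{n-1}$ by applying Lemma~\ref{lma:stackLemma} to the embedded $\stack$ constraint, and then close the loop with an arc $\gamma_n$ inside $a_n$ obtained from the shell $\intermediate{a}_n$ of $\tseq{a}_n$.

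First, since any non-empty regular closed set has non-empty interior, the conjuncts $\intermediate{a}_0\cdot\intermediate{a}_n\neq 0$ and $\inner{a}_{n-1}\cdot\intermediate{a}_n\neq 0$ let me pick points $p^*\in\ti{(\intermediate{a}_0\cap\intermediate{a}_n)}$ and $q^*\in\ti{(\inner{a}_{n-1}\cap\intermediate{a}_n)}$. Applying Lemma~\ref{lma:stackLemma} (with a harmless index shift) to the conjunct $\stack(\tseq{a}_0,\ldots,\tseq{a}_{n-1})$, using $p^*$ as the initial point in $\intermediate{a}_0$ and $q^*$ as the terminal point in $\inner{a}_{n-1}$, yields intermediate points $p_1,\ldots,p_{n-2}$ and Jordan arcs $\alpha_0,\ldots,\alpha_{n-1}$ such that $\alpha:=\alpha_0\cdots\alpha_{n-1}$ is a Jordan arc from $p^*$ to $q^*$ and $\alpha_i\subseteq a_i$ for $0\le i\le n-1$.

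Next, I construct the closing arc. Because $\intermediate{a}_n\ll a_n$, we have $\intermediate{a}_n\subseteq\ti{a}_n$; because $c(\intermediate{a}_n)$ holds and $\R^2$ is locally connected, $\intermediate{a}_n$ is contained in a single connected component $U$ of the open set $\ti{a}_n$. Then $U$ is a connected open subset of $\R^2$, hence arcwise connected, and $p^*,q^*\in\intermediate{a}_n\subseteq U$. Fix a Jordan arc $\beta'\subseteq U$ from $q^*$ to $p^*$. The conjunct $\neg C(a_n,a_1+\cdots+a_{n-2})$ forces $U\cap(a_1+\cdots+a_{n-2})=\emptyset$, so $\beta'\cap\alpha\subseteq\alpha_0\cup\alpha_{n-1}$. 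Using the standard trimming trick, let $q'$ be the last point of $\beta'$ (parameterized from $q^*$) that lies on $\alpha_{n-1}$, and let $p'$ be the first point of $\beta'$ strictly after $q'$ that lies on $\alpha_0$; set $\beta:=\beta'[q',p']$. By construction, $\beta$ meets $\alpha$ only at its two endpoints $q'$ and $p'$.

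Finally, I set $\gamma_0:=\alpha_0[p',p_1]$, $\gamma_i:=\alpha_i$ for $1\le i\le n-2$, $\gamma_{n-1}:=\alpha_{n-1}[p_{n-2},q']$, and $\gamma_n:=\beta$. Each inclusion $\gamma_i\subseteq a_i$ is immediate (noting $\beta\subseteq U\subseteq\ti{a}_n\subseteq a_n$), and the concatenation $\gamma_0\cdots\gamma_n$ is a Jordan curve: simplicity of $\gamma_0\cdots\gamma_{n-1}$ comes from $\alpha$ being a Jordan arc, while $\gamma_n$ meets the rest only at $p'$ and $q'$ by construction of $\beta$ and the non-contact constraint separating $a_n$ from $a_1,\ldots,a_{n-2}$. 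The main obstacle is the middle step—verifying that a closing arc inside $U$ can be made to meet $\alpha$ at exactly two points—and this is precisely where the last three conjuncts of \eqref{eq:frame}, together with local connectedness of $\R^2$ and the $\neg C(a_n,a_1+\cdots+a_{n-2})$ constraint, do the work.
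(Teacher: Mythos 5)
Your proof is correct and follows essentially the same route as the paper's: use the $\stack$ conjunct and Lemma~\ref{lma:stackLemma} to get the arc $\alpha_0\cdots\alpha_{n-1}$ between a point of $\intermediate{a}_0\cdot\intermediate{a}_n$ and a point of $\inner{a}_{n-1}\cdot\intermediate{a}_n$, close it with an arc inside $\ti{a}_n$ supplied by $c(\intermediate{a}_n)$, and use $\neg C(a_n,a_1+\cdots+a_{n-2})$ plus a first/last-intersection trimming to make the result simple. The only (immaterial) difference is that you trim along the closing arc $\beta'$, whereas the paper trims along $\alpha_0$ and $\alpha_{n-1}$ and keeps the closing segment between the chosen points.
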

\begin{proof}
By $\stack(\tseq a_0,\ldots,\tseq a_{n-1})$, let
$\alpha_0,\ldots,\alpha_{n-1}$ be Jordan arcs in the respective
regions $a_0,\ldots,a_{n-1}$ such that, 
$\alpha=\alpha_0\cdots\alpha_{n-1}$ is a Jordan arc connecting 
a point $p'\in \intermediate{a}_0\cdot \intermediate a_n$ to a 
point $q'\in \inner{a}_{n-1}\cdot \intermediate a_n$ (see 
Fig.~\ref{fig:FrameLemma}). Because $\intermediate a_n$ is a 
connected subset of the interior of $a_n$, let $\alpha_n\subseteq \ti a_n$ 
be an arc connecting $p'$ and $q'$. Note that $\alpha_n$ does not intersect 
$\alpha_i$, for $1\leq i< n-1$. Let $p$ be the last point on $\alpha_0$ that 
is on $\alpha_n$ (possibly $p'$), and $q$ be the first point on $\alpha_{n-1}$ 
that is on $\alpha_n$ (possibly $q'$). Let $\gamma_0$ be the final segment of 
$\alpha_0$ starting at $p$. Let $\gamma_i:=\alpha_i$, for $1\leq i\leq n-2$.
Let $\gamma_{n-1}$ be the initial segment of $\alpha_{n-1}$ ending at $q$.
Finally, take $\gamma_n$ to be the segment of $\alpha_n$ between $p$ and $q$.
Evidently, the arcs $\gamma_i$, $0\leq i\leq n$, are as required.
\end{proof}
\begin{figure}[h]
\begin{center}
\begin{tikzpicture}
	\scriptsize{
		\draw (.5,1) circle(1pt) node[below]{$p'$};			
		\draw (.5,1) -- (1.5,1);
		\draw (1.5,1)--(1.85,1) --++(0,-1) node[left,near start]{$\alpha_0$} circle(1pt)
			--++(-1.45,0) node[midway,below]{$\alpha_1$}node[left]{$\ldots$};
		
		\draw (-.5,1) circle(1pt) node[below]{$q'$};
		\draw (-.5,1) -- (-1.5,1);
		\draw (-1.5,1)--(-1.85,1) --++(0,-1) node[right,near start]{$\alpha_{n-1}$} circle(1pt)
			--++(1.7,0) node[midway,below]{$\alpha_{n-2}$};
		
		\begin{scope}[xshift=4cm]
			\draw (.5,1) circle(1pt) node[below]{$p'$};
			\draw (1.5,1) circle(1pt) node[below]{$p$};
			\draw[decorate,decoration=snake,densely dotted] (.5,1) -- (1.5,1);
			\draw[dashed] (.5,1) -- (1.5,1);
			\draw (1.5,1)--(1.85,1) --++(0,-1) node[left,midway]{$\gamma_0$} circle(1pt)
				--++(-1.45,0) node[midway,below]{$\gamma_1=\alpha_1$}node[left]{$\ldots$};
			
			\draw (-.5,1) circle(1pt) node[below]{$q'$};
			\draw (-1.5,1) circle(1pt) node[below]{$q$};
			\draw[decorate,decoration=snake,densely dotted] (-.5,1) -- (-1.5,1);
			\draw[dashed] (-.5,1) -- (-1.5,1);
			\draw (-1.5,1)--(-1.85,1) --++(0,-1) node[right,midway]{$\gamma_{n-1}$} circle(1pt)
				--++(1.7,0) node[midway,below]{$\gamma_{n-2}=\alpha_{n-2}$};
			
			\draw(1.5,1) to[out=30,in=150] (-1.5,1) node[midway,below]{$\gamma_n$};
		\end{scope}
	}
\end{tikzpicture}
\end{center}
\caption{Establishing a Jordan curve.}
\label{fig:FrameLemma}
\end{figure}
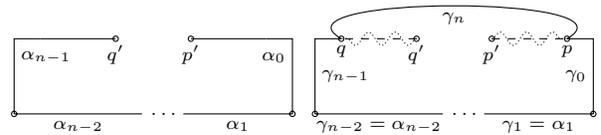

Our final technical preliminary is a simple device for labelling arcs
in diagrams.
\begin{lemma}
\label{lma:labelling}
\label{lma:labels}
Suppose $r$, $t_1$, \ldots, $t_\ell$ are regions such that
\begin{equation}
\label{eq:labelling}
(r \leq t_1 +
\cdots + t_\ell) \wedge \bigwedge_{1 \leq i < j \leq \ell} \neg C(r
\cdot t_i, r \cdot t_j),
\end{equation}
and let $X$ be a connected subset of $r$. Then $X$ is included in
exactly one of the $t_i$, $1 \leq i \leq \ell$.
\end{lemma}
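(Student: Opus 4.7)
The plan is to use the non-contact constraints to convert the Boolean decomposition of $r$ into a point-set decomposition of $X$ into pairwise disjoint, relatively closed pieces, and then invoke connectedness of $X$.

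First I would unpack the algebraic hypothesis. Since $r \leq t_1 + \cdots + t_\ell$, Boolean distributivity in $\RC(T)$ gives $r = (r \cdot t_1) + \cdots + (r \cdot t_\ell)$; as the operation $+$ on regular closed sets is just set-theoretic union, this rewrites to the point-set equality $r = \bigcup_{i=1}^\ell (r \cdot t_i)$. Each non-contact conjunct $\neg C(r \cdot t_i, r \cdot t_j)$ translates directly as $(r \cdot t_i) \cap (r \cdot t_j) = \emptyset$ whenever $i \neq j$.

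Intersecting with $X$, I would then obtain the decomposition $X = \bigcup_i (X \cap (r \cdot t_i))$ into pairwise disjoint subsets, each closed in the subspace topology on $X$ (as the intersection of $X$ with the closed set $r \cdot t_i$). Connectedness of $X$, together with its being non-empty, forces all but one of these pieces to be empty, so there is a unique $i$ with $X = X \cap (r \cdot t_i) \subseteq t_i$; uniqueness of the labelling index is immediate from the disjointness of the $r \cdot t_i$.

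I do not anticipate any real obstacle. The one subtlety worth flagging is the distinction between the Boolean product $r \cdot t_i = \tc{\ti{(r \cap t_i)}}$ and the set intersection $r \cap t_i$, the former possibly being strictly smaller than the latter. This is precisely why the argument must route through Boolean distributivity rather than raw set intersection: only that route lets the weak non-contact hypothesis translate into genuine disjointness of the pieces covering $X$, after which the connectedness step is completely standard.
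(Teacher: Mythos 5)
Your proof is correct and takes essentially the same route as the paper's: cover the connected set $X$ by finitely many pairwise disjoint pieces that are closed in $X$ and invoke connectedness to conclude that only one piece is non-empty. The only difference is that the paper works directly with the pieces $X \cap t_i$, whereas your detour through Boolean distributivity to the pieces $X \cap (r \cdot t_i)$ makes explicit the $r \cdot t_i$ versus $r \cap t_i$ distinction that the paper's one-line proof leaves implicit.
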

\begin{proof}
If $X \cap t_1$ and $X \cap t_2$ are non-empty, then $X \cap t_1$
and $X \cap (t_2 + \cdots + t_\ell)$ partition $X$ into
non-empty, non-intersecting sets, closed in $X$.
\end{proof}
When~\eqref{eq:labelling} holds, we may think of the regions $t_1,
\ldots, t_\ell$ as `labels' for any connected $X \subseteq r$---and,
in particular, for any Jordan arc $\alpha \subseteq r$. Hence, any
sequence $\alpha_1, \ldots, \alpha_n$ of such arcs encodes a word over
the alphabet $\set{t_1, \ldots, t_\ell}$.

\bigskip

The remainder of this section is given over to a proof of
\begin{swetheorem}{Theorem~\ref{theo:undecidable}}
For $\mathcal{L}\in \{\cBci, \cBc, \cBCci, \cBCc\}$,
$\Sat(\mathcal{L},\RC(\R^2))$ is r.e.-hard, and
$\Sat(\mathcal{L},\RCP(\R^2))$ is r.e.-complete.
\end{swetheorem}

We have already established the upper bounds; we consider here only
the lower bounds, beginning with an outline of our proof strategy.
Let a PCP-instance $\fW = (\set{0,1}, T, \fw_1, \fw_2)$ be given,
where $T$ is a finite alphabet, and $\fw_i\colon T^* \rightarrow
\set{0,1}^*$ a word-morphism ($i = 1,2$).  We call the elements of $T$
    {\em tiles}, and, for each tile $t$, we call $\fw_1(t)$ the {\em
      lower word} of $t$, and $\fw_2(t)$ the {\em upper word} of
    $t$. Thus, $\fW$ asks whether there is a sequence of tiles
    (repeats allowed) such that the concatenation of their upper words
    is the same as the concatenation of their lower words. We shall
    henceforth restrict all (upper and lower) words on tiles to be
    non-empty.  This restriction simplifies the encoding below, and
    does not affect the undecidability of the PCP.

We define a formula $\phi_\fW$ consisting of a large conjunction of
$\cBCc$-literals, which, for ease of understanding, we introduce in
groups. Whenever conjuncts are introduced, it can be readily checked
that---provided $\fW$ is positive---they are satisfiable by elements
of $\RCP(\R^2)$. (Figs.~\ref{fig:concrete1} and~\ref{fig:concrete2}
depict {\em part} of a satisfying assignment; this drawing is
additionally useful as an aid to intuition throughout the course of
the proof.)  The main object of the proof is to show that, conversely,
if $\phi_\fW$ is satisfied by any tuple in $\RC(\R^2)$, then $\fW$
must be positive. Thus, the following are equivalent:
\begin{enumerate}
\item $\fW$ is positive;
\item $\phi_\fW$ is satisfiable over $\RCP(\R^2)$;
\item $\phi_\fW$ is satisfiable over $\RC(\R^2)$.
\end{enumerate}
This establishes the r.e.-hardness of $\Sat(\cL,\RC(\R^2))$ and
$\Sat(\cL,\RCP(\R^2))$ for $\cL = \cBCc$; we then extend the result to the
languages $\cBc$, $\cBCci$ and $\cBci$.

\begin{figure}
\resizebox{9cm}{!}{\input{5-1-Graphics/concrete1.pstex_t}}
\caption{A tuple
of 3-regions satisfying~{\eqref{eq:PCPFrame}}--{\eqref{eq:PCPCord}}. The 3-regions
$\tseq{d}_0$ and $\tseq{d}_6$ are shown in dotted lines.}
\label{fig:concrete1}
\end{figure}
The proof proceeds in five stages.
\bigskip

\noindent
\textbf{Stage 1.} In the first stage, we define an assemblage of arcs
that will serve as a scaffolding for the ensuing construction.
Consider the arrangement of polygonal 3-regions depicted in
Fig.~\ref{fig:concrete1}, assigned to the 3-region variables
$\tseq{s}_0, \dots, \tseq{s}_9$, $\tseq{s}_8', \dots, \tseq{s}_1'$,
$\tseq{d}_0,\dots, \tseq{d}_6$
as indicated. 
It is easy to verify that this arrangement can be made to
satisfy the following formulas:
\begin{align}
\label{eq:PCPFrame}
& \frameFla(\tseq{s}_0, \tseq{s}_1,\dots, \tseq{s}_8,\tseq{s}_9,\tseq{s}_8',\dots, \tseq{s}_1'),\\
\label{eq:PCPCord:endpoints}
& (s_0 \leq \intermediate{t}_0) \land (s_9 \leq \inner{t}_6),\\
\label{eq:PCPCord} & \stack(\tseq{d}_0,\dots,\tseq{d}_6).
\end{align}
And trivially, the arrangement can be made to satisfy any formula
\begin{equation}\label{eq:pcp:C}
\neg C(r,r')
\end{equation}
for which the corresponding 3-regions $\tseq{r}$ and $\tseq{r}'$ are
drawn as not being in contact. (Remember, $r$ is the outer-most shell
of the 3-region $\tseq{r}$, and similarly for $r'$.)  Thus, for
example, \eqref{eq:pcp:C} includes $\neg C(s_0, d_1)$, but not $\neg
C(s_0, d_0)$ of $\neg C(d_0, d_1)$.

Now suppose $\tseq{s}_0, \dots, \tseq{s}_9$, $\tseq{s}_8', \dots,
\tseq{s}_1'$, $\tseq{d}_0,\dots, \tseq{d}_6$ is {\em any} collection
of 3-regions (not necessarily polygonal)
satisfying~\eqref{eq:PCPFrame}--\eqref{eq:pcp:C}.  By
Lemma~\ref{lma:FrameLemma} and~\eqref{eq:PCPFrame}, let $\gamma_0,
\dots, \gamma_9,\gamma_8',\dots,\gamma_1'$ be Jordan arcs included in
the respective regions $s_0, \dots, s_9,s_8',\dots,s_1'$, such that
$\Gamma = \gamma_0 \cdots \gamma_9 \cdot \gamma_8' \cdots \gamma_1'$
is a Jordan curve (note that $\gamma_i'$ and $\gamma_i$ have opposite
directions). We select points $\tilde{o}_{1}$ on $\gamma_0$ and
$\tilde{o}_2$ on $\gamma_9$ (see Fig.~\ref{fig:arcs0}).
By~\eqref{eq:PCPCord:endpoints}, $\tilde{o}_1 \in \intermediate{t}_0$
and $\tilde{o}_2 \in \inner{t}_6$. By Lemma~\ref{lma:stackLemma}
and~\eqref{eq:PCPCord}, let $\tilde{\chi}_1$, $\chi_2$, $\tilde{\chi}_3$
be Jordan arcs in the respective regions
\begin{equation*}
(d_0 + d_1),\qquad (d_2 + d_3 + d_4),\qquad (d_5+ d_6)
\end{equation*}
such that $\tilde{\chi}_1 \chi_2 \tilde{\chi}_3$ is a Jordan arc from
$\tilde{o}_1$ to $\tilde{o}_2$. Let $o_{1}$ be the last point of
$\tilde{\chi}_1$ lying on $\Gamma$, and let $\chi_1$ be the final
segment of $\tilde{\chi}_1$, starting at $o_1$.  Let $o_2$ be the
first point of $\tilde{\chi}_3$ lying on $\Gamma$, and let $\chi_3$ be
the initial segment of $\tilde{\chi}_3$, ending at $o_2$.
\begin{figure}
\resizebox{8.5cm}{!}{\input{5-1-Graphics/arcs0.pstex_t}}
\caption{The arcs $\gamma_0, \ldots, \gamma_9$ and $\chi_1, \ldots \chi_3$.}
\label{fig:arcs0}
\end{figure}
By~\eqref{eq:pcp:C}, we see that the arc $\chi = \chi_1\chi_2\chi_3$
intersects $\Gamma$ only in its endpoints, and is thus a chord of
$\Gamma$, as shown in Fig.~\ref{fig:arcs0}.

A word is required concerning the generality of this diagram.  The
reader is to imagine the figure drawn on a {\em spherical} canvas, of
which the sheet of paper or computer screen in front of him is simply
a small part.  This sphere represents the plane with a `point' at
infinity, under the usual stereographic projection. We do not say
where this point at infinity is, other than that it never lies on a
drawn arc.  In this way, a diagram in which the spherical canvas is
divided into $n$ cells represents $n$ different configurations in the
plane---one for each of the cells in which the point at infinity may
be located. For example, Fig~.\ref{fig:arcs0} represents three
topologically distinct configurations in $\R^2$, and, as such, depicts
the arcs $\gamma_0, \ldots, \gamma_9$, $\gamma'_1, \ldots, \gamma'_8$,
$\chi_1$, $\chi_2$, $\chi_3$ and points $o_1$, $o_2$ in full
generality.  All diagrams in this proof are to be interpreted in this
way.  We stress that our `spherical diagrams' are simply a convenient
device for using one drawing to represent several possible
configurations in the Euclidean plane: in particular, we are
interested only in the satisfiability of of $\cBCc$-formulas over
$\RCP(\R^2)$ and $\RC(\R^2)$, not over the regular closed algebra of
any other space!  For ease of reference, we refer to the the two
rectangles in Fig~.\ref{fig:arcs0} as the `upper window' and `lower
window', it being understood that these are simply handy labels: in
particular, either of these `windows' (but not both) may be unbounded.

\bigskip

\noindent
\textbf{Stage 2.}  In this stage, we we construct two sequences of
arcs, $\set{\zeta_i}$, $\set{\eta_i}$ of indeterminate length $n \geq
1$, such that the members of the former sequence all lie in the lower
window. Here and in the sequel, we write $\md{k}$ to denote $k$ modulo
3.  Let $\tseq{a}$, $\tseq{b}$, $\tseq{a}_{i,j}$ and $\tseq{b}_{i,j}$
($0 \leq i < 3$, $1 \leq j \leq 6$) be 3-region variables, let $z$ be
an ordinary region-variable, and consider the formulas
\begin{align}
\label{eq:aSeq1}
& (s_6 \leq \inner{a}) \wedge (s'_6 \leq \inner{b})
\wedge (s_3 \leq \intermediate{a}_{0,3}), \\
\label{eq:aSeq:b}
& \stack_z(\tseq{a}_{\md{i-1},3}, \tseq{b}_{i,1}, \ldots, \tseq{b}_{i,6}, \tseq{b}),\\
\label{eq:aSeq:a}
& \stack(\tseq{b}_{i,3}, \tseq{a}_{i,1}, \ldots, \tseq{a}_{i,6}, \tseq{a}).
\end{align}
The arrangement of polygonal 3-regions depicted in
Fig.~\ref{fig:concrete2} (with $z$ assigned appropriately) is one such
satisfying assignment.
\begin{figure}
\resizebox{9cm}{!}{\input{5-1-Graphics/concrete2.pstex_t}}
\caption{A tuple of 3-regions
  satisfying~{\eqref{eq:aSeq1}}--{\eqref{eq:aSeq:a}}.
   The arrangement
  of components of the $\tseq{a}_{i,j}$ and $\tseq{b}_{i,j}$ repeats
  an indeterminate number of times. The 3-regions $\tseq{a}$,
  $\tseq{b}$ and one component of $\tseq{a}_{0,3}$ are shown in dotted lines.
The 3-regions $\tseq{s}_3$, $\tseq{s}_6$, $\tseq{s}'_6$ and $\tseq{d}_3$
are as in {Fig~\ref{fig:concrete2}}, but not drawn to scale.}
\label{fig:concrete2}
\end{figure}
We stipulate that~\eqref{eq:pcp:C} applies now to all regions depicted
in either Fig~\ref{fig:concrete1} or Fig~\ref{fig:concrete2}. Again,
    these additional constraints are evidently satisfiable.

It will be convenient in this stage to rename the arcs $\gamma_6$ and
$\gamma'_6$ as $\lambda_0$ and $\mu_0$, respectively.  Thus,
$\lambda_0$ forms the bottom edge of the lower window, and $\mu_0$ the
top edge of the upper window. Likewise, we rename $\gamma_3$ as
$\alpha_0$, forming part of the left-hand side of the lower
window. Let $\tilde{q}_{1,1}$ be any point of $\alpha_0$, $p^*$ any
point of $\lambda_0$, and $q^*$ any point of $\mu_0$ (see
Fig.~\ref{fig:arcs0}).  By~\eqref{eq:aSeq1}, then, $\tilde{q}_{1,1}
\in \intermediate{a}_{0,3}$, $p^* \in \inner{a}$, and $q^* \in
\inner{b}$.  Adding the constraint
\begin{equation*}
\neg C(s_3,z),
\end{equation*}
further ensures that $\tilde{q}_{1,1} \in -z$.  By
Lemma~\ref{lma:stackLemma} and~\eqref{eq:aSeq:b}, we may draw an arc
$\tilde{\beta}_1$ from $\tilde{q}_{1,1}$ to $q^*$, with successive
segments $\tilde{\beta}_{1,1}$, $\beta_{1,2}$, \ldots, $\beta_{1,5}$,
$\tilde{\beta}_{1,6}$ lying in the respective regions $a_{0,3} +
b_{1,1}$, $b_{1,2}$, \dots, $b_{1,5}$, $b_{1,6} + b$; further, we can
guarantee that $\beta_{1,2}$ contains a point $\tilde{p}_{1,1} \in
\intermediate{b}_{1,3}$.  Denote the last point of $\beta_{1,5}$ by
$q_{1,2}$. Also, let $q_{1,1}$ be the last point of $\tilde{\beta}_1$
lying on $\alpha_0$, and $q_{1,3}$ the first point of
$\tilde{\beta}_1$ lying on $\mu_0$ Finally, let $\beta_1$ be the
segment of $\tilde{\beta}_1$ between $q_{1,1}$ and $q_{1,2}$; and we
let $\mu_1$ be the segment of $\tilde{\beta}_1$ from $q_{1,2}$ to
$q_{1,3}$ followed by the final segment of $\mu_0$ from $q_{1,3}$.
(Fig.~\ref{subfig:arcs1}).  By repeatedly using the constraints
in~\eqref{eq:pcp:C}, it is easy to see that that $\beta_1$ together
with the initial segment of $\mu_1$ up to $q_{1,3}$ form a chord of
$\Gamma$. Adding the constraints
\begin{equation*}
c(b_{0,5} + d_3),
\end{equation*}
and taking into account the constraints in~\eqref{eq:pcp:C} ensures
that $\beta_1$ and $\chi$ lie in the same residual domain of $\Gamma$,
as shown.  The wiggly lines indicate that we do not care about the
exact positions of $\tilde{q}_{1,1}$ or $q^*$; otherwise,
Fig.~\ref{subfig:arcs1}) is again completely general.
\begin{figure}
\begin{center}
\subfloat[The arc $\beta_1$.]{
\label{subfig:arcs1}
\resizebox{8cm}{!}{\input{5-1-Graphics/subArcs1.pstex_t}}}\\
\subfloat[The arc $\alpha_1$.]{
\label{subfig:arcs2}
\resizebox{8cm}{!}{\input{5-1-Graphics/subArcs2.pstex_t}}}\\
\subfloat[The arc $\beta_2$.]{
\label{subfig:arcs3}
\resizebox{8cm}{!}{\input{5-1-Graphics/subArcs3.pstex_t}}}\\
\subfloat[The arc $\alpha_2$.]{
\label{subfig:arcs4}
\resizebox{8cm}{!}{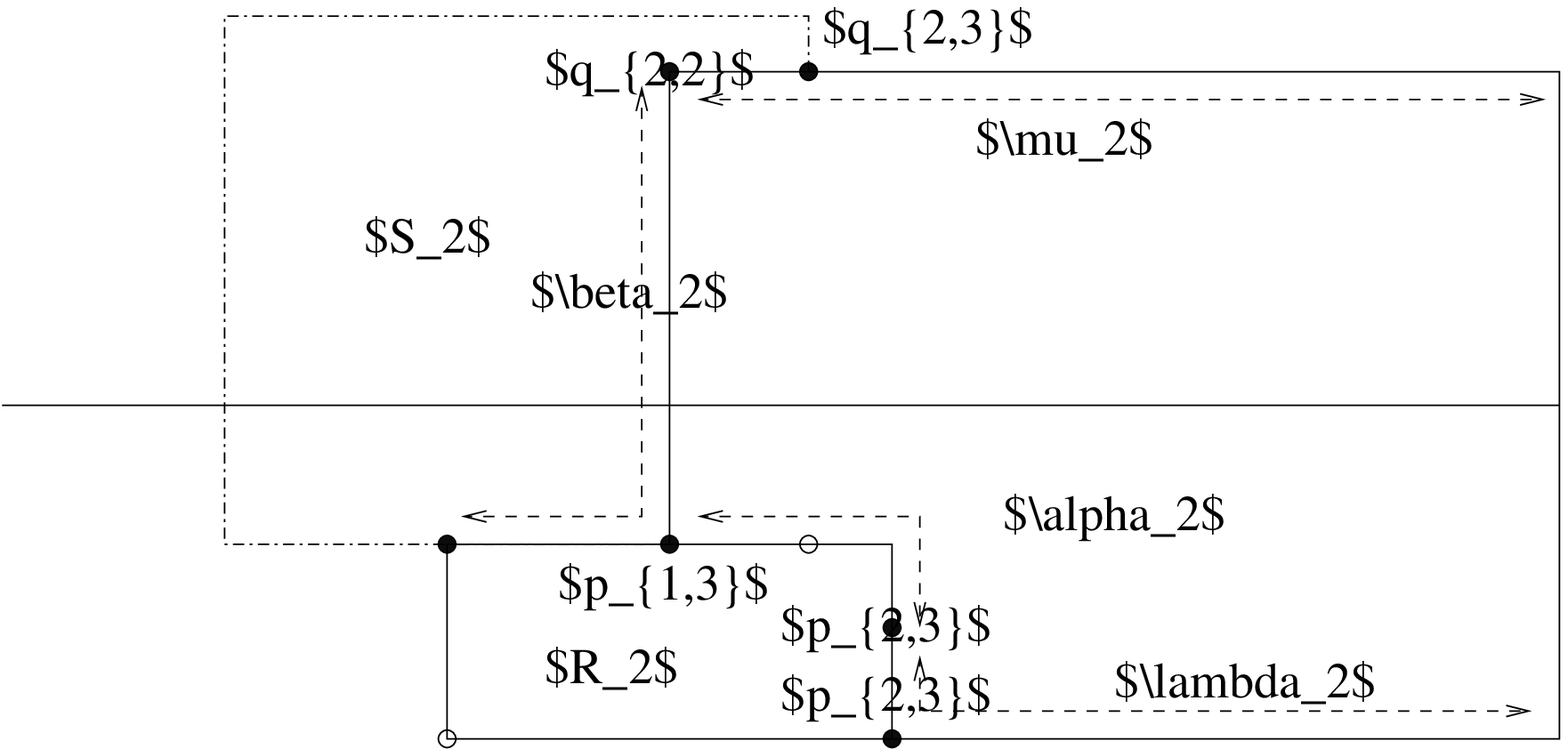}}
\end{center}
\caption{Construction of the arcs $\set{\alpha_i}$ and $\set{\beta_i}$}
\label{fig:arcsAlphaBeta}
\end{figure}
Note that $\mu_1$ lies entirely in $b_{1,6} + b$, and hence
certainly in the region
\begin{equation*}
b^* = b + b_{0,6} + b_{1,6} + b_{2,6}.
\end{equation*}

Recall that $\tilde{p}_{1,1} \in \intermediate{b}_{1,3}$, and $p^* \in
\inner{a}$.  By Lemma~\ref{lma:stackLemma} and~\eqref{eq:aSeq:a}, we may draw an arc
$\tilde{\alpha}_1$ from $\tilde{p}_{1,1}$ to $p^*$, with successive
segments $\tilde{\alpha}_{1,1}$, $\alpha_{1,2}$, \ldots,
$\alpha_{1,5}$, $\tilde{\alpha}_{1,6}$ lying in the respective regions
$b_{1,3} + a_{1,1}$, $a_{1,2}$, \dots, $a_{1,5}$ $a_{1,6} + a$;
further, we can guarantee that the segment lying in $a_{1,3}$ contains
a point $\tilde{q}_{2,1}\in \intermediate{a}_{1,3}$.  Denote the last
point of $\alpha_{1,5}$ by $p_{1,2}$.  Also, let $p_{1,1}$ be the last
point of $\tilde{\alpha}_1$ lying on $\beta_1$, and $p_{1,3}$ the
first point of $\tilde{\alpha}_1$ lying on $\lambda_0$.
From~\eqref{eq:pcp:C}, these points must be arranged as shown in
Fig.~\ref{subfig:arcs2}.  Let $\alpha_1$ be the segment of
$\tilde{\alpha}_1$ between $p_{1,1}$ and $p_{1,2}$.  Noting
that~\eqref{eq:pcp:C} entails
\begin{align*}
& \neg C(a_{1,k}, s_0 + s_9 + d_0+ \cdots + d_5 ) & & \qquad{1 \leq k
    \leq 6},
\end{align*}
we can be sure that $\alpha_1$ lies entirely in the `lower' window,
whence $\beta_1$ crosses the central chord, $\chi$, at least once. Let
$o_1$ be the first such point (measured along $\chi$ from left to
right).  Finally, let $\lambda_1$ be the segment of $\tilde{\alpha}_1$
between $p_{1,2}$ and $p_{1,3}$, followed by the final segment of
$\lambda_0$ from $p_{1,3}$. Note that $\lambda_1$ lies entirely in
$a_{1,6} + a$, and hence certainly in the region
\begin{equation*}
a^* = a + a_{0,6} + a_{1,6} + a_{2,6}.
\end{equation*}
We remark that, in Fig.~\ref{subfig:arcs2}, the arcs $\beta_1$ and
$\mu_1$ have been slightly re-drawn, for clarity.  The region marked
$S_1$ may now be forgotten, and is suppressed in Figs.~\ref{subfig:arcs3}
and~\ref{subfig:arcs4}.

By construction, the point $\tilde{q}_{2,1}$ lies in some component of
$\intermediate{a}_{1,3}$, and, from the presence of the `switching'
variable $z$ in~\eqref{eq:aSeq:a}, that component is either included
in $z$ or included in $-z$. Suppose the latter.  Then we can repeat
the above construction to obtain an arc $\tilde{\beta}_2$ from
$\tilde{q}_{2,1}$ to $q^*$, with successive segments
$\tilde{\beta}_{2,1}$, $\beta_{2,2}$, \ldots, $\beta_{2,5}$,
$\tilde{\beta}_{2,6}$ lying in the respective regions $a_{1,3} +
b_{2,1}$, $b_{2,2}$, \dots, $b_{2,5}$, $b_{2,6} + b$; further, we can
guarantee that $\beta_{2,2}$ contains a point $\tilde{p}_{2,1} \in
\intermediate{b}_{2,3}$.  Denote the last point of $\beta_{2,5}$ by
$q_{2,2}$. Also, let $q_{2,1}$ be the last point of $\tilde{\beta}_2$
lying on $\alpha_1$, and $q_{2,3}$ the first point of
$\tilde{\beta}_2$ lying on $\mu_1$.  Again, we let $\beta_2$ be the
segment of $\tilde{\beta}_2$ between $q_{2,1}$ and $q_{2,2}$; and we
let $\mu_2$ be the segment of $\tilde{\beta}_2$ from $q_{2,1}$ to
$q_{2,3}$, followed by the final segment of $\mu_1$ from $q_{2,3}$.
Note that $\mu_2$ lies in the set $b^*$. It is easy to see that $\beta_2$
must be drawn as shown in Fig.~\ref{subfig:arcs3}: in particular,
$\beta_2$ cannot enter the interior of the region marked $R_1$. For,
by construction, $\beta_2$ can have only one point of contact with
$\alpha_1$, and the constraints~\eqref{eq:pcp:C} ensure that $\beta_2$
cannot intersect any other part of $\delta R_1$; since $q^* \in a$ is
guaranteed to lie outside $R_1$, we evidently have $\beta_2
\subseteq -R_1$. This observation having been made, $R_1$ may now be
forgotten.

Symmetrically, we construct the arc $\tilde{\alpha}_2 \subseteq
b_{1,3} + a_{2,1} + \cdots + a_{2,6} + a$, and points $p_{2,1}$,
$p_{2,2}$, $p_{2,3}$, together with the arcs arcs $\alpha_2$ and
$\lambda_2$, as shown in Fig.~\ref{subfig:arcs4} (where the region
$R_1$ has been suppressed and the region $S_2$ slightly
re-drawn). Again, we know from~\eqref{eq:pcp:C} that $\alpha_2$
lies entirely in the `lower' window, whence $\beta_2$ must cross the
central chord, $\chi$, at least once. Let $o_2$ be the first such
point (measured along $\chi$ from left to right).

This process continues, generating arcs $\beta_i \subseteq
a_{\md{i-1},3} + b_{\md{i},1} + \cdots + b_{\md{i},5}$ and $\alpha_i
\subseteq b_{\md{i},3} + a_{\md{i},1} + \cdots + a_{\md{i},5}$, as
long as $\alpha_i$ contains a point $\tilde{q}_{i,1} \in -z$. That we
eventually reach a value $i = n$ for which no such point exists
follows from~\eqref{eq:pcp:C}. For the conjuncts $\neg C(b_{i,j},
d_k)$ ($j \neq 5$) together entail $o_i \in b_{\md{i},5}$, for every
$i$ such that $\beta_i$ is defined; and these points cycle on $\chi$
through the regions $b_{0,5}$, $b_{1,5}$ and $b_{2,5}$. If there were
infinitely many $\beta_i$, the $o_i$ would have an accumulation point,
lying in all three regions, contradicting, say, $\neg
C(b_{0,5},b_{1,5})$.  The resulting sequence of arcs and points is
shown, schematically, in Fig.~\ref{fig:arcs2}.
\begin{figure}
\resizebox{8.5cm}{!}{\input{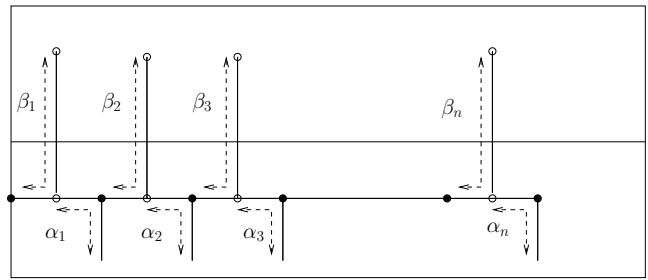}}
\caption{The sequences of arcs $\set{\alpha_i}$ and $\set{\beta_i}$.}
\label{fig:arcs2}
\end{figure}

We finish this stage in the construction by `re-packaging' the arcs
$\set{\alpha_i}$ and $\set{\beta_i}$, as illustrated in
Fig.~\ref{fig:arcs3}. Specifically,
for all $i$ ($1 \leq i \leq n$), let $\zeta_i$ be the initial segment
of $\beta_i$ up to the point $p_{i,1}$ followed by the initial segment
of $\alpha_i$ up to the point $q_{i+1,1}$; and let $\eta_i$ be the
final segment of $\beta_i$ from the point $p_{i,1}$:
\begin{align*}
& \zeta_i = \beta_i[q_{i,1},p_{i,1}]\alpha_i[p_{i,2},q_{i+1,1}]\\
& \eta_i = \beta_i[p_{i,1},q_{i,2}].
\end{align*}
The final
segment of $\alpha_i$ from the point $q_{i+1}$ may be forgotten.
\begin{figure}
\resizebox{8cm}{!}{\input{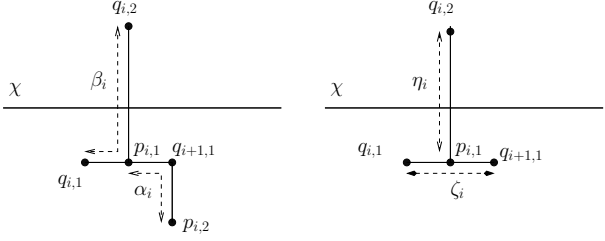}}
\caption{`Re-packaging' of $\alpha_i$ and $\beta_i$ into $\zeta_i$ and
  $\eta_i$: before and after.}
\label{fig:arcs3}
\end{figure}
Defining, for $0 \leq i < 3$,
\begin{eqnarray*}
a_i & = & a_{1-i,3} + b_{i,1} + \cdots + b_{i,4} + a_{i,1} + \cdots + a_{i,4}\\
b_i & = & b_{i,2} + \cdots + b_{i,5},
\end{eqnarray*}
the constraints~\eqref{eq:pcp:C} guarantee that, for $1 \leq i \leq
n$,
\begin{eqnarray*}
\zeta_i & \subseteq & a_{\md{i}}\\ \eta_i & \subseteq & b_{\md{i}}.
\end{eqnarray*}
Observe that the arcs $\zeta_i$ are located entirely in the `lower
window', and that each arc $\eta_i$ connects $\zeta_i$ to some point
$q_{i,2}$, which in turn is connected to a point $q^* \in \lambda_0$
by an arc in $b^*$.

\bigskip

\noindent
\textbf{Stage 3.}  We now repeat Stage~2 symmetrically, with the
`upper' and `lower' windows exchanged. Let $\tseq{a}'_{i,j}$,
$\tseq{b}'_{i,j}$ be 3-region variables (with indices in the same
ranges as for $\tseq{a}_{i,j}$, $\tseq{b}_{i,j}$). Let $\tseq{a}' =
\tseq{b}$, $\tseq{b}' = \tseq{a}$; and let
\begin{eqnarray*}
a'_i & =  & a'_{1-i,3} + b'_{i,1} + \cdots + b'_{i,4} + a'_{i,1} + \cdots + a'_{i,4}\\
b'_i & =  & b'_{i,2} + \cdots + b'_{i,5},
\end{eqnarray*}
for $0 \leq i < 2$. The constraints
\begin{align*}
& (s'_3 \leq \intermediate{a}'_{0,3}) \\
& \stack_z(\tseq{a}'_{\md{i-1},3}, \tseq{b}'_{i,1}, \ldots, \tseq{b}'_{i,6}, \tseq{b}'),\\
& \stack(\tseq{b}'_{i,3}, \tseq{a}'_{k,1}, \ldots, \tseq{a}'_{i,6}, \tseq{a}')\\
& c(b'_{0,5} + d_3)
\end{align*}
then establish sequences of arcs $\set{\zeta'_i}$, $\set{\eta'_i}$,
($1 \leq i \leq n'$) satisfying
\begin{eqnarray*}
\zeta_i' & \subseteq & a_{\md{i}}'\\
\eta_i' & \subseteq & b_{\md{i}}'
\end{eqnarray*}
for $1 \leq i \leq n'$. The arcs $\zeta'_i$ are located entirely in the
`upper window', and each arc $\eta'_i$ connects $\zeta'_i$ to a point
$p_{i,2}$, which in turn is connected to a point $p^*$ by an arc in
the region
\begin{align*}
& {b^*}' = b' + b'_{0,6} + b'_{1,6} + b'_{2,6}.
\end{align*}
Our next task is to write constraints to ensure that $n = n'$, and
that, furthermore, each $\eta_i$ (also each $\eta'_i$) connects
$\zeta_i$ to $\zeta'_i$, for $1\leq i\leq n=n'$. Let $z^*$ be a new
region-variable, and write
\begin{equation*}
\neg C(z^*, s_0 + \cdots + s_9 + s'_1 + \cdots + s'_8 + d_1 + \cdots + d_4 + d_6).
\end{equation*}
Note that $d_5$ does not appear in this constraint, which ensures that
the only arc depicted in Fig.~\ref{fig:arcs0} which $z$ may intersect
is $\chi_3$. Recalling that $\alpha_n$ and $\alpha'_{n'}$ contain
points $q_{n,1}$ and $q'_{n',1}$, respectively, both lying in $z$, the
constraints
\begin{equation*}
c(z) \wedge \neg C(z, -z^*)
\end{equation*}
ensure that $q_{n,1}$ and $q'_{n',1}$ may be joined by an arc, say
$\zeta^*$, lying in $\ti{(z^*)}$, and also lying entirely in the upper
and lower windows, crossing $\chi$ only in $\chi_3$.  Without loss of
generality, we may assume that $\zeta^*$ contacts $\zeta_n$ and
$\zeta'_{n'}$ in just one point. Bearing in mind that the
constraints~\eqref{eq:pcp:C} force $\eta_n$ and $\eta'_{n'}$ to
cross $\chi$ in its central section, $\chi_2$, writing
\begin{align}
& \neg C(b_{i,j}, z)  \wedge \neg C(b'_{i,j}, z)
\label{eq:zeta}
\end{align}
for all $i$ ($0 \leq i < 3$) and $j$ ($1 \leq i \leq 6$) ensures that
$\zeta^*$ is (essentially) as shown in Fig.~\ref{fig:arcZeta}.
\begin{figure}
\resizebox{8.5cm}{!}{\input{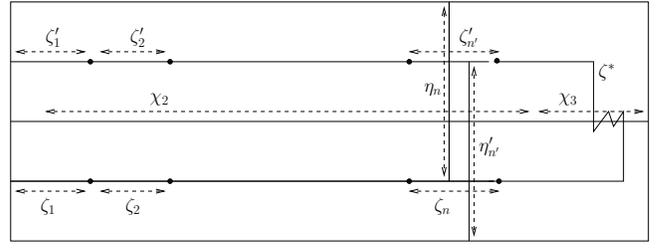}}
\caption{The arc $\zeta^*$.}
\label{fig:arcZeta}
\end{figure}
Now consider the arc $\eta_1$. Recalling that $\eta_1\mu_1$ joins
$\zeta_1$ to the point $q^*$ (on the upper edge of the upper window),
crossing $\chi_2$, we see by inspection of Fig.~\ref{fig:arcZeta}
that~\eqref{eq:zeta} together with
\begin{align*}
&  \neg C(a'_i, b^*)
\end{align*}
for $0 \leq i < 3$
forces $\eta_1$ to cross one of the arcs
$\zeta'_{j'}$ ($1 \leq j' \leq n'$); and the constraints
\begin{align*}
& \neg C(a'_i,b_j)
\end{align*}
for $0 \leq i < 3$, $0 \leq j < 3$, $i \neq j$, ensure that $j' \equiv
1 $ modulo 3. We write the symmetric constraints
\begin{align}
& \neg C(a_i,b'_j)
\label{eq:abPrime}
\end{align}
for $0 \leq i < 3$, $0 \leq j < 3$, $i \neq j$, together with 
\begin{align}
& \neg C(b_i,b'_j)
\label{eq:bbPrime}
\end{align}
for $0 \leq i < j \leq 3$.  Now suppose $j' \geq 4$.  The arc $\eta'_2
\lambda'_2$ must connect $\zeta'_2$ to the point $p^*$ on the bottom
edge of the lower window, which is now impossible without $\eta'_2$
crossing either $\zeta_1$ or $\eta_1$---both forbidden
by~\eqref{eq:abPrime}--\eqref{eq:bbPrime}.  Thus, $\eta_1$ intersects
$\zeta'_j$ if and only if $j = 1$. Symmetrically, $\eta'_1$ intersects
$\zeta_j$ if and only if $j = 1$.  And the reasoning can now be
repeated for $\eta_2$, $\eta_2'$, $\eta_3$, $\eta_3'$ \ldots, leading
to the 1--1 correspondence depicted in
Fig.~\ref{fig:arcCorrespondence}.
\begin{figure}
\resizebox{8cm}{!}{\input{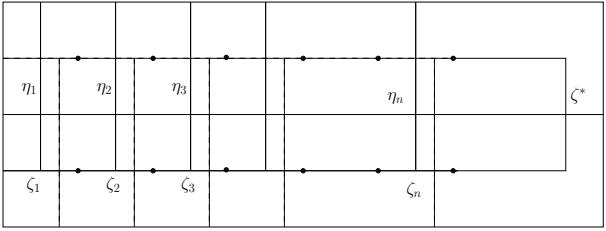}}
\caption{The 1--1 correspondence between the $\zeta_i$ and the
  $\zeta'_i$ established by the $\eta_i$ and the $\eta'_i$.}
\label{fig:arcCorrespondence}
\end{figure}
In particular, we are guaranteed that $n = n'$.

\bigskip

\noindent
\textbf{Stage 4.}  Recall the given PCP-instance, $\fW = (\set{0,1},
T, \fw_1, \fw_2)$.  We think of $T$ as a set of `tiles', and the
morphisms $\fw_1$, $\fw_2$ as specifying, respectively, the `lower'
and `upper' strings of each tile.  In this stage, we shall `label' the arcs
$\zeta_1, \ldots, \zeta_n$, with elements of $\set{0,1}$, thus
defining a word $\sigma$ over this alphabet. Using a slightly more
complicated labelling scheme, we shall label the arcs $\eta_1, \ldots,
\eta_n$ so as to define a word $\tau$ (of length $m \leq n$) over the
alphabet $T$; likewise we shall label the arcs $\eta'_1, \ldots, \eta'_n$
so as to define another word $\tau'$ (of length $m' \leq n$) over $T$.

We begin with the $\zeta_i$. Consider the constraints
\begin{equation*}
b_i \leq l_0 + l_1 \wedge \neg C(b_i \cdot l_0, b_i \cdot l_1) \qquad (i = 0,\ 1).
\end{equation*}
By Lemma~\ref{lma:labelling}, in any satisfying assignment over
$\RC(\R^2)$, every arc $\eta_i$ ($1 \leq i \leq n$) is included in
(`labelled with') exactly one of the regions $l_0$ or $l_1$, so that
the sequence of arcs $\eta_1, \ldots, \eta_n$ defines a word $\sigma
\in \set{0,1}^*$, with $|w| = n$.

Turning our attention now to the $\zeta_i$, let us write $T =
\set{t_1, \ldots, t_\ell}$. For all $j$ ($1 \leq j \leq \ell$), we
shall write $\sigma_j = \fw_1(t_j)$ and $\sigma'_j = \fw_2(t_j)$;
further, we denote $|\sigma_j|$ by $u(j)$ and $|\sigma'_j|$ by
$u'(j)$. (Thus, by assumption, the $u(j)$ and $u'(j)$ are all
positive.)

Now let $t_{j,k}$ ($1 \leq j \leq \ell$, $1 \leq k \leq u(j)$) and
$t'_{j,k}$ ($1 \leq j \leq \ell$, $1 \leq k \leq u'(j)$) be fresh
region variables.  We think of $t_{j,k}$ as standing for the $k$th
letter in the word $\sigma_j$, and likewise think of $t'_{j,k}$ as
standing for the $k$th letter in the word $\sigma'_j$.  By
Lemma~\ref{lma:labelling}, we may write constraints ensuring that each
component of either $a_0$, $a_1$ or $a_2$---and hence each of the arcs
$\zeta_1, \ldots, \zeta_n$---is `labelled with' one of the $t_{j,k}$,
in the by-now familiar sense.  Further, we can ensure that these
labels are organized into (contiguous) blocks, $E_1, \ldots, E_{m}$
such that, in the $h$th block, $E_h$, the sequence of labels reads
$t_{j,1}, \ldots, t_{j,{u(j)}}$, for some fixed $j$ ($1 \leq j \leq
\ell$).  This amounts to insisting that: ({\em i}) the very first arc,
$\zeta_1$, must be labelled with $t_{j,1}$ for some $j$; ({\em ii})
if, $\zeta_i$ is labelled with $t_{j,k}$, where $i < n$ and $k <
u(j)$, then the next arc, namely $\zeta_{i+1}$, must be labelled with
the next letter of $\sigma_j$, namely $t_{j,k+1}$; ({\em iii}) if
$\zeta_i$ ($i < n$) is labelled with the final letter of $w_j$, then
the next arc must be labelled with the initial letter of some possibly
different word $\sigma_{j'}$; and ({\em iv}) $\zeta_n$ must be
labelled with the final letter of some word. To do this we simply
write:
\begin{align*}
& \neg C(t_{j,i}, s_3) & & (\mbox{if } i \neq 1)\\
& \neg C(a_k \cdot t_{j,i}, a_{\lfloor k+1 \rfloor} \cdot t_{j',i'}) & &
\begin{array}{l}
\text{($i < u(j)$ and either}\\
\text{$j' \neq j$ or $i' \neq i+1$)}
\end{array}\\
& \neg C(a_k \cdot t_{j,u(j)}, a_{\lfloor k+1 \rfloor} \cdot t_{j',i'}) & &
(\mbox{if } i' \neq 1)\\
& \neg C(t_{j,i}, z^*) & & (\mbox{if } i \neq u(j)),
\end{align*}
where $1 \leq j, j' \leq \ell$, $1 \leq i \leq u(j)$ and $1\leq i'
\leq u(j')$.

Thus, within each block $E_h$, the labels read $t'_{j,1}, \ldots,
t'_{j,u'(j)}$, for some fixed $j$; we write $j(h)$ to denote the
common subscript $j$.  The sequence of indices $j(1), \ldots, j(m)$
corresponding to the successive blocks thus defines a word $\tau =
t_{j(1)}, \ldots t_{j(m)} \in T^*$.

Using corresponding formulas, we label the arcs $\zeta'_i$ ($1 \leq i
\leq n$) with the alphabet $\set{t'_{j,k} \mid 1 \leq j \leq \ell, 1
  \leq k \leq u'(j)}$, so that, in any satisfying assignment over
$\RC(\R^2)$, every arc $\zeta'_i$ ($1 \leq i \leq n$) is labelled with
exactly one of the regions $t'_{j,k}$. Further, we can ensure that
these labels are organized into (say) $m'$ contiguous blocks, $E'_1,
\ldots, E'_{m'}$ such that in the $h$th block, $E_h'$, the sequence of
labels reads $t'_{j,1}, \ldots, t'_{j,u'(j)}$, for some fixed
$j$. Again, writing $j'(h)$ for the common value of $j$, the sequence
of of indices $j'(1), \ldots, j'(m')$ corresponding to the successive
blocks defines a word $\tau' = t_{j'(1)}, \ldots t_{j'(m')} \in T^*$.

\bigskip

\noindent
\textbf{Stage 5.}  The basic job of the foregoing stages was to define
the words $\sigma \in \set{0,1}^*$ and $\tau, \tau' \in T^*$.  In this
stage, we enforce the equations $\sigma = \fw_1(\tau)$, $\sigma =
\fw_2(\tau')$ and $\tau = \tau'$. That is: the PCP-instance $\fW =
(\set{0,1}, T, \fw_1, \fw_2)$ is positive.

We first add the constraints
\begin{align*}
& \neg C(l_h, t_{j,k}) \qquad \text{the $k$'th letter of $\sigma_j$ is not $h$}\\
& \neg C(l_h, t'_{j,k}) \qquad \text{the $k$'th letter of $\sigma'_j$ is not $h$}.
\end{align*}
Since $\eta_i$ is in contact with $\zeta_i$ for all $i$ ($1 \leq i \leq
n$), the string $\sigma \in \set{0,1}^*$ defined by the arcs $\eta_i$
must be identical to the string $\sigma_{j(1)} \cdots
\sigma_{j(m)}$. But this is just to say that $\sigma = \fw_1(\tau)$.
The equation $\fw_2(\tau') = \sigma$ may be secured similarly.

It remains only to show that $\tau = \tau'$.  That is, we must show
that $m = m'$ and that, for all $h$ ($1 \leq h \leq m$), $j(h) =
j'(h)$. The techniques required have in fact already been encountered
in Stage~3.  We first introduce a new pair of variables, $f_0$, $f_1$,
which we refer to as `block colours', and with which we label the arcs
$\zeta_i$ in the fashion of Lemma~\ref{lma:labelling}, using the
constraints:
\begin{align*}
& (a_0 + a_1 + a_2) \leq (f_0 + f_1)\\
& \neg C(f_0 \cdot a_i, f_1 \cdot a_i), & & (0 \leq i <3).
\end{align*}
We force all arcs in each block $E_j$ to have a uniform block colour,
and we force the block colours to alternate by writing, for $0 \leq h <2$,
$1 \leq j,j' \leq \ell$, $1 \leq k < u(j)$ and $0\leq i<3$:
\begin{align*}
& \neg C(f_h \cdot t_{j,k}, f_{\md{h+1}}\cdot t_{j,k+1}), 
\\& \neg C(f_h \cdot t_{j,u(j)}\cdot a_i, f_{h}\cdot t'_{j',1}\cdot a_{\md{i+1}}) 
\end{align*}
Thus, we may speak unambiguously of the colour ($f_0$ or $f_1$) of a
block: if $E_1$ is coloured $f_0$, then $E_2$ will be coloured $f_1$,
$E_3$ coloured $f_0$, and so on.  Using the the {\em same} variables
$f_0$ and $f_1$, we similarly establish a block structure $E'_1,
\ldots, E'_{m'}$ on the arcs $\eta'_i$. (Note that there is no need
for primed versions of $f_0$ and $f_1$.)

Now we can match up the blocks in a 1--1
fashion just as we matched up the individual arcs. Let $\tseq{g}_0$,
$\tseq{g}_1$, $\tseq{g}'_0$ and $\tseq{g}'_1$ be new 3-regions
variables.  We may assume that every arc $\zeta_i$ contains some point
of $\intermediate{b}_{\md{i},1}$. We wish to connect any such arc that
starts a block $E_h$ (i.e. any $\zeta_i$ labelled by $t_{j,1}$ for
some $j$) to the top edge of the upper window, with the connecting arc
depending on the block colour.  Setting $w_k = -(f_k \cdot \sum_{i =
  1}^{i=\ell} t_{j,1})$ ($0 \leq k < 2$), we can do this using the
constraints:
\begin{align*}
& \stack_{w_k}(\tseq{b}_{i,1}, \tseq{g}_k, \tseq{a}) & & (1 \leq k < 2, 
0 \leq i < 3).
\end{align*}
Specifically, the first arc in each block $E_h$ ($1 \leq h \leq m$) is
connected by an arc $\theta_h\tilde{\theta}_h$ to some point on the
upper edge of the upper window, where $\theta_h \subseteq b_{i,1} + g_i$
and $\tilde{\theta}_h \subseteq a$.  Similarly, setting
$w'_k = -(f_k \cdot \sum_{i = 1}^{i=\ell} t'_{j,1})$ ($0 \leq k < 2$), 
the constraints
\begin{align*}
& \stack_{w_k'}(\tseq{b}'_{i,1}, \tseq{g}'_k, \tseq{b}) & & (1 \leq k < 2, 
0 \leq i < 3)
\end{align*}
ensure that the first arc in each block $E_{h'}$ ($1 \leq h' \leq
m'$) is connected by an arc $\theta'_{h'}\tilde{\theta}'_{h'}$ to some
point on the bottom edge of the lower window, where $\theta_{h'}
\subseteq b'_{i,1} + g'_i$ and $\tilde{\theta}'_{h'} \subseteq b$.
Furthermore, from the arrangement of the $\zeta_i$,
$\zeta'_i$ and $\zeta^*$ (Fig.~\ref{fig:arcZeta}) we can easily write
non-contact constraints
forcing each $\theta_h$ to intersect one of the arcs $\zeta'_i$ ($1
\leq i \leq n$), and each $\theta'_h$ to intersect one of the arcs
$\zeta_{i'}$ ($1 \leq i' \leq n$).

We now write the constraints
\begin{align*}
& \neg C(g_k, f_{1-k}) \wedge  \neg C(g'_k, f_{1-k})  & & (0 \leq k < 2).
\end{align*}
Thus, any $\theta_h$ included in $g_k$ must join some arc $\zeta_i$ in
a block with colour $f_k$ to some arc $\zeta'_{i'}$ also in a block
with colour $f_k$; and similarly for the $\theta'_h$.
Adding
\begin{align*}
& \neg C(g_0 + g'_0, g_1 + g'_1)
\end{align*}
then ensures, via reasoning exactly similar to that employed in
Stage~3, that $\theta_1$ connects the block $E_1$ to the block $E'_1$,
$\theta_2$ connects $E_2$ to $E'_2$, and so on; and similarly for the
$\theta'_h$ (as shown, schematically, in
Fig.~\ref{fig:blockCorrespondence}).  Thus, we have a 1--1
correspondence between the two sets of blocks, whence $m = m'$.
\begin{figure}
\resizebox{9cm}{!}{\input{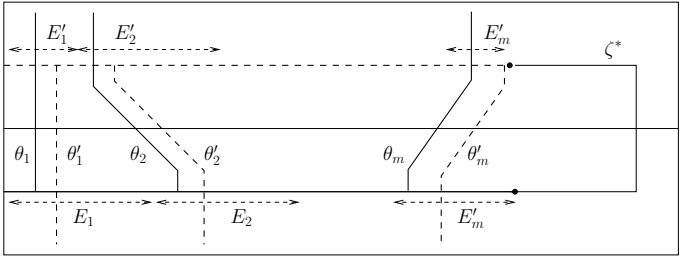}}
\caption{The 1--1 correspondence between the $E_h$ and the
  $E'_h$ established by the $\theta_i$ and the $\theta'_i$.}
\label{fig:blockCorrespondence}
\end{figure}

Finally, we let $d_1, \ldots, d_\ell$ be
new regions variables labelling the components of $g_0$ and of $g_1$,
and hence the arcs $\theta_1, \ldots, \theta_m$:
\begin{align*}
& g_i \leq \sum_{1 \leq j \leq \ell} d_j
\wedge \bigwedge_{1 \leq j \leq \ell} C(d_j \cdot g_i, (-d_j) \cdot g_i)
\end{align*}
for $0 \leq i < 2$. Adding the constraints
\begin{align*}
& \neg C(p_{j,k}, d_{j'}) & & (j \neq j')\\
& \neg C(p'_{j,k}, d_{j'}) & & (j \neq j')
\end{align*}
where $1 \leq j \leq \ell$, $1 \leq k \leq u(j)$ and $1 \leq j' \leq
\ell$, instantly ensures that the sequences of tile indices $j(1),
\ldots, j(m)$ and $j'(1), \ldots, j'(m)$ are identical. In other words,
$\tau = \tau'$.  This completes the proof that $\fW$ is a positive
instance of the PCP.

%
%
%
We have established the r.e.-hardness of $\Sat(\cBCc,\RC(\R^2))$ and
$\Sat(\cBCc,\RCP(\R^2))$. We must now extend these results to the
other languages considered here. We deal with the languages $\cBCci$
and $\cBc$ as in Sec.~\ref{sec:sensitivity}. Let $\ti{\phi}_\fW$ be
the $\cBCci$ formula obtained by replacing all of occurrences of $c$
in $\phi_\fW$ with $\ic$. Since all occurrences of $c$ in $\phi_\fW$
are positive, $\ti{\phi}_\fW$ entails $\phi_\fW$. On the other
hand, the connected regions satisfying $\phi_\fW$ are also
interior-connected, and thus satisfy $\ti{\phi}_\fW$ as well.

For the language $\cBc$, observe that, as in Sec.~\ref{sec:sensitivity},  
all conjuncts of $\phi_\fW$ featuring the predicate $C$ are {\em negative}. 
(Remember that there are additional such literals
implicit in the use of 3-region variables; but let us ignore these for
the moment.) Recall from Sec.~\ref{sec:sensitivityA} that
\begin{align*}
	\phi_{\lnot C}^c(r,s,r',s'):=c(r+r')\land c(s+s') \hspace{2cm}
		\\\hfill\land \lnot c((r+r')+(s+s')),
\end{align*}
and consider the effect of replacing any literal $\neg C(r,s)$
from~\eqref{eq:pcp:C} with the $\cBc$-formula $\phi_{\lnot
  C}^c(r,s,r',s')$ where $r'$ and $s'$ are fresh variables, and let
the formula obtained be $\psi$. It is easy to see that $\psi$ entails
$\phi_\fW$; hence if $\psi$ is satisfiable, then $\fW$ is a positive
instance of the PCP.  To see that $\psi$ is satisfiable, consider the
satisfying tuple of $\phi_\fW$. Note that if $\tseq{r}$ and $\tseq{s}$
are 3-regions whose outer-most elements $r$ and $s$ are disjoint (for
example: $\tseq{r} = \tseq{a}_{0,1}$, $\tseq{s} = \tseq{a}_{0,3}$),
then $r$ and $s$ have finitely many connected components and have
connected complements. Hence, it is easy to find $r'$ and $s'$ in
$\RCP(\R^2)$ satisfying the corresponding formula $\phi_{\lnot
  C}^c(r,s,r',s')$.  Fig.~\ref{fig:connectingRsAndSs} represents the
situation in full generality. (As usual, we assume a spherical
canvas, with the point at infinity not lying on the
boundary of any of the depicted regions.)  We may therefore assume,
that all such literals involving $C$ have been eliminated from
$\phi_\fW$.

\begin{figure}[h]
	\begin{center}
	\begin{tikzpicture}
		\newcommand{\connectingRsAndSs}[1]
		{
			\draw[fill=black!20] (0,-.3) rectangle (0.5,.3) node[midway]{$#1$};
			\draw (0.5,-.25) rectangle (1,.25) node[midway]{$#1'$};
			\draw[fill=black!20] (1,-.3) rectangle (1.5,.3) node[midway]{$#1$};
			\draw (1.75,-.25) --++ (-.25,0) --++(0,.5)--++(.25,0);
			\node at (2,0){\scriptsize$\ldots$};
			\draw (2.25,-.25) --++ (.25,0) --++(0,.5)--++(-.25,0);
			\draw[fill=black!20] (2.5,-.3) rectangle (3,.3) node[midway]{$#1$};
			\draw (3,-.25) rectangle (3.5,.25) node[midway]{$#1'$};
			\draw[fill=black!20] (3.5,-.3) rectangle (4,.3) node[midway]{$#1$};
		}		
		
		\connectingRsAndSs{r}		
		\begin{scope}[yshift=-1cm]
			\connectingRsAndSs{s}
		\end{scope}
	\end{tikzpicture}
	\end{center}
	\caption{Satisfying $\phi_{\lnot C}^c(r,s,r',s')$}
	\label{fig:connectingRsAndSs}
\end{figure}

We are not quite done, however. We must show that we can replace the
{\em implicit} non-contact constraints that come with the use of
3-region variables by suitable $\cBc$-formulas.  For example, a
3-region variable $\tseq{r}$ involves the implicit constraints
$\neg C(\inner{r}, -\intermediate{r})$ and $\neg
C(\intermediate{r}, -r)$. Since the two conjuncts are identical in form,
we only show how to deal with $\neg C(\intermediate{r}, -r)$.
Because the complement of $-r$ is  in general not connected, a direct 
use of $\phi_{\lnot C}^c$ will result in a formula which is not satisfiable. 
Instead, we represent $-r$ as the sum of two regions $s_1$ and $s_2$
with connected complements, and then proceed as before. In particular, we replace 
$\neg C(\intermediate{r}, -r)$ by:
\begin{eqnarray*}
	-r=s_1+s_2\land  \phi_{\lnot C}^c(\intermediate r,s_1,r_1,s_1) 
	\land \phi_{\lnot C}^c(\intermediate r,s_2,r_2,s_2).
\end{eqnarray*}
For $i=1,2$, $\intermediate r+r_i$ is a connected region that is disjoint from
$s_i$. So, $\intermediate r$ is disjoint from $s_1$ and $s_2$, and hence 
disjoint from their sum $-r:=s_1+s_2$. Fig~\ref{fig:crenellate} shows regions 
$s_i,r_i$, for $i=1,2$, which satisfy the above formula.
\begin{figure}[h]
	\begin{center}
	\scriptsize{
	\subfloat[The region $-r$ is the sum of $s_1$ and $s_2$.]{
		\begin{tikzpicture}
			\coordinate (P) at (0,0);
			\fill[black!25] (-1,1) rectangle (4.5,0);		
			\node at (-.5,.5) {$s_1$};
			\fill[black!25] (-1,-1) rectangle (4.5,0);
			\node at (-.5,-.5) {$s_2$};
			\draw (-1,0)--(4.5,0);
			\foreach \x in {0,1,2,3}
			{
				\ifnum \x=2					
				\else
					\draw[fill=white] ($(P)+(0,-.4)$) rectangle ($(P)+(.8,.4)$);
					\draw[fill=gray!20] ($(P)+(0.2,-.2)$) rectangle ($(P)+(.6,.2)$) 
						node[midway]{$\intermediate{r}$};
				\fi
				\coordinate (P) at ($(P)+(1,0)$);
			}
		\end{tikzpicture}
	}
	
	\subfloat[The mutually disjoint connected regions $\intermediate r+r_2$ and $s_2$.]
	{
		\begin{tikzpicture}
			\coordinate (P) at (0,0);
			\fill[black!0] (-1,1) rectangle (4.5,0);
			\fill[black!25] (-1,-1) rectangle (4.5,0);
			\node at (-.5,-.5) {$s_2$};
			\draw (-1,0)--(4.5,0);
			\foreach \x in {0,1,2,3}
			{
				\ifnum \x=2					
				\else
					\draw[fill=white] ($(P)+(0,.2pt)$)--($(P)+(0,-.4)$) --($(P)+(.8,-.4)$) -- ($(P)+(0,.2pt)+(.8,0)$);
					\draw[fill=gray!20] ($(P)+(0.2,-.2)$) rectangle ($(P)+(.6,.2)$)
						node[midway]{$\intermediate{r}$};
				\fi
				\coordinate (P) at ($(P)+(1,0)$);
			}			
			\fill[fill=gray!20] (.3,.2)--++(0,.4)--++(1.8,0)--++(0,-.3)--(1.5,.3)--++(0,-.1)--++(-.2,0)--++(0,.1)--(.5,.3)--++(0,-.1)--++(-.2,0)--++(0,.1);
			\draw (.3,.2)--++(0,.4)--++(1.8,0);
			\draw (2.1,.3)--(1.5,.3)--++(0,-.1)--++(-.2,0)--++(0,.1)--(.5,.3)--++(0,-.1)--++(-.2,0)--++(0,.1);		
			\draw[fill=gray!20,draw=black!95] (2.8,.6)--++(0.7,0)--++(0,-.4)--++(-.2,0)--++(0,.1)--(2.8,.3);
			\draw[draw=black!95] (2.5,.45) node {$r_2$};
		\end{tikzpicture}
	}
	
	\subfloat[The mutually disjoint connected regions $\intermediate r+r_1$ and $s_1$.]
	{
		\begin{tikzpicture}
			\coordinate (P) at (0,0);
			\fill[black!25] (-1,1) rectangle (4.5,0);		
			\node at (-.5,.5) {$s_1$};
			\fill[black!0] (-1,-1) rectangle (4.5,0);			
			\draw (-1,0)--(4.5,0);
			\foreach \x in {0,1,2,3}
			{
				\ifnum \x=2					
				\else
					\draw[fill=white] ($(P)-(0,.2pt)$)--($(P)+(0,.4)$) --($(P)+(.8,.4)$) -- ($(P)-(0,.2pt)+(.8,0)$);
					\draw[fill=gray!20] ($(P)+(0.2,-.2)$) rectangle ($(P)+(.6,.2)$) 
						node[midway]{$\intermediate{r}$};
				\fi
				\coordinate (P) at ($(P)+(1,0)$);
			}
			
			\fill[fill=gray!20] (.3,-.2)--++(0,-.4)--++(1.8,0)--++(0,.3)--(1.5,-.3)--++(0,.1)--++(-.2,0)--++(0,-.1)--(.5,-.3)--++(0,.1)--++(-.2,0)--++(0,-.1);
			\draw (.3,-.2)--++(0,-.4)--++(1.8,0);
			\draw (2.1,-.3)--(1.5,-.3)--++(0,.1)--++(-.2,0)--++(0,-.1)--(.5,-.3)--++(0,.1)--++(-.2,0)--++(0,-.1);		
			\draw[fill=gray!20,draw=black!95] (2.8,-.6)--++(0.7,0)--++(0,.4)--++(-.2,0)--++(0,-.1)--(2.8,-.3);
			\draw[draw=black!95] (2.5,-.45) node {$r_1$};	
		\end{tikzpicture}
	}
	}
	\end{center}
	\caption{Eliminating the conjuncts of the form $\lnot C(-r,\intermediate{r})$.} 
	\label{fig:crenellate}
\end{figure}
Let $\psi_\fW$ be the result of replacing all the conjuncts (explicit
or implicit) containing the predicate $C$, as just described. We have
thus shown that, if $\psi_\fW$ is satisfiable over $\RC(\R^2)$, then
$\fW$ is positive, and that, if $\fW$ is positive, then $\psi_\fW$ is
satisfiable over $\RCP(\R^2)$. This completes the proof.

The final case we must deal with is that of $\cBci$. We use the
r.e.-hardness results already established for $\cBCci$, and proceed,
as before, to eliminate occurrences of $C$. Since all the polygons in 
the tuple satisfying $\ti{\phi}_\fW$ are quasi-bounded, we can eliminate 
all occurrences of $C$ from $\ti{\phi}_\fW$ using 
Lemma~\ref{lma:Cci2BciStar}~(\emph{iii}). This completes the proof of 
Theorem~\ref{theo:undecidable}.


\section{$\cBci$ in 3D}\label{sec:Bci3D_C}

\newcommand{\ConRC}{{\sf ConRC}}

Denote by $\ConRC$ the class of all connected topological spaces with regular closed regions. As shown in ~\cite{ijcai:kphz10}, every $\cBci$-formula satisfiable over $\ConRC$ can be satisfied in a finite connected quasi-saw model and the problem $\Sat(\cBci,\ConRC)$ is \NP-complete.

\begin{theorem}
The problems $\Sat(\cBci,\RC(\R^n))$, $n \geq 3$, coincide with $\Sat(\cBci,\ConRC)$, and so are all \NP-complete.
\end{theorem}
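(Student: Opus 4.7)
The plan is to reduce the theorem to two things already in hand: (a) the result of~\cite{ijcai:kphz10} that $\Sat(\cBci,\ConRC)$ is \NP-complete and coincides with satisfiability over finite connected quasi-saws, and (b) Theorem~\ref{theo:BciRCR3}, whose proof gives a quasi-saw-to-$\RC(\R^3)$ construction. Hence the \emph{only} genuinely new step is to lift the three-dimensional construction to arbitrary $n\ge 3$.

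For the easy inclusion $\Sat(\cBci,\RC(\R^n))\subseteq\Sat(\cBci,\ConRC)$, I would simply note that $\R^n$ is connected, so any interpretation over $\RC(\R^n)$ is already an interpretation over a connected topological space, placing it in $\ConRC$.

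For the reverse inclusion, the plan is as follows. Suppose $\phi\in\Sat(\cBci,\ConRC)$. By the result cited at the beginning of Sec.~\ref{sec:3d}, $\phi$ is satisfied by an interpretation over a finite connected quasi-saw $(W_0\cup W_1,R)$. The construction in the proof of Theorem~\ref{theo:BciRCR3} yields an interpretation $\mathfrak I_3$ of $\phi$ over $\RC(\R^3)$. To pass to $\R^n$ for $n>3$, I would \emph{cylindrify}: define $\mathfrak I_n$ by setting $r^{\mathfrak I_n} = r^{\mathfrak I_3}\times\R^{n-3}\subseteq\R^n$. One checks that cylindrification is a Boolean algebra homomorphism $\RC(\R^3)\to\RC(\R^n)$ preserving $0$, $1$, $+$, $\cdot$ and $-$, because $\ti{(X\times\R^{n-3})}=\ti X\times\R^{n-3}$ and the closure operation similarly commutes with the product with $\R^{n-3}$; in particular $X\times\R^{n-3}$ is regular closed whenever $X$ is. Furthermore, since $\R^{n-3}$ is connected, $\ti X\times\R^{n-3}$ is connected iff $\ti X$ is (projection gives one direction, product-of-connected-sets the other). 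Therefore $\mathfrak I_n\models\phi$, so $\phi\in\Sat(\cBci,\RC(\R^n))$.

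Putting the two directions together gives the equality of the problems, from which \NP-completeness is immediate by transferring both the upper and lower bounds from $\Sat(\cBci,\ConRC)$. There is no real obstacle: the geometric content of the argument is already carried by Theorem~\ref{theo:BciRCR3}, and the cylindrification step is a routine verification. The only point that requires mild care is confirming that the cylindrification of a regular closed set is regular closed and that the interior/closure operations genuinely commute with $(\cdot)\times\R^{n-3}$, but this follows directly from the product-topology description of $\R^n$.
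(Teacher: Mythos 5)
Your argument is sound, but it routes the dimension lift differently from the paper. The paper proves the hard direction by carrying out the balls-and-rods embedding of a finite connected quasi-saw \emph{directly} in $\R^n$ for every $n\ge 3$: the construction of the sets $B_x$, properties (A) and (B), and the Boolean embedding $f$ work verbatim in any dimension at least $3$. You instead take the $\R^3$ case as given by Theorem~\ref{theo:BciRCR3} and transfer it to $\R^n$ by cylindrification, $X\mapsto X\times\R^{n-3}$. That step is correct: interior and closure commute with taking the product with $\R^{n-3}$, so the map is an injective embedding of $\RC(\R^3)$ into $\RC(\R^n)$ preserving $0$, $1$, $+$, $\cdot$, $-$, and it preserves and reflects both equality and interior-connectedness (since $\ti{(X\times\R^{n-3})}=\ti{X}\times\R^{n-3}$ and $\R^{n-3}$ is connected); indeed the paper itself invokes cylindrification for formula~\eqref{eq:wiggly}. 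The one caveat is where the real work sits: in the paper's architecture this appendix theorem \emph{is} the detailed proof behind Theorem~\ref{theo:BciRCR3} --- the main-text proof defers both the normalisation supplying the point $z_0$ and the verification that $\ti{X}$ is connected iff $\ti{(f(X))}$ is connected to precisely this statement. So your proof is legitimate as a reduction to the stated 3D theorem together with the \NP-completeness of $\cBci$-satisfiability over connected quasi-saws from~\cite{ijcai:kphz10}, but a self-contained proof would still need to supply those 3D details (properties (A), (B) and the resulting description of $\ti{(f(X))}$ as the union of the $\ti{B}_x$ and the appropriate $D_z$); cylindrification only spares you from redoing the construction in higher dimensions, which the paper's direct approach shows costs nothing anyway. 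Your easy direction --- $\R^n$ is connected, so any interpretation over $\RC(\R^n)$ already witnesses satisfiability over connected spaces --- agrees with the paper's.
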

\begin{proof}
It suffices to show that every $\cBci$-formula $\varphi$ satisfiable over connected quasi-saws can also be satisfied over any of $\RC(\R^n)$, for $n \geq 3$.
So suppose that $\varphi$ is satisfied in a model $\mathfrak{A}$ based on a finite connected quasi-saw $(W,R)$. Denote by $W_i$ the set of points of depth $i$ in $(W,R)$, for $i=0,1$.
Without loss of generality we may assume that there exists a point $z_0\in W_1$ with $z_0 R x$ for all $x\in W_0$. Indeed, if this is not the case, take the interpretation $\mathfrak B$ obtained by extending $\mathfrak A$ with such a point $z_0$ and setting $z_0 \in r^{\mathfrak B}$ iff $x \in r^{\mathfrak A}$ for some $x \in W_0$. Clearly, we have $\mathfrak A \models (\tau = \tau')$ iff $\mathfrak B \models (\tau = \tau')$, for any terms $\tau$, $\tau'$. To see that $\mathfrak A \models \ic(\tau)$ iff $\mathfrak B \models \ic(\tau)$, recall that $(W,R)$ is connected, and so $\ti{\tau}$ is disconnected in $\mathfrak A$ iff there are two distinct points $x,y \in \tau^{\mathfrak A} \cap W_0$ connected by at least one path in $(W,R)$ and such that no such path lies entirely in $\ti{(\tau^{\mathfrak A})}$. It follows that if $\ti{(\tau^{\mathfrak A})}$ is disconnected then $W_0 \setminus \tau^{\mathfrak A} \neq \emptyset$, and so $z_0 \notin \ti{(\tau^{\mathfrak B})}$.  Thus, by adding $z_0$ to $(W,R)$ we cannot make a disconnected open set in $\mathfrak A$ connected in $\mathfrak B$.

We show now how $\mathfrak{A}$ can be embedded into $\R^n$, for any $n\ge 3$.
First we take pairwise disjoint \emph{closed} balls $B^1_x$ for all $x\in W_0$. We also select pairwise disjoint \emph{open} balls $D_z$ for
$z\in W_1\setminus\{z_0\}$, which are disjoint from all of the $B^1_x$, and take $D_{z_0}$ to be the complement of
\begin{equation*}
\bigcup_{x\in W_0} \ti{(B^1_x)} \ \ \cup \bigcup_{z\in W_1\setminus\{z_0\}} D_z.
\end{equation*}
(Note that $\ti{D_z}$ is connected for each $z\in W_1$; all $D_z$, for $z\in W_1\setminus\{z_0\}$, are open, while $D_{z_0}$ is closed).
We then expand every $B^1_x$ to a set $B_x$ in such a way that the following two properties are satisfied:
\begin{itemize}
\item[(A)] the $B_x$, for $x\in W_0$, form a \emph{connected partition in $\RC(\R^n)$} in the sense that  the $B_x$ are regular closed sets in $\R^n$, whose interiors are non-empty, connected and  pairwise disjoint, and which sum up to the entire space;

\item[(B)] every point in $D_z$, $z\in W_1$, is either

\begin{itemize}
\item[--]
in the interior of some $B_x$ with $zRx$, or

\item[--] on the boundary of \emph{all} of the $B_x$ for which $zRx$.
\end{itemize}
\end{itemize}
The required sets $B_x$ are constructed as follows.
Let $q_1,
q_2, \ldots$ be an enumeration of all the points in $\bigcup_{z\in W_1} \ti{D_z}$ with \emph{rational} coordinates.
For $x\in W_0$, we set $B_x$ to be the closure of the infinite union $\bigcup_{k \in \omega} \ti{(B_x^k)}$, where the regular closed sets $B_x^k$ are defined inductively as follows (see Fig.~\ref{fig:apollonian-app}):
\begin{itemize}
\item[--] Assuming that the $B^k_x$ are already defined, let $q_i$ be the first point in the list $q_1,
q_2, \ldots$ such that $q_i\notin B^k_x$, for all $x\in W_0$. Suppose $q_i \in \ti{D_z}$ for $z\in W_1$. Take an open ball $C_{q_i} \subsetneqq \ti{D_z}$ of radius $< 1/k$ centred in $q_i$ and disjoint from the $B^k_x$. For each $x\in W_0$ with $zRx$, expand $B_x^k$ by a closed ball in $C_{q_i}$ and a closed rod connecting it to $B_x^1$ in such a way that the ball and the rod are disjoint from the rest of the $B^k_x$. The resulting set is denoted by $B^{k+1}_x$.
\end{itemize}
\begin{figure}[h]
\begin{center}
\begin{tikzpicture}[
clball/.style={circle,draw=black,minimum size=3mm,inner sep=0pt},
opball/.style={circle,dashed,draw=black,minimum size=30mm,inner sep=0pt}
]
\node [label=above:{\small $B_{x_1}$}] (x1) at (0,-1)[clball,fill=Gray!20] {};
\node [label=above:{\small $B_{x_2}$}] (x2) at (3,0)[clball,fill=Gray] {};
\node [label=above:{\small $B_{x_3}$}] (x3) at (6,-1)[clball,fill=Gray!50] {};
%
\node [label=below:{\small $D_{z_1}$}] (z1) at (3,-3)[opball] {};
\node [label=below:{\small $C_q$}](c) at (2.5,-2.5)[opball,minimum size=15mm] {};
\node [label=below:{\small $q$}](q) at (2.5,-2.5)[circle,inner sep=0pt,minimum size=1,draw=black] {};
\node (xc1) at (2.1,-2.7)[clball,fill=Gray!20,minimum size=5mm] {};
\node (xc2) at (2.5,-2.1)[clball,fill=Gray,minimum size=5mm] {};
\node (xc3) at (2.9,-2.7)[clball,fill=Gray!50,minimum size=5mm] {};
\draw[double=Gray!20,double distance=2pt] (xc1) to [bend left, looseness=0.5] (x1);
\draw[double=Gray,double distance=2pt] (xc2) to [bend left, looseness=0.5] (x2);
\draw[double=Gray!50,double distance=2pt] (xc3) to [bend right, looseness=0.5] (x3);
\node (cp) at (3,-2.2)[opball,minimum size=5mm] {};
\node (xcp1) at (2.9,-2.3)[clball,fill=Gray!20,minimum size=1.5mm] {};
\node (xcp2) at (3,-2.1)[clball,fill=Gray,minimum size=1.5mm] {};
\node (xcp3) at (3.1,-2.3)[clball,fill=Gray!50,minimum size=1.5mm] {};
\draw[double=Gray!20,double distance=1pt] (xcp1) to [bend right, looseness=0.9] (x1);
\draw[double=Gray,double distance=1pt] (xcp2) to [bend right, looseness=0.5] (x2);
\draw[double=Gray!50,double distance=1pt] (xcp3) to [bend left, looseness=0.5] (x3);
\end{tikzpicture}
\end{center}
\caption{The first two stages of filling $D_{z_1}$ with $B_{x_i}$, for $z_1 R x_i$, $i = 1,2,3$. (In $\R^3$, the sets $B_{x_1}$ and $B_{x_2}$ would not intersect.)}\label{fig:apollonian-app}
\end{figure}
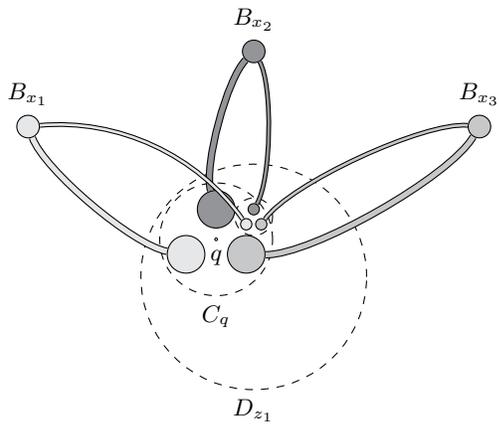

Let $\RC(W,R)$ be the Boolean algebra of regular closed sets in $(W,R)$ and let $\RC(\R^n)$ be the Boolean algebra of regular closed sets in $\R^n$.
Define a map $f$ from $\RC(W,R)$ to $\RC(\R^n)$ by taking
\begin{equation*}
f(X) ~=~ \bigcup_{x\in X \cap W_0} B_x,\quad \text{ for $X \in \RC(W,R)$}.
\end{equation*}
By~(A), $f$ is an isomorphic embedding of $\RC(W,R)$ into $\RC(\R^n)$, that is, $f$ preserves
the operations $+$, $\cdot$ and $-$ and the constants $0$ and $1$.
Define an interpretation $\mathfrak{I}$ over $\RC(\R^n)$ by taking $r^\mathfrak{I} = f(r^\mathfrak{A})$. To show that $\mathfrak I \models \varphi$, it remains to prove that, for every $X \in \RC(W,R)$, $\ti{X}$ is connected if, and only if, $\ti{(f(X))}$ is connected. This equivalence follows from the fact that
\begin{equation*}
\ti{(f(X))} ~=~ \bigcup_{x\in X \cap W_0} \ti{B}_x \quad\cup \bigcup_{z\in X\cap W_1,\ V_z\subseteq X} D_z,
\end{equation*}
where $V_z\subseteq W_0$ is the set of all $R$-successors of $z$ of depth 0, which in turn is an immediate consequence of (B).
\end{proof}
\end{document}